\def\Submission{1}
\def\input@path{{styles/}{../styles/}}
    \newcommand{\InSubm}[1]{}
    \newcommand{\InNotSubm}[1]{#1}
    \newcommand{\InSubm}[1]{#1}
    \newcommand{\InNotSubm}[1]{}
    \providecommand{\SoCGVer}[1]{}%
    \providecommand{\NotSoCGVer}[1]{#1}%
    \providecommand{\RegVer}[1]{#1}%
    \providecommand{\SoCGVer}[1]{#1}%
    \providecommand{\NotSoCGVer}[1]{}%
    \providecommand{\RegVer}[1]{}%
   \def\UseBibLatex{1}%
\newcommand{\SarielComp}[1]{}
\newcommand{\NotSarielComp}[1]{#1}%
\newcommand{\SarielComp}[1]{#1}%
\newcommand{\NotSarielComp}[1]{}%
\newcommand{\remove}[1]{}%
   \theoremstyle{plain}%
   \newtheorem{theorem}{Theorem}[section]
   \newtheorem{lemma}[theorem]{Lemma}
   \theoremstyle{plain}%
   \newtheorem*{remark:unnumbered}[theorem]{Remark}%
   \newtheorem{remark}[theorem]{Remark}%
   \newtheorem{definition}[theorem]{Definition}
   \newtheorem{example}[theorem]{Example}
\newtheorem{defn}[theorem]{Definition}
\newcommand{\myqedsymbol}{\rule{2mm}{2mm}}
   \theoremstyle{nonumberplain}%
   \newtheorem{proof}{Proof:}%
\definecolor{nalmostblack}{rgb}{0, 0, 0.7}
\providecommand{\emphic}[2]{%
   \textcolor{nalmostblack}{%
      \textbf{\emph{#1}}}%
   \index{#2}}
\providecommand{\emphi}[1]{\emphic{#1}{#1}}
\definecolor{almostblack}{rgb}{0, 0, 0.5}
\providecommand{\emphw}[1]{{\emph{{\textcolor{almostblack}{#1}}}}}%
\numberwithin{figure}{section}%
\numberwithin{table}{section}%
\numberwithin{equation}{section}%
\newcommand{\atgen}{\symbol{'100}}
\newcommand{\SarielThanks}[1]{\thanks{Department of Computer Science;
      University of Illinois; 201 N. Goodwin Avenue; Urbana, IL,
      61801, USA; {\tt sariel\atgen{}illinois.edu}; {\tt
         \url{http://sarielhp.org/}.} #1}}
\newcommand{\HLinkY}[2]{\hyperref[#2]{#1}}
\newcommand{\HLink}[2]{\hyperref[#2]{#1~\ref*{#2}}}
\newcommand{\HLinkSuffix}[3]{\hyperref[#2]{#1\ref*{#2}{#3}}}
\newcommand{\seclab}[1]{\label{sec:#1}}
\newcommand{\secref}[1]{\HLink{Section}{sec:#1}}
\newcommand{\apndlab}[1]{\label{apnd:#1}}
\newcommand{\lemlab}[1]{\label{lemma:#1}}
\newcommand{\lemref}[1]{{\expandafter\HLink{Lemma}{lemma:#1}}}%
\newcommand{\Xlemref}[1]{\noexpand{\noexpand\HLink{Lemma}{lemma:#1}}}%
\newcommand{\figlab}[1]{\label{fig:#1}}
\newcommand{\figref}[1]{\HLink{Figure}{fig:#1}}
\newcommand{\thmlab}[1]{{\label{theo:#1}}}
\newcommand{\thmref}[1]{\noexpand\HLink{Theorem}{theo:#1}}
\newcommand{\thmrefY}[2]{\HLinkY{#2}{theo:#1}}
\providecommand{\deflab}[1]{\label{def:#1}}
\newcommand{\defref}[1]{\HLink{Definition}{def:#1}}
\newcommand{\defrefY}[2]{\hyperref[def:#2]{#1}}
\providecommand{\eqlab}[1]{}%
\renewcommand{\eqlab}[1]{\label{equation:#1}}
\newcommand{\Eqref}[1]{\HLinkSuffix{Eq.~(}{equation:#1}{)}}
\newcommand{\Set}[2]{\left\{ #1 \;\middle\vert\; #2 \right\}}
\newcommand{\pth}[2][\!]{\mleft({#2}\mright)}%
\newcommand{\pbrcx}[1]{\left[ {#1} \right]}%
\newcommand{\ExChar}{\mathbb{E}}%
\newcommand{\ExSym}{\mathop{\ExChar}}%
\newcommand{\Ex}[2][\!]{\ExSym#1\pbrcx{#2}}
\newcommand{\ProbLTR}{\mathbb{P}}%
\newcommand{\Prob}[1]{\mathop{\ProbLTR} \mleft[ #1 \mright]}%
\newcommand{\ceil}[1]{\left\lceil {#1} \right\rceil}
\newcommand{\floor}[1]{\left\lfloor {#1} \right\rfloor}
\newcommand{\brc}[1]{\left\{ {#1} \right\}}
\newcommand{\cardin}[1]{\left| {#1} \right|}%
\newcommand{\cardinT}[1]{\left| {#1} \right|_T^{}}%
\renewcommand{\th}{th\xspace}
\renewcommand{\Re}{\mathbb{R}}%
\newlist{compactenumA}{enumerate}{5}%
\setlist[compactenumA]{topsep=0pt,itemsep=-1ex,partopsep=1ex,parsep=1ex,%
   label=(\Alph*)}%
\newlist{compactenuma}{enumerate}{5}%
\setlist[compactenuma]{topsep=0pt,itemsep=-1ex,partopsep=1ex,parsep=1ex,%
   label=(\alph*)}%
\newlist{compactenumI}{enumerate}{5}%
\setlist[compactenumI]{topsep=0pt,itemsep=-1ex,partopsep=1ex,parsep=1ex,%
   label=(\Roman*)}%
\newlist{compactenumi}{enumerate}{5}%
\setlist[compactenumi]{topsep=0pt,itemsep=-1ex,partopsep=1ex,parsep=1ex,%
   label=(\roman*)}%
\newlist{compactitem}{itemize}{5}%
\setlist[compactitem]{label=\ensuremath{\bullet}}%
\setlist[compactitem]{topsep=0pt,itemsep=-1ex,partopsep=1ex,parsep=1ex,%
   label=\ensuremath{\bullet}}%
\providecommand{\IntRange}[1]{\mleft\llbracket #1 \mright\rrbracket}
\newcommand{\IRX}[1]{\IntRange{#1}}%
\newcommand{\Pairs}{\Mh{\Pi}}%
\newcommand{\nG}{\Mh{\mathsf{N}}}%
\newcommand{\pr}{\Mh{\xi}}
\newcommand{\nMoat}{\Mh{n}^{}_{\Moat}}
\newcommand{\CBoxes}{\Mh{\mathcal{B}}}%
\newcommand{\bx}{\Mh{\mathsf{r}}}%
\newcommand{\cenX}[1]{\overline{\mathsf{c}}\pth{#1}}
\newcommand{\dc}{\Mh{\Delta}}
\newlength{\savedparindent}
\providecommand{\Mh}[1]{#1}%
\newcommand{\eps}{\varepsilon}
\newcommand{\etal}{\textit{et~al.}\xspace}
\newcommand{\Term}[1]{\textsf{#1}}
\newcommand{\LCA}{\Term{LCA}\xspace}
\newcommand{\tldO}{\scalerel*{\widetilde{O}}{j^2}}%
\newcommand{\tldOmg}{\scalerel*{\widetilde{\Omega}}{j^2}}%
\newcommand{\FaroukThanks}[1]{%
   \thanks{Department of Computer Science; University of Illinois; 201
      N. Goodwin Avenue; %
      Urbana, IL, 61801, %
      USA; %
      {\tt eyharb2\atgen{}illinois.edu}; %
      {\tt \url{https://farouky.github.io/}.}%
      #1}%
}
\newcommand{\pa}{\Mh{p}}%
\newcommand{\pz}{\Mh{z}}%
\newcommand{\p}{\Mh{p}}
\newcommand{\q}{\Mh{q}}
\newcommand{\pu}{\Mh{u}}%
\newcommand{\dY}[2]{\left\| #1  #2 \right\|}%
\newcommand{\dTY}[2]{\left\| #1  #2 \right\|_T}%
\newcommand{\grid}{\Mh{\mathsf{K}}}%
\providecommand{\G}{\Mh{G}}%
\providecommand{\GA}{\Mh{H}}%
\providecommand{\GX}[1]{\Mh{G}\pth{#1}}%
\renewcommand{\G}{\Mh{G}}%
\providecommand{\P}{\Mh{P}}%
\renewcommand{\P}{\Mh{P}}%
\newcommand{\PA}{\Mh{Q}}%
\newcommand{\PB}{\Mh{U}}%
\newcommand{\EG}{\Mh{E}}%
\newcommand{\EGX}[1]{\Mh{E}\pth{#1}}%
\newcommand{\cpX}[1]{\Mh{\mathrm{c{}p}}\pth{#1}}%
\newcommand{\diamX}[1]{\mathrm{diam}\pth{#1}}%
\newcommand{\spread}{\Mh{\Phi}}
\newcommand{\spreadX}[1]{\spread\pth{#1}}
\newcommand{\WeightX}[1]{\Mh{\omega} \pth{#1}}
\newcommand{\wX}[1]{\WeightX{#1}}%
\newcommand{\SaveContent}[2]{%
   \expandafter\newcommand{#1}{#2}%
}
\DeclareFontFamily{U}{BOONDOX-calo}{\skewchar\font=45 }
\DeclareFontShape{U}{BOONDOX-calo}{m}{n}{<-> s*[1.05] BOONDOX-r-calo}{}
\DeclareFontShape{U}{BOONDOX-calo}{b}{n}{<-> s*[1.05] BOONDOX-b-calo}{}
\DeclareMathAlphabet{\mathcalb}{U}{BOONDOX-calo}{m}{n}
\SetMathAlphabet{\mathcalb}{bold}{U}{BOONDOX-calo}{b}{n}
\DeclareMathAlphabet{\mathbcalb}{U}{BOONDOX-calo}{b}{n}
\providecommand{\TPDF}[2]{\texorpdfstring{#1}{#2}}
\newcommand{\Family}{\Mh{\mathcal{F}}}%
\newcommand{\Net}{\Mh{N}}%
\providecommand{\BibLatexMode}[1]{}
\providecommand{\BibTexMode}[1]{#1}
  \renewcommand{\BibLatexMode}[1]{}
  \renewcommand{\BibTexMode}[1]{#1}
  \renewcommand{\BibLatexMode}[1]{#1}
  \renewcommand{\BibTexMode}[1]{}
\newcommand{\CHX}[1]{\Mh{\mathcal{C}}\pth{#1}}%
\newcommand{\VV}{\Mh{V}}%
\newcommand{\Edges}{\Mh{E}}%
\newcommand{\cell}{\Mh{\mathsf{C}}}%
\newcommand{\cellA}{\Mh{\mathsf{D}}}%
\newcommand{\ffrac}[2]{#1/#2}
\providecommand{\Mh}[1]{#1}
\newcommand{\origin}{\Mh{\mathsf{o}}}%
\newcommand{\cone}{\Mh{\mathcalb{c}}}%
\newcommand{\cen}{\Mh{z}}%
\newcommand{\EMST}{\Term{EMST}\xspace}
\newcommand{\VC}{\Term{VC}\xspace}
\newcommand{\Boruvka}{Bor\r{u}vka\xspace}
\newcommand{\hdelta}{\Mh{\varsigma}}
\newcommand{\DT}{\Mh{\mathcal{D}}}%
\newcommand{\DTX}[1]{\Mh{\mathcal{DT}}\pth{#1}}
\newcommand{\pencilX}[1]{\mathsf{pen}\pth{#1}}
\providecommand{\ring}{}
\renewcommand{\ring}{\Mh{\mathcalb{r}}}%
\newcommand{\Ring}{\Mh{\mathsf{O}}}%
\newcommand{\ball}{\Mh{\mathcalb{b}}}%
\newcommand{\ballY}[2]{\ball\pth{#1,#2}}
\newcommand{\ballZY}[2]{\ball_{#1}\pth{#2}}
\newcommand{\BallY}[2]{\Mh{\mathcal{B}}_{#1}\pth{#2}}
\newcommand{\ballYT}[2]{\ball_T\pth{#1,#2}} %
\newcommand{\circumX}[1]{\mathsf{circum}\pth{#1}}
\newcommand{\vicX}[1]{\mathsf{vic}\pth{#1}}
\newcommand{\vl}{\Mh{\varphi}}
\newcommand{\vlr}{\Mh{\delta}}
\newcommand{\bbY}[2]{\Mh{\mathsf{R}}\pth{#1, #2}}
\newcommand{\volX}[1]{\Mh{vol}\pth{#1}}
\newcommand{\MST}{\Term{MST}\xspace}%
\newcommand{\MSTP}{\Mh{\mathcal{T}}}%
\newcommand{\CC}{\Mh{\mathsf{C}}}%
\newcommand{\Cones}{\Mh{\mathcalb{C}}}%
\newcommand{\ConesX}[1]{\Cones_{#1}}%
\newcommand{\Moat}{\Mh{\mathcal{M}}}%
\newcommand{\nnX}[1]{\Mh{\mathcalb{n}}_{#1}}
\newcommand{\nnY}[2]{\Mh{\mathcalb{n}}_{#1}\pth{#2}}
\newcommand{\reachC}{\Mh{\ell}}
\newcommand{\reachX}[1]{\reachC\pth{#1}}
\newcommand{\reachY}[2]{\reachC\pth{#1,#2}}
\newcommand{\cPower}{\Mh{\kappa}}
\newcommand{\VX}[1]{\Mh{V}\pth{#1}}
\newcommand{\idX}[1]{\mathsf{id}\pth{#1}}
\newcommand{\IFF}{\iff}
\newcommand{\simplex}{\Mh{\nabla}}%
\newcommand{\pvX}[1]{\mathsf{p{}v}\pth{#1}}
\newcommand{\powX}[1]{\ceil{#1}_2}
\newcommand{\Hd}{[0,1]^d}
\newcommand{\wardX}[1]{\Mh{\varocircle}_{#1}}%
\newcommand{\starX}[1]{ \hexstar_{#1}}
\newcommand{\GC}{\Mh{G}_\angle}%
\newcommand{\GCX}[1]{\GC\pth{#1}}
\newcommand{\GCY}[2]{\Mh{G}_{\angle, #2}\pth{#1}}
\newcommand{\portalsX}[1]{\partial\pth{#1}}
\newcommand{\IR}{\Mh{\mathcal{F}}}%
\newcommand{\vSX}[1]{\Mh{\mathcalb{v}}_{#1}}
\newcommand{\D}{\Mh{\mathcal{D}}}%
\newcommand{\Arr}{\Mh{\mathop{\mathrm{\mathcal{A}}}}}%
\newcommand{\ArrX}[1]{\Arr\pth{#1}}%
\newcommand{\Mares}{Mare\v{s}\xspace}
\newcommand{\GSet}{\Mh{\mathcal{C}}}%
\newcommand{\FGSet}{\Mh{C}}%
\newcommand{\aotimes}{*}
   \title{Revisiting Random Points: Combinatorial Complexity and
      Algorithms}
   \author{%
      Sariel Har-Peled%
      \SarielThanks{%
         Work on this paper was partially supported by NSF AF award
         CCF-2317241.  }%
      \and%
      Elfarouk Harb%
      \FaroukThanks{}%
   } }
   \title{Revisiting Random Points: Combinatorial Complexity and
      Algorithms}
   \author{Sariel Har-Peled}{Department of Computer Science;
      University of Illinois Urbana-Champaign
      \and \url{http://sarielhp.org/}}{sariel@illinois.edu}{}{}
   \author{Elfarouk Harb}{Department of Computer Science; University
      of Illinois Urbana-Champaign
      \and \url{https://farouky.github.io}}{eyfmharb@gmail.com}{}{}
   \authorrunning{Sariel Har-Peled and Elfarouk Harb}
   \keywords{Random points, Delaunay triangulation, distance
      selection, Euclidean minimum spanning tree, minimum spanning
      tree, probabilistic concentration, convex hull}%
\begin{document}

\maketitle

\begin{abstract}
    Consider a set $\P$ of $n$ points picked uniformly and
    independently from $[0,1]^d$, where $d$ is a constant. Such a
    point set is well behaved in many aspects and has several
    structural properties.  For example, for a fixed $r \in [0,1]$, we
    prove that the number of pairs of $\binom{\P}{2}$ at a distance at
    most $r$ is concentrated within an interval of length $O(n\log n)$
    around the expected number of such pairs for the torus distance.
    We also provide a new proof that the expected complexity of the
    Delaunay triangulation of $\P$ is linear -- the new proof is
    simpler and more direct than previous proofs.

    In addition, we present simple linear time algorithms to construct
    the Delaunay triangulation, Euclidean \MST, and the convex hull of
    the points of $\P$. The \MST algorithm uses an interesting
    divide-and-conquer approach.  Finally, we present a simple
    $\tilde{O}(n^{4/3})$ time algorithm for the distance selection
    problem, for $d=2$, providing a new natural justification for the
    mysterious appearance of $n^{4/3}$ in algorithms for this problem.
\end{abstract}

\section{Introduction}

\myparagraph{Input model}

Fix a constant dimension $d\geq 2$.  For
$i\in \IRX{n} = \brc{1,\ldots, n}$, uniformly and independently sample
a point $\p_i$ from $[0,1]^d$. Let $\P = \Set{\p_i }{i\in
   \IRX{n}}$. The \emphw{euclidean graph} on $\P$ is
$\GX{\P} = \bigl(\P, \binom{\P}{2} \bigr)\Bigr.$, with the edge
$\p_i \p_j$ having weight $\wX{\p_i \p_j} = \dY{\p_i}{\p_j}$, for
$\p_i, \p_j\in \P$, where $\binom{\P}{2} = \Set{p q}{p,q \in \P}$.
This graph has quadratic number of edges, but is defined by only
$O(n)$ input numbers.  Natural questions to ask about $\P$ and
$\GX{\P}$ include:
\begin{compactenumA}
    \smallskip%
    \item What is the combinatorial complexity of the
    convex-hull/Delaunay triangulation of $P$?

    \smallskip%
    \item How quickly can one compute the convex-hull/Delaunay
    triangulation/MST/etc of $P$?

    \smallskip%
    \item What is the length of the median edge in $\GX{\P}$, and how
    concentrated is this value?
\end{compactenumA}
\smallskip%
All these questions have surprisingly good answers -- linear
complexity, linear running time algorithms, and strong concentration,
respectively.  Here, we revisit these questions, presenting new
simpler proofs and algorithms for them.

\subsection{Background}

There is a lot of work in stochastic and integral geometry on
understanding the behavior of random point sets, and the structures
they induce \cite{s-iig-53, ww-sg-93, c-t-10, sw-csg-10}. As the name
suggests, for many of the questions one states, an integral is set up whose
solution is the desired quantity, and one remains with the (usually
painful) task of solving the integral\footnote{Historically, the field was not named integral geometry \emph{because} it involved integrals in the calculus sense. The origin of the word, which derives from the German "Integralgeometrie", was coined and popularized by Blaschke in their book. We thank an anonymous reviewer for mentioning this.}. In this paper, we focus mainly on direct combinatorial arguments of said results.

\myparagraph{Closest pair and spread} %
The \emphi{spread} of a point set $\P \subset \Re^d$ is the ratio
between the diameter and the closest pair distance of $\P$. Formally,
it is the quantity
\begin{math}
    \spread = \spreadX{\P} = \diamX{\P} /\cpX{\P},
\end{math}
where
\begin{math}
    \diamX{\P} = \max_{\p,\q \in \P} \dY{\p}{\q}
\end{math}
and
\begin{math}
    \cpX{\P} = \min_{\p,\q \in \P: \p \neq \q} \dY{\p}{\q}.
\end{math}
For a set $\P$ of $n$ points sampled uniformly at random from $\Hd$,
It is not hard to verify \cite{hj-spl-20} that
$\Ex{\cpX{\P}} = \Omega(1/n^{2/d})$. This intuitively suggests that
$\Ex{\spreadX{\P}} = O(n^{2/d})$ - (a formal proof of this requires a
bit more effort).

\myparagraph{Convex-hull}

The Convex-hull of $n$ points in $\Re^d$ has combinatorial complexity
$\Theta(n^{\floor{d/2}})$ in the worst case (here, combinatorial complexity refers to the number of vertices and faces). It can be computed in
$O(n\log n + n^{\floor{ d/2} })$ time \cite{c-ochaa-93}. Surprisingly,
the expected complexity of the convex-hull of random points picked
from $\Hd$ is $O( \log^{d-1} n)$ \cite{bkst-anmsv-78}.  The exact
bound depends on the underlying domain from which the points are sampled. For
example, if the sample is taken from a ball in $\Re^d$, the expected
complexity is $O(n^{(d-1)/(d+1)})$ \cite{r-slcdn-70}. See
\cite{h-ecrch-11} and references therein for more details.  Dwyer
\cite{d-chrpp-88} provides an expected linear time algorithm for
computing the convex hull of a set of points picked from $\Hd$. As
hinted to earlier, the analysis is not elementary and uses heavy tools
to show the result.

\myparagraph{Delaunay triangulation} %
The Delaunay triangulation $\DT$ of $n$ points in $\Re^d$ has
combinatorial complexity $\Theta(n^{\ceil{d/2}})$ in the worst
case. It can be computed in $O(n\log n + n^{\ceil{ d/2} })$ time
\cite{c-ochaa-93}. Dwyer \cite{d-hvdle-91} show that when the points
are uniformly sampled from a $d$-dimensional unit ball (instead of a
$d$-cube), the complexity of the Voronoi diagram (and consequently its
dual, $\DT$) is also linear, and gave an $O(n)$ time expected time
algorithm for constructing it. However, Dwyer's algorithm is involved
and its analysis is nontrivial with reliance on algebraic and integral
tools.

\myparagraph{Minimum spanning trees} %
There is a lot of work on \MST and \EMST (Euclidean minimum spanning
tree). Since \EMST is a subgraph of the \emphw{Gabriel graph} of $\P$
-- that is, the graph where two points $\p,\q \in \P$ are connected by
an edge, if their diametrical ball does not contain any point of $\P$
in its interior.  The Gabriel graph is a subgraph (of the
$1$-skeleton) of $\DTX{\P}$, the Delaunay Triangulation of $\P$. Thus, one can calculate $\DTX{\P}$ (in
linear time), and then run Karger \etal expected linear time \MST
algorithm \cite{kkt-rlafm-95} on $\DTX{\P}$.  The algorithm of Karger
\etal uses as a black box a procedure to identify all the edges in the
graph that are too heavy to belong to a minimum spanning tree, given a
candidate spanning tree. Such spanning tree ``verifiers'' are
relatively complicated to implement in linear time
\cite{h-eslta-09}. Developing deterministic linear time \MST algorithm
is still an open problem, although Chazelle presented
\cite{c-mstai-00} a $O(n + m \alpha(n,m))$ time algorithm where $n,m$ are the number of vertices and edges respectively (as
$\alpha(n,m)$ is at most $4$ for all practical purposes, this is
essentially a linear time algorithm). More bizarrely, a deterministic
\emph{optimal} algorithm is known \cite{pr-omsta-02}, but its running
time complexity is not known.  None of these algorithms can be
described as simple.

For minor-closed graphs, \Mares \cite{m-tltam-04} gave two linear time
algorithms to construct the \MST in $O(n+m)$ time. In the plane, the
Delaunay Triangulation is a planar graph, and thus given the Delaunay
triangulation the \MST can be computed in linear time (this is no
longer applicable, already in 3d).

\myparagraph{Distance selection} %
Given a set $\P$ of $n$ points in the plane, and a number $k$, the
distance selection problem asks for the $k$\th small distance in the
$\binom{n}{2}$ pairwise distances induced by the points of $\P$.  In
the plane, this can be computed in $O(n \log n + k)$ time
\cite{c-esd-01}, or alternatively in $O(n^{4/3} )$ time
\cite{cz-hplsfc-21} for general sets of points. An $(1\pm \eps)$-approximation can be computed in
linear time \cite{hr-nplta-15}.

\subsection{Our results}

We provide simple and elementary proofs for several of the results
mentioned above, and we also provide (conceptually) simple algorithms
for several of the problems mentioned above:
\begin{compactenumA}
    \smallskip%
    \item \textsc{$k$\th{} distance concentration.}  Fix a value of
    $r \in [0,1]$.  Let $f_r = f_r(\P)$ denote the number of pairs
    $\p_i \p_j$ with $\dTY{\p_i}{\p_j}\leq r$, where
    $\dTY{\p_i}{\p_j}$ is the \emphw{torus topology distance} between
    $\p_i$ and $\p_j$ (defined in \Eqref{toroidal_dist_defn}). Note
    that $f_r(\P)\in \{0, ..., \binom{n}{2}\}$. It is not hard to show
    that
    $\Prob{\cardin{f_r - \Ex{f_r}} >\tldOmg(n^{3/2})} \leq
    {1}/{n^{O(1)}}$ using Chernoff's inequality and the union bound,
    where $\tldOmg$ and $\tldO$ hide polylogarithmic terms in
    $n$. However, in \secref{concentrate:k:th:distance}, we show a
    significantly stronger concentration, namely that the interval has
    length $\tldO(n)$ with high probability\footnote{Here, an event
       $A_n$ happens with \emphw{high probability} if
       $\Prob{A_n}\geq 1-{1}/{n^{O(1)}}$.}:
    $\Prob{\cardin{f_r - \Ex{f_r}} >\tldOmg(n)} \leq {1}/{n^{O(1)}}$.
    The new concentration proof uses martingales together with bounded
    differences concentration inequality that can handle low
    probability failure.  To the best of our knowledge this result is
    new, and is an interesting property of random points. (We
    conjectured this claim after observing this behavior, of strong
    concentration, in computer simulations we performed.). The proof is an interesting application of a McDiarmid's inequality variant that allows a (small) probability of large variation, when applying the standard McDiarmid's inequality would otherwise fail.

    \medskip%
    \item \textsc{Convex hull.}
    In \secref{convex:hull}, as a warm-up exercise, we provide an
    $O(n)$ expected time algorithm to construct $\CHX{\P}$, the convex
    hull of $\P$. Dwyer \cite{d-chrpp-88} presented a divide and
    conquer algorithm. Our algorithm is somewhat different as it uses
    a quadtree for the partition scheme, and is the building block for
    the later algorithms.

    \smallskip%
    \item \textsc{Linear complexity of Delaunay triangulation.}
    We provide a new proof that the expected complexity of the
    Delaunay triangulation of $\P$ is linear, where $\P$ is a set of
    $n$ points picked uniformly and independently from $\Hd$. The new
    proof, presented in \secref{del:triangulation}, is simpler and
    more direct than existing proofs. The linear bound is quite easy
    to derive for points in the inner part of the cube (we refer to
    this part of the cube as the \emphw{fortress}), but the outer part
    (i.e., the \emphw{moat}) requires more work because of boundary
    issues.

    \smallskip%
    \item \textsc{Linear time algorithm for Delaunay triangulation.}
    In \secref{d:t:build}, we present an expected linear time
    algorithm for computing the Delaunay triangulation.  The algorithm
    computes, for each point, the points it might interact with, and
    the local Delaunay triangulation of these points. The algorithm
    then stitch these local structures together to get the global
    triangulation.

    \smallskip%
    \item \textsc{Euclidean MST.}
    Since the \MST of $\P$ is a subgraph of (the $1$-skeleton) of
    $\DTX{\P}$, the (general but more complicated) expected linear
    time \MST algorithm from \cite{kkt-rlafm-95} could be applied to
    $\DTX{\P}$ to calculate the \EMST of $\P$ in linear time. For
    $d=2$, it is known that \Boruvka's algorithm implemented
    efficiently\footnote{Some textbook implementations would run in
       $O(n\log n)$ time, even if the graph is planar.} takes linear
    time, since planarity is preserved between rounds.  In particular,
    we conjecture that \Boruvka's algorithm takes linear time when run
    on $\DTX{\P}$, in higher dimensions, but we were unable to prove
    it.

    Instead, in \secref{m_s_t}, we present an algorithm for
    constructing the \EMST of $\P$, in expected linear time, using a
    simple algorithm that is the adaption of \Boruvka's algorithm to
    use divide and conquer over a quadtree storing the points. The
    correct propagation of subtrees of the \MST that can be computed
    when restricted to a subproblem, together with a ``minimal'' set
    of edges that might participate in the \MST, is the main new idea
    of our new algorithm. We believe the new algorithm should be of
    interest when trying to compute \MST{}s, or similar structures,
    for huge graphs where one has to distribute the computation across
    several computers/nodes.

    \smallskip%
    \item \textsc{Distance selection.} %
    We show a simple algorithm for distance selection for $\P$ that
    works in expected $O(n^{4/3} \log^{2/3} n)$ time. The new
    algorithm achieves this running time by partitioning the problem
    into (roughly) $O(n^{2/3})$ special instances involving (roughly)
    $O(n^{1/3})$ points concentrated in ``tiny'' disks, and a set of
    points that lies in a ring, of radius $r$, containing (roughly)
    $O(n^{2/3})$ points. Each of these instances can be solved by a
    direct point-location algorithm in (roughly) $O(n^{2/3})$ time. In
    the general case, one has to rely on a more complicated divide and
    conquer strategy (implemented using cuttings3), together with
    duality, to reach such unbalanced instances that can be solved
    using brute force (see \cite{cz-hplsfc-21} and references
    therein). Thus, the new algorithm provides a new elegant and
    intuitive explanation where the mysterious $n^{4/3}$ term rises
    from, in addition for providing a simple algorithm that might work
    better in practice than previous algorithms.

\end{compactenumA}

\paragraph{A comment on the paper organization.}

Since this paper has many results, and is long, we ordered our results
in such a way, that (hopefully) the first ten pages convey our basic
approach and ideas. We did move some (more minor) proofs to an
appendix.

\section{Preliminaries}

\myparagraph{Notations} %
The $O$ notation hides constants that depend (usually exponentially)
on $d$.

\subsection{\VC dimension and the \TPDF{$\eps$}{eps}-net and
   \TPDF{$\eps$}{eps}-sample theorems}

The main ingredient in almost all our results is the
$\eps$-net/sample theorems. In this subsection, we give a quick
introduction, see \cite{AS-TPM-00} or \cite{h-gaa-11} for more
details.  We do not assume prior knowledge of this topic.

\begin{definition}
    A \emphi{range space} $S = (\GSet, \Family)$ is a pair, where
    $\GSet$ is a set, and $\Family$ is a family of subsets of
    $\GSet$. The elements of $\GSet$ are \emphw{points} and the
    elements of $\Family$ are \emphw{ranges}.
\end{definition}

A subset $B\subseteq \GSet$, is \emphi{shattered} by $\Family$ if the
$\cardin{ \Set{r\cap B}{r \in \Family}} = 2^{\cardin{\GSet}}$. The
\emphi{Vapnik-Chervonenkis} dimension (or $\VC$-dimension) of the
range space $S=(\GSet,\Family)$ is the maximum cardinality of a
shattered subset of $\GSet$.

\begin{example}
    Suppose $\GSet=\Re^2$ and $\Family$ is the set of disks in
    $\Re^2$. For any set of three (not colinear) points
    $T=\{\p_1, \p_2, \p_3\} \subseteq \GSet$, and any subset
    $T'\subseteq T$, one can find a disk containing $T'$, and avoiding
    the points of $T \setminus T'$. Thus, the $\VC$ dimension of disks
    in the plane is $3$. It is easy to verify that no four points can
    be shattered, and thus the $\VC$ dimension of this range space is
    $3$.
\end{example}

\begin{example}
    In general, for points in $\Re^d$ and balls or halfspace ranges,
    the $\VC$ dimension is $d+1$. Another noteworthy range is
    axis-parallel rectangles which have $\VC$ dimension $2d$.
\end{example}

For simplicity of exposition, assume $\GSet$ to be a finite set of
points.  An \emphi{$\eps$-net} captures all ``heavy'' ranges.  That
is, if we sample a ``sufficiently'' large subset $N\subseteq \GSet$,
then any range $r\in \Family$ containing ``enough points'' from
$\GSet$ must also contain a point from $N$ with high probability. The
\emphi{$\eps$-sample} is similar, asserting that for any range
$r\in \Family$, the fractions $\frac{\cardin{N\cap r}}{\cardin{N}}$
and $\frac{\cardin{\GSet\cap r}}{\cardin{\GSet}}$ are $\eps$-close,
with high probability. The formal definition is stated below.
\begin{definition}
    Let $(\GSet,\Family)$ be a range space, and let
    $\FGSet \subset \GSet$ be a finite subset. For $0<\eps <1$, a
    subset $\Net \subseteq \FGSet$, is an \emphi{$\eps$-net} for
    $\FGSet$ if for any range $r\in \Family$, we have
    \begin{math}
        \displaystyle%
        \cardin{r\cap \FGSet}\geq \eps \cardin{\FGSet}
        \implies r\cap \Net \neq \emptyset.
    \end{math}

\end{definition}

\begin{definition}
    Let $(\GSet,\Family)$ be a range space, and let $\FGSet$ be a
    finite subset of $\GSet$. For $\eps \in (0,1)$, a subset
    $\Net \subseteq \FGSet$, is an \emphi{$\eps$-sample} for $\FGSet$
    if for any range $r\in \Family$, we have
    \begin{equation*}
        \cardin{ \frac{\cardin{\Net \cap r}}{\cardin{\Net}}
           - \frac{\cardin{\FGSet \cap r}}{\cardin{\FGSet}} }
        \leq \eps.
    \end{equation*}
\end{definition}
Finally, the $\eps$-net and $\eps$-sample theorems characterizes
quantitatively the size of the sample needed to have the desired
property.
\begin{theorem}[$\eps$-net theorem, \cite{hw-ensrq-87}]
    \thmlab{eps:net}%
    Let $(\GSet,\Family)$ be a range space of \VC-dimension $d$, let
    $\FGSet \subseteq \GSet$ be a finite subset, and suppose
    $\eps>0, \delta<1$. Let $\Net$ be a random sample from $\FGSet$
    with $m$ independent draws, where
    \begin{math}
        \displaystyle%
        m \geq \max\Bigl(\frac{4}{\eps}\log \frac{2}{\delta},
        \frac{8d}{\eps}\log \frac{8d}{\eps} \Bigr).
    \end{math}
    Then $\Net$ is an $\eps$-net for $\FGSet$ with probability at
    least $1-\delta$.
\end{theorem}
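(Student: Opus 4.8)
The plan is to prove this via the classical double-sampling (symmetrization) argument of Haussler and Welzl; since the result is a cornerstone of the paper, I would reproduce the proof here for completeness rather than merely cite it. Let $\mathrm{Bad}$ denote the event that $\Net$ fails to be an $\eps$-net, i.e.\ that there is a range $r \in \Family$ with $\cardin{r \cap \FGSet} \ge \eps \cardin{\FGSet}$ yet $r \cap \Net = \emptyset$; the goal is $\Prob{\mathrm{Bad}} \le \delta$. First I would draw a second, independent sample $\Net'$ of $m$ points from $\FGSet$, and consider the event $\mathrm{Bad}'$ that there exists $r \in \Family$ with $\cardin{r \cap \FGSet} \ge \eps\cardin{\FGSet}$, $r \cap \Net = \emptyset$, and $\cardin{r \cap \Net'} \ge \eps m / 2$. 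A Chebyshev (or Chernoff) estimate shows that, conditioned on any fixed witnessing range for $\mathrm{Bad}$, the second sample meets it in at least $\eps m / 2$ points with probability at least $1/2$ — this is exactly where the hypothesis $m = \Omega(1/\eps)$ is used — so $\Prob{\mathrm{Bad}'} \ge \tfrac12 \Prob{\mathrm{Bad}}$, and it suffices to show $\Prob{\mathrm{Bad}'} \le \delta/2$.

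The second step is the symmetrization. I would generate the $2m$ points of the combined sample $A = \Net \uplus \Net'$ first (as an ordered multiset), and then recover $(\Net,\Net')$ by choosing a uniformly random balanced partition of $A$ into two halves of size $m$. Conditioning on $A$: for any single range $r$ with $\cardin{r \cap A} = k$, the probability over the random partition that all $k$ of those points land in the $\Net'$-half is $\binom{2m-k}{m}/\binom{2m}{m} \le 2^{-k}$; if additionally $k \ge \eps m /2$ — which $\mathrm{Bad}'$ forces, since the $\ge \eps m/2$ points it counts all lie in $\Net' \subseteq A$ — this probability is at most $2^{-\eps m/2}$. The key point is that the number of \emph{distinct} traces $r \cap A$ over all $r \in \Family$ is bounded, by the Sauer--Shelah lemma, by $\sum_{i=0}^{d}\binom{2m}{i} = O\bigl((2m)^d\bigr)$, since the range space has \VC-dimension $d$. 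A union bound over these traces then gives $\Prob{\mathrm{Bad}' \mid A} \le O\bigl((2m)^d\bigr)\cdot 2^{-\eps m/2}$ uniformly in $A$, hence the same bound on $\Prob{\mathrm{Bad}'}$.

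It then remains to verify the arithmetic: that $c\,(2m)^d\, 2^{-\eps m/2} \le \delta/2$ under the stated bound on $m$. Splitting $\eps m/2 = \eps m/4 + \eps m/4$, the first piece $\eps m /4 \ge \log_2(2/\delta)$ is guaranteed by $m \ge (4/\eps)\log(2/\delta)$ and absorbs the $1/\delta$ factor, while the second piece $\eps m/4 \ge d\log_2(2m)$ is guaranteed (a routine monotonicity computation) by $m \ge (8d/\eps)\log(8d/\eps)$ and absorbs the polynomial $(2m)^d$; this is precisely why the hypothesis on $m$ is the maximum of two expressions.

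The main obstacle — the one subtlety to handle with care rather than wave through — is the symmetrization step: the witnessing range in $\mathrm{Bad}'$ depends on the random sample, so one cannot naively ``fix $r$, then randomize the partition.'' The correct fix is that, conditioned on the multiset $A = \Net \uplus \Net'$, the relevant family of ranges is already determined up to finitely many traces $r \cap A$, so one first enumerates those traces, bounds the partition-probability for each, and only afterwards applies the union bound, the randomness of the partition being independent of that enumeration. Everything else — the Chebyshev bound of the first step, the binomial inequality $\binom{2m-k}{m}/\binom{2m}{m} \le 2^{-k}$, the Sauer--Shelah count, and the final inequality in $m$ — is routine.
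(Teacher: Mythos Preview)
Your proof is the standard Haussler--Welzl double-sampling argument and is correct as written. However, there is nothing to compare against: the paper does not prove this theorem at all. It is stated as a cited background result (the citation \cite{hw-ensrq-87} is to Haussler and Welzl's original paper), and is used throughout only as a black box. So while your write-up is a faithful reproduction of the classical proof, the paper's ``own proof'' consists solely of the citation, and your decision to reproduce the argument in full goes beyond what the paper does.
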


\begin{theorem}[$\eps$-sample theorem, \cite{vc-ucrfe-71,vc-ucfoe-13}]
    \thmlab{eps:sample}%
    Let $(\GSet,\Family)$ be a range space, where its \VC-dimension is
    $d$.  Let $\FGSet \subseteq \GSet$ be a finite subset, and suppose
    $\eps>0, \delta<1$ are parameters. Let $\Net$ be a random sample
    of size $m$ from $\FGSet$, where
    \begin{math}
        m \geq \min\Bigl( \cardin{\FGSet}, \frac{32}{\eps^2}\bigl(
        d\log \frac{d}{\eps} + \log \frac{1}{\delta}\bigr) \Bigr).
    \end{math}
    Then, $\Net$ is an $\eps$-sample for $\FGSet$, with probability at
    least $1-\delta$.
\end{theorem}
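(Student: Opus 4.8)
The plan is to follow the classical double-sampling (symmetrization) argument of Vapnik and Chervonenkis, which trades the uncountable family $\Family$ for a union bound over the finitely many ways $\Family$ can cut a fixed $2m$-point set. If $m \ge \cardin{\FGSet}$ there is nothing to prove -- take $\Net = \FGSet$, which is trivially a $0$-sample -- so assume $m < \cardin{\FGSet}$ and set $m = \ceil{(32/\eps^2)\pth{d\log(d/\eps) + \log(1/\delta)}}$. For $r \in \Family$ write $p_r = \cardin{\FGSet \cap r}/\cardin{\FGSet}$ for its true density, and for a sample $\Net$ of $m$ i.i.d.\ draws from $\FGSet$ write $\nu_r = \cardin{\Net \cap r}/m$. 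We must bound $\Prob{\exists r \in \Family : \cardin{\nu_r - p_r} > \eps}$ by $\delta$.

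First I would symmetrize: introduce a second independent sample $\Net'$ of size $m$ with empirical densities $\nu'_r$, and show that for $m$ large enough (here $m \ge 2/\eps^2$ already suffices),
\[
\Prob{\exists r \in \Family : \cardin{\nu_r - p_r} > \eps}
\;\le\; 2\,\Prob{\exists r \in \Family : \cardin{\nu_r - \nu'_r} > \eps/2}.
\]
Indeed, pick a (measurable) witness range $R = R(\Net)$ whenever the left-hand event occurs; since $\Net'$ is independent of $\Net$ and the conditional variance of $\nu'_R$ given $\Net$ is at most $1/(4m)$, Chebyshev's inequality gives $\cardin{\nu'_R - p_R} \le \eps/2$ with conditional probability at least $1/2$, and on that sub-event $\cardin{\nu_R - \nu'_R} > \eps/2$. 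Next, condition on the multiset $Z = \Net \cup \Net'$ of the $2m$ sampled points. By the Sauer--Shelah lemma, since the \VC-dimension is $d$, the collection $\Set{r \cap Z}{r \in \Family}$ has at most $\sum_{i=0}^{d}\binom{2m}{i} \le (2em/d)^d$ members, so it is enough to union-bound over that many ranges. Conditioned on $Z$, the sample $\Net$ is a uniformly random half of $Z$; hence for a fixed $r$ with $\cardin{r \cap Z} = k$ the count $\cardin{\Net \cap r}$ is hypergeometric with mean $k/2$, and Hoeffding's inequality for sampling without replacement yields $\Prob{\cardin{\nu_r - \nu'_r} > \eps/2 \mid Z} \le 2\exp\pth{-\eps^2 m / 8}$.

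Combining the three steps and undoing the conditioning on $Z$,
\[
\Prob{\exists r \in \Family : \cardin{\nu_r - p_r} > \eps}
\;\le\; 4\,(2em/d)^d \exp\pth{-\eps^2 m / 8},
\]
and taking logarithms this is at most $\delta$ once $\eps^2 m/8 \ge \ln 4 + d\ln(2em/d) + \ln(1/\delta)$. For the chosen $m$ the term $d\ln(2em/d)$ is only $O\pth{d\log(1/\eps)}$, so the inequality holds and, after bookkeeping the constants, can be made to give exactly the stated threshold; together with the trivial case this shows $\Net$ is an $\eps$-sample with probability $\ge 1-\delta$ whenever $m \ge \min\pth{\cardin{\FGSet},\, (32/\eps^2)\pth{d\log(d/\eps)+\log(1/\delta)}}$. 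The main obstacle is the symmetrization step: one must verify that the witness range depends only on $\Net$ so that the conditional Chebyshev bound on $\Net'$ is legitimate, and keep track of the $\eps \to \eps/2$ and factor-$2$ losses; pushing the universal constant all the way down to $32$ further requires the sharp sampling-without-replacement tail bound rather than a crude Chernoff estimate.
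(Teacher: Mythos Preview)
The paper does not prove this statement: it is quoted as a classical result with citations to Vapnik and Chervonenkis, and is used throughout only as a black box. So there is no ``paper's own proof'' to compare against.

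That said, your sketch is the standard double-sampling/symmetrization argument that underlies the cited references, and the outline is correct: symmetrize to a ghost sample, condition on the $2m$-point multiset, invoke Sauer--Shelah to reduce to at most $(2em/d)^d$ effective ranges, apply the hypergeometric tail bound, and union-bound. Two small caveats. First, your handling of the case $m \ge \cardin{\FGSet}$ (``take $\Net = \FGSet$'') presumes sampling without replacement; the statement as written is agnostic about this, and the $\min$ in the bound is really just saying that once the formula exceeds $\cardin{\FGSet}$ you may as well take the whole ground set. Second, as you yourself note, squeezing the constant down to exactly $32$ requires some care in the bookkeeping (the sharp Hoeffding-without-replacement bound and tracking the $\eps \to \eps/2$ loss); your final inequality $\eps^2 m/8 \ge \ln 4 + d\ln(2em/d) + \ln(1/\delta)$ does not immediately yield the stated threshold without a short calculation showing $d\ln(2em/d) \le 2d\ln(d/\eps) + O(d)$ for this choice of $m$, which you gloss over. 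None of this is a real gap --- the argument is the textbook one --- but since the paper treats the theorem as a citation rather than something to be proved, there is nothing further to compare.
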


\newcommand{\vlV}{\frac{c_d\ln n}{n}}%
\newcommand{\vlrV}{\sqrt[d]{\vlV}}%
\subsection{Bounding the moments}
In the following, $n$ is fixed, and let $\P$ be a set of $n$ points
picked randomly, uniformly and independently from
$[0,1]^d$. Throughout, we use the following fixed quantities:
\begin{equation}
    \vl = \frac{c_d\ln n}{n}
    \qquad\text{ and }\qquad%
    \vlr = \vl^{1/d},
    \eqlab{delta:val}
\end{equation}
where $c_d>0$ a sufficiently large constant that depend only on $d$.

Throughout the paper, we often need to bound the moments of the number
of points of $\P$ that lie in some measurable set $\Xi$. The following
technical lemma bounds the expected number of such points.
\begin{lemma}%
    \lemlab{moments}%
    Let $\Xi \subseteq [0,1]^d$ be a measurable set. If
    $\alpha = \volX{\Xi} \geq 1/n$, then for $t > 2e$, we have
    \begin{equation*}
        \Prob{ |\P \cap \Xi| > t\cdot \alpha n } \leq 1/2^{t\alpha n}.
    \end{equation*}
    Furthermore, for any constant $\cPower \geq 1$, we have that
    \begin{math}
        \Ex{\bigl.|\P \cap \Xi|^\cPower} = O\bigl( ( \alpha
        n)^\cPower )
    \end{math}
    (the $O$ hides here a constant that depend on $\cPower$).
\end{lemma}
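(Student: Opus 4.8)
The plan is to prove the tail bound first, and then derive the moment bound by integrating the tail. For the tail bound, observe that $|\P \cap \Xi|$ is a sum of $n$ independent indicator variables, one for each sampled point, each equal to $1$ with probability $\alpha = \volX{\Xi}$ (since the points are uniform in $[0,1]^d$). Thus $\Ex{|\P\cap\Xi|} = \alpha n$, and we are in the standard setting of a Chernoff bound for a sum of independent Bernoulli variables. I would use the multiplicative Chernoff bound in the form $\Prob{X > (1+\lambda)\mu} \leq \bigl(e^\lambda/(1+\lambda)^{1+\lambda}\bigr)^\mu$. Setting $(1+\lambda)\mu = t\alpha n$, i.e. $1+\lambda = t$ (since $\mu = \alpha n$), this becomes $\Prob{|\P\cap\Xi| > t\alpha n} \leq (e^{t-1}/t^{t})^{\alpha n} \leq (e/t)^{t\alpha n}$. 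For $t > 2e$ we have $e/t < 1/2$, so $(e/t)^{t\alpha n} \leq 1/2^{t\alpha n}$, which is exactly the claimed bound. (The hypothesis $\alpha \geq 1/n$ guarantees $\alpha n \geq 1$, so the exponent is at least $t$, keeping the bound meaningful, though it is not strictly needed for the inequality itself.)

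For the moment bound, I would write $\Ex{|\P\cap\Xi|^\cPower}$ in terms of the tail using $\Ex{Y} = \int_0^\infty \Prob{Y > s}\,ds$ applied to $Y = |\P\cap\Xi|^\cPower$, or more simply split the expectation at the threshold $2e\,\alpha n$. For the ``low'' part, $|\P\cap\Xi| \leq 2e\,\alpha n$ contributes at most $(2e\,\alpha n)^\cPower = O((\alpha n)^\cPower)$. For the ``high'' part, I would bound
\begin{equation*}
    \Ex{ |\P\cap\Xi|^\cPower \cdot \mathbb{1}\{|\P\cap\Xi| > 2e\,\alpha n\} }
    \leq \sum_{j \geq 1} \bigl( 2^{j+1} e\,\alpha n \bigr)^\cPower \cdot \Prob{ |\P\cap\Xi| > 2^j \cdot e\,\alpha n }
\end{equation*}
by dyadic decomposition of the range of values, and then plug in the tail bound $\Prob{|\P\cap\Xi| > 2^j e\,\alpha n} \leq 2^{-2^j e\,\alpha n} \leq 2^{-2^j e}$ (using $\alpha n \geq 1$). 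Since the doubly-exponential decay $2^{-2^j e}$ dominates the polynomial growth $(2^{j+1}e\,\alpha n)^\cPower$, the sum telescopes to $O((\alpha n)^\cPower)$ with a constant depending only on $\cPower$ (and $e$). Adding the two parts gives $\Ex{|\P\cap\Xi|^\cPower} = O((\alpha n)^\cPower)$.

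The only mild subtlety — and the ``hard'' part, though it is quite routine — is handling the high-value part of the moment cleanly: one must be careful that the geometric-type sum over the dyadic blocks actually converges with a constant independent of $n$ and $\alpha$, which is where the hypothesis $\alpha n \geq 1$ is used (it ensures each exponent $2^j e\,\alpha n$ in the tail is at least $2^j e$, so the doubly-exponential factor does not degrade as $\alpha \to 0$). Everything else is a direct application of the Chernoff bound and bookkeeping.
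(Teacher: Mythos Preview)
Your proof is correct and follows essentially the same approach as the paper: a Chernoff-type tail bound on the binomial variable $|\P\cap\Xi|$, followed by summing the tail to get the moment. The only cosmetic differences are that the paper derives the tail bound directly from $\binom{n}{i}\leq(ne/i)^i$ rather than citing the packaged Chernoff inequality, and partitions the range of $|\P\cap\Xi|$ into arithmetic blocks of width $\alpha n$ rather than dyadic blocks; both routes use $\alpha n\geq 1$ in exactly the same way to make the tail sum a constant depending only on $\cPower$.
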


\begin{proof:e:n}{{\Xlemref{moments}}}{proof:moments}
    The number of points of $\P$ falling into $\Xi$, is a binomial
    distribution, and we have
    \begin{equation*}
        \Prob{\bigl. |\P \cap \Xi| > t\cdot \alpha n }
        =%
        \SoCGVer{\!\!\!}%
        \sum_{i=t \alpha n +1}^n \binom{n}{i} \alpha^i
        (1-\alpha)^{n-i}
        \leq%
        \SoCGVer{\!\!\!}%
        \sum_{i=t \alpha n +1}^n \pth{\frac{n \alpha e}{i}}^i
        \leq
        \SoCGVer{\!\!\!}%
        \sum_{i=t \alpha n +1}^n \pth{\frac{n \alpha e}{2 e \alpha n}}^i
        \leq
        \frac{1}{2^{t\alpha n}},
    \end{equation*}
    since $\binom{n}{i} \leq \pth{\frac{ne}{i}}^i$.  Thus, we
    have %
    \begin{equation*}
        \Ex{\bigl.|\P \cap \Xi|^\cPower}%
        \leq%
        \sum_{t=0}^\infty \bigl((t+1) \alpha n\bigr)^\cPower
        \Prob{\bigl. |\P \cap \Xi| > t\cdot \alpha n }%
        \leq%
        (2 \alpha n)^\cPower \sum_{t=0}^\infty \ffrac{
           t^\cPower}{2^{t\alpha n}} %
        =%
        O\bigl( ( \alpha n)^\cPower )\Bigr.,
    \end{equation*}
    since
    \begin{math}
        \sum_{t=0}^\infty \ffrac{ t^\cPower}{2^{t\alpha n}} %
        \leq \sum_{t=0}^\infty \ffrac{ t^\cPower}{2^{t}} %
        =%
        O(1).
    \end{math}
\end{proof:e:n}

\subsection{Vicinities}
For two points $\p, \q \in \Re^d$, let $\bbY{\p}{\q}$ denote the axis
parallel bounding box of $\p$ and $\q$.  The \emphi{vicinity} of a
point $\p \in [0,1]^d$ is
\begin{math}
    \vicX{\p} =%
    \Set{ \q \in [0,1]^d}{\volX{\bbY{\p}{\q}} \leq \vl},
\end{math}
where $\vl$ is specified in \Eqref{delta:val} (see
\figref{vicinity}).  For a number $x > 0$, let
\begin{math}
    \powX{x} = 2^{\ceil{\log_2 x}},
\end{math}
and observe that $x \leq \powX{x} \leq 2x$, and $\powX{x}$ is a
power of two. This definition is used in the proof of the following claim.

The following claim can be proved using integration -- we provide
an alternative combinatorial proof for the sake of completeness.

\begin{figure}
    \centerline{\includegraphics%
       {figs/xy_construction}}
    \captionof{figure}{The green region is $[0,1]^2 \cap \vicX{\p}$. The light blue region is $[0,1]^2 \setminus \vicX{\p}$.}
    \figlab{vicinity}
\end{figure}

\begin{lemma}[\cite{chr-puvpp-16}]
    \lemlab{vicinity:vol}%
    For any $\p \in [0,1]^d$, we have
    $\volX{\vicX{\p}} =O( \frac{\log^{d} n}{n} )$.
\end{lemma}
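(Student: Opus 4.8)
The plan is to use the dyadic rounding operator $\powX{\cdot}$ to discretize the coordinate distances from $\p$, cover $\vicX{\p}$ by a family of ``dyadic annulus-boxes,'' and sum their volumes directly; the product structure of box volumes makes the bookkeeping clean.

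Write $\p = (p_1, \ldots, p_d)$. The starting point is the identity $\volX{\bbY{\p}{\q}} = \prod_{i=1}^{d} \cardin{p_i - q_i}$, so that $\vicX{\p}$ is precisely the set of $\q \in [0,1]^d$ with $\prod_{i=1}^{d} \cardin{p_i - q_i} \leq \vl$. Discarding the measure-zero set of $\q$ with $q_i = p_i$ for some $i$, I would assign to each remaining $\q \in \vicX{\p}$ the scale vector $(k_1, \ldots, k_d)$ defined by $2^{-k_i} = \powX{\cardin{p_i - q_i}}$. Since $\cardin{p_i - q_i} \in (0,1]$, each $k_i$ is a non-negative integer, and $x \leq \powX{x} \leq 2x$ gives $2^{-k_i - 1} \leq \cardin{p_i - q_i} \leq 2^{-k_i}$. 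Thus $\q$ lies in the box $\rect_{k_1, \ldots, k_d} = \Set{\q' \in [0,1]^d}{2^{-k_i-1} \leq \cardin{p_i - q_i'} \leq 2^{-k_i} \text{ for } i = 1, \ldots, d}$, whose volume is at most $\prod_{i=1}^{d} 2^{-k_i} = 2^{-(k_1 + \cdots + k_d)}$ (in each coordinate the point is confined to two intervals of total length at most $2^{-k_i}$). Finally, feeding $\cardin{p_i - q_i} \geq 2^{-k_i - 1}$ into $\prod_i \cardin{p_i - q_i} \leq \vl$ forces $k_1 + \cdots + k_d \geq K := \log_2(1/\vl) - d$.

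Consequently, up to a null set, $\vicX{\p}$ is contained in the union of the boxes $\rect_{k_1, \ldots, k_d}$ over all non-negative integer vectors with $k_1 + \cdots + k_d \geq K$, and a union bound on volumes gives
\begin{equation*}
    \volX{\vicX{\p}} \;\leq\; \sum_{m \geq \max(K, 0)} \binom{m + d - 1}{d - 1}\, 2^{-m},
\end{equation*}
since there are $\binom{m+d-1}{d-1}$ such vectors summing to $m$. As $d$ is constant, $\binom{m+d-1}{d-1} = O(m^{d-1})$, and the geometric factor makes the tail $\sum_{m \geq M} m^{d-1} 2^{-m} = O\pth{M^{d-1} 2^{-M}}$ for $M \geq 1$ (bound $(M+j)^{d-1} \leq (M(1+j))^{d-1}$ and factor out $2^{-M}$). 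For $n$ above a constant depending on $d$ we have $\vl \leq 2^{-d-1}$, so $K \geq 1$, and taking $M = \ceil{K}$ with $2^{-K} = 2^{d}\vl = O(\vl)$ and $K = O(\log(1/\vl)) = O(\log n)$ (using $1/\vl \leq n$) yields $\volX{\vicX{\p}} = O\pth{K^{d-1}\vl} = O\pth{\frac{\log^{d} n}{n}}$. The finitely many remaining small values of $n$ are trivial, since $\volX{\vicX{\p}} \leq 1$ while $\log^{d} n / n$ is bounded below by a positive constant over any bounded range of $n \geq 2$.

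There is no real obstacle here; the only place requiring slight care is the tail estimate $\sum_{m \geq M} m^{d-1} 2^{-m} = O(M^{d-1} 2^{-M})$ and the attendant bookkeeping with the floor $\ceil{K}$ and the constants, so that the final hidden constant depends only on $d$, consistently with the paper's $O$-notation convention.
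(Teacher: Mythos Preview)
Your proof is correct and follows the same dyadic-covering strategy as the paper's, though the execution differs in a small but pleasant way. The paper first reduces to $\p = \origin$ (each of the $2^d$ orthants of the vicinity is maximized there), and then covers $\vicX{\origin}$ by \emph{canonical boxes} $\prod_i [0,\alpha_i]$ of a \emph{single fixed volume} $2^{-\tau} \approx 2^d\vl$; since $d-1$ of the side lengths determine the last, there are at most $(1+\tau)^{d-1} = O(\log^{d-1} n)$ such boxes, and the bound drops out with no tail sum needed. Your version instead works at all dyadic scales at once and sums the resulting polynomial-times-geometric series, which costs you the extra tail estimate $\sum_{m\geq M} m^{d-1}2^{-m} = O(M^{d-1}2^{-M})$ but spares you the reduction to the origin. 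Both routes arrive at the same $O(\vl\log^{d-1} n) = O((\log^d n)/n)$ with the same underlying combinatorics.
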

\begin{proof:e}{{\Xlemref{vicinity:vol}}}{v:vol:proof:x}
    Let $\origin$ denote the origin.  The area of a single quadrant of
    the vicinity is maximized when $\p=\origin$. There are $2^d$
    quadrants so we have that
    $\volX{\vicX{\p}} \leq 2^d \volX{\vicX{\origin}}$. To bound the
    later quantity, let $\tau >0$ be an integer such that
    $2^{d}\vl \leq 2^{-\tau} \leq 2^{d+1}\vl$. As such, we have
    \begin{math}
        \tau \leq \lg(1/ \vl) - d = O( \log n).
    \end{math}

    A \emphi{canonical box}, is a box of the form
    $B= \prod_{i=1}^d [0,\alpha_i] $ such that
    $\volX{B} = \prod_i \alpha_i = 2^{-\tau}$, where $\alpha_i$ is a
    power of two, for all $i$.  For any point
    $\q = (\q_1,\ldots, \q_d) \in [0,1]^d$, let
    $\pvX{\q} = \prod_{i=1}^d \q_i $ be the \emphi{point
       volume}. Consider all the points of
    $\q = (\q_1, \ldots, \q_d) \in \vicX{\origin}$ (i.e., these are
    points with $\pvX{\q} \leq \vl$), and let
    $\powX{\q} = ( \powX{\q_1}, \ldots, \powX{\q_d})$.  Observe that
    $\pvX{\powX{\q}} \leq 2^d \vl$. In particular, there exists a
    canonical box that contains $\q$.

    Consider a side $[0,\alpha_i]$ of a canonical box. The number
    $\alpha_i$ is a power of $2$, and
    $1 \geq \alpha_i \geq 2^{-\tau}$. That is,
    $\alpha_i \in \{1,1/2,1/4, \ldots, 2^{-\tau} \}$. Namely, there
    are at most $1+\tau$ choices for the value of each coordinate of a
    canonical box.  As such, the number of canonical boxes is
    $(1+\tau)^{d-1}$, as fixing $d-1$ coordinates forces the value of
    the last coordinate. The volume of a canonical box is
    $\leq 2^{d+1} \vl$. We conclude $\vicX{\origin}$ is covered by the
    union of these boxes, and as such,
    $\volX{\vicX{\origin}} \leq \tau^{d-1} \vl$, which implies the
    claim.
\end{proof:e}
The intuition for vicinities is that for a lot of the problems
discussed in the introduction, any point $\p \in \P$ only needs to
locally consider other points in its vicinity when making decisions of
building the desired structures (i.e. points outside the vicinity of
$\p$ are not relevant for $\p$)

\newcommand{\UCT}{[0,1]^d_T}%

\section{Sharp concentration of the \TPDF{$k$}{k}\th %
   pairwise distance}
\seclab{concentrate:k:th:distance}

Let $\P = \brc{ \p_1, \ldots, \p_n}$ be a random sequence of points,
where $\p_i$ is picked uniformly and independently from $\Hd$.  For
two numbers $x,y \in [0,1]$ their \emphw{toroidal distance} is
$\cardinT{x-y} = \min\bigl( \cardin{x-y}, 1-\cardin{x-y}\bigr)$.  Let
$r$ be a fixed value in $[0,1]$. Let $f_r(\P)$ denote the number of
pairs $\p_i \p_j$ in $\P$ with $\dTY{\p_i}{\p_j}\leq r$. Formally, we have
\begin{equation}
\eqlab{toroidal_dist_defn}
    \dTY{\p_i}{\p_j} = \sqrt{\sum\nolimits_{\ell=1}^d
       \cardin{p_{i}[\ell]-p_{j}[\ell]\bigr.}_T^{2}}
    \qquad\text{and}\qquad%
    f_r(\P) = \cardin{ \Set{ \p_i \p_j }{ i <j \text{ and } \dTY{\p_i}{\p_j}
          \leq r}}.
\end{equation}
is the \emphi{toroidal distance} between $\p_i$ and $\p_j$, and
$\p[\ell]$ denotes the $\ell$\th coordinate of a point $\p \in \Re^d$.
We denote the space $[0,1]^d$ under this toroidal topology by
$\UCT$. Intuitively, this is the space where we allow ``wrap-around''
in $[0,1]^d$, and the shortest distance between two points can be the
wrap-around distance. Using this distance allows us to ignore
artifacts that are generated by the boundary of the hypercube
$[0,1]^d$.

The claim is that the value of $f_r$, which is a number in
$\{0,\ldots, \binom{n}{2}\}$, is strongly concentrated.  Namely, the
interval of integers containing $f_r$, with high probability, is
``short''.  Showing a bound of $\tldO(n^{3/2})$ on the number of
values in this interval is doable via Chernoff's inequality and using
the union bound. The resulting guarantee is of the form
\begin{math}
    \Prob{\cardin{f_r - \Ex{f_r}} >\tldOmg(n^{3/2})} \leq
    {1}/{n^{O(1)}}.
\end{math}
Here, we show a significantly stronger concentration with the interval
containing $\tldO(n)$.

We conjecture this result is true for the Euclidean distance, but
handling the boundary cases proved to be quite challenging.  Hence the
simplifying Toroidal topology assumption. We observed this strong
concentration, for both the Toroidal and Euclidean case, in computer
simulations.

Consider the closed ball
\begin{math}
    \ballYT{\p}{r}=\Set{x\in \Hd }{ \dTY{\p}{x}\leq r}
\end{math}
in $\UCT$.  We next bound the \VC dimension of such balls (as a side,
Gillibert \etal \cite{glm-vabt-22} bounded the \VC-dimension of
axis-parallel boxes in this space by $O(d \log d)$).
\begin{lemma}\RefProofInAppendix{v:c:t}
    \lemlab{vc-toroidalball}%
    For $\mathcal{A}=\Set{\ballYT{p}{r}}{p\in \Re^d }$, the \VC
    dimension of the range space $(\Hd, \mathcal{A})$ is $O(1)$.
\end{lemma}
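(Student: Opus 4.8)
The plan is to reduce the toroidal ball range space to a bounded number of Euclidean ball range spaces, and then invoke the known $O(d)$ bound on the VC dimension of Euclidean balls together with a standard composition argument. The key observation is that the universe here is $\Hd$, which is bounded, and the toroidal ball $\ballYT{\p}{r}$ is obtained by "wrapping around": a point $x \in \Hd$ lies in $\ballYT{\p}{r}$ iff $\ddY{x}{\p - \sigma}\le r$ (Euclidean distance in $\Re^d$) for \emph{some} shift vector $\sigma \in \{-1,0,1\}^d$. Equivalently, if we tile $\Re^d$ with $3^d$ translated copies of $\Hd$ (the central one plus its $3^d-1$ lattice neighbors), then $\ballYT{\p}{r} \cap \Hd$ is the union of the at most $3^d$ sets obtained by intersecting a genuine Euclidean ball of radius $r$ centered at $\p+\sigma$ with the central cube, restricted back. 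So every toroidal ball range is a union of at most $3^d$ Euclidean-ball ranges.

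First I would make this decomposition precise: define, for each shift $\sigma\in\{-1,0,1\}^d$, the range space $S_\sigma$ whose ranges are $\Set{\ballY{\p+\sigma}{r}\cap \Hd}{\p\in\Re^d}$; each $S_\sigma$ is just (an affine copy of) the Euclidean-ball range space on $\Hd$, which by the example in the preliminaries has VC dimension $d+1=O(1)$. Second, I would invoke the standard fact (see, e.g., \cite{h-gaa-11}) that if $S_1,\dots,S_k$ are range spaces on the same ground set with VC dimensions $\delta_1,\dots,\delta_k$, then the range space whose ranges are all $k$-fold unions $r_1\cup\cdots\cup r_k$ (with $r_i$ a range of $S_i$) has VC dimension $O\big(k\,(\sum_i \delta_i)\log k\big)$, hence $O(1)$ when $k=3^d$ and each $\delta_i=d+1$ are constants. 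Since every range of $(\Hd,\mathcal{A})$ is such a $3^d$-fold union (we may pad with empty ranges to make the count exactly $3^d$), its VC dimension is $O(1)$ as well.

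The one point that needs care—and the main obstacle—is verifying the wrap-around identity $\ballYT{\p}{r}\cap\Hd = \bigcup_{\sigma\in\{-1,0,1\}^d}\big(\ballY{\p+\sigma}{r}\cap\Hd\big)$ for all $r\in[0,1]$. For each coordinate, $\cardinT{x[\ell]-\p[\ell]} = \min_{s\in\{-1,0,1\}}\cardin{x[\ell]-\p[\ell]+s}$ since $x[\ell],\p[\ell]\in[0,1]$ forces $\cardin{x[\ell]-\p[\ell]}\le 1$; but the toroidal \emph{distance} takes the min of a sum over coordinates, not a sum of per-coordinate mins, so the shift achieving the toroidal distance is a single $\sigma\in\{-1,0,1\}^d$ applied uniformly—indeed $\dTY{\p}{x}^2 = \min_{\sigma\in\{-1,0,1\}^d}\sum_\ell (x[\ell]-\p[\ell]+\sigma[\ell])^2 = \min_\sigma \ddY{x}{\p-\sigma}^2$, because minimizing each summand independently (over its own $s\in\{-1,0,1\}$) is the same as minimizing the vector of summands jointly. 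This gives the identity, and the rest is the bookkeeping above. (A cosmetic subtlety: one should note $r\le 1$ is within range so the balls behave as expected near the boundary; for $r$ close to $1$ the balls may cover all of $\Hd$, which only makes the range space simpler.)

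Thus combining the decomposition into $3^d$ Euclidean-ball ranges, the constant VC dimension of Euclidean balls, and the union-closure VC bound yields that $(\Hd,\mathcal{A})$ has VC dimension $O(1)$, as claimed.
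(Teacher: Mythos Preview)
Your proof is correct and follows essentially the same approach as the paper: decompose a toroidal ball into a constant number of ``simple'' pieces and then invoke the standard closure of bounded \VC dimension under finite unions/intersections. The only cosmetic difference is that you use $3^d$ shifted Euclidean balls (letting the ground set $\Hd$ do the clipping implicitly), whereas the paper writes each piece explicitly as a ball intersected with up to $2d$ halfspaces and counts $O(2^{2d})$ pieces; both routes yield the same $O(1)$ bound.
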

\begin{proof:e}{{\Xlemref{vc-toroidalball}}}{v:c:t}
    A toroidal ball consists of at most $O(2^{2d})$ regions $R_i$,
    each region being the intersection of a ball and at most $2d$ half
    spaces (corresponding to the boundaries of $\Hd$). The \VC
    dimension of balls and halfspaces is $d+1$, so the \VC dimension
    of their intersection (and hence each region) is $O(1)$. Taking
    the union of the at most $O(2^{2d})$ regions, implies the \VC
    dimension is at most $O(1)$, via standard argumentation
    \cite{h-gaa-11}.
\end{proof:e}

Next, we would like to apply Chernoff-like style inequalities to bound
the probability of deviation from the expectation. The most relevant
inequality here is McDiarmid's inequality for bounded differences of
martingales. Unfortunately, one cannot use McDiarmid's inequality
directly because by ``sliding'' a ball $\ball$ in $[0,1]^d$, the
number of points inside $\ball$ might change by $O(n)$. Of course for
random point set $\P$, this is highly unlikely (the change is more
likely to be $O(\sqrt{n})$) and so we will have to use a variation of
McDiarmid's inequality that allows a ``bad'' event, where the
difference might be large but happening with a small probability, and
a ``good'' typical event where the difference is bounded.

Consider the following extension of McDiarmid’s inequality (for
bounded differences of martingales) where the differences are only
bounded with high probability \cite{k-emidb-02}. It will be useful to
view $\P$ from two different views\footnote{As with most things in
   life.}, one as a set of individual points, and the second as a
product $\Omega=\prod_{1\leq i \leq dn}\Omega_i$ of $dn$ probability
spaces for each coordinate.
\begin{definition}[\cite{k-emidb-02}]
    Let $\Omega_1, ..., \Omega_{m}$ be probability spaces.  Let
    $\Omega = \prod_i \Omega_i $, and let $X$ be a random variable on
    $\Omega$. The variable $X$ is \emphw{strongly difference-bounded}
    by $(b,c,\hdelta)$ if the following holds. There is a ``bad''
    subset $B\subseteq \Omega$, where $\hdelta=\Prob{w\in B}$. In
    addition, we require that
    \begin{compactenumi}
        \smallskip
        \item If $\omega, \omega' \in \Omega$ differ only in the
        $k$\th coordinate, and $\omega \not \in B$ then
        $\cardin{X(\omega)-X(\omega')} \leq c$.

        \smallskip
        \item Furthermore, for any $\omega, \omega'\in \Omega$
        differing only in the $k$\th coordinate,
        $\cardin{X(\omega)-X(\omega')}\leq b$.
    \end{compactenumi}
\end{definition}

To decipher this definition consider the case that $c < b$: the
difference between ``bad'' pairs can be large, but the difference
between ``good'' (or mixed) pairs is small. The quantity $b$ behaves
like the ``worst'' case difference, $c$ is the ``typical'' difference,
and $\hdelta$ is the probability of the bad event happening.

\begin{lemma}[\cite{k-emidb-02}, Corollary~3.4]
    \lemlab{kutin-mcdiarmid}%
    Let $\Omega_1, ..., \Omega_m$ be probability spaces.  Let
    $\Omega = \prod_{1\leq i \leq m} \Omega_i$ and let $X$ be a random
    variable on $\Omega$ which is strongly difference-bounded by
    $(b, c, \hdelta)$. Let $\mu = \Ex{X}$. Then, for any
    $\tau > 0$, and any $\alpha > 0$, we have
    \begin{math}
        \Prob{\bigl.\cardin{X-\mu} \geq \tau} \leq%
        2\Bigl[\exp{\bigl(-\frac{\tau^2}{2m(c+b\alpha)^2}\bigr)}+
        \frac{m}{\alpha}\hdelta \Bigr].
    \end{math}
\end{lemma}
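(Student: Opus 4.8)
The plan is the textbook Doob-martingale argument with one modification: the martingale is \emph{truncated} the first time a coordinate change could be large, so that the Azuma--Hoeffding inequality still applies to the truncated process, and the truncation is undone at the cost of a small additive error. Write $\omega=(\omega_1,\dots,\omega_m)$, let $\mathcal{F}_k=\sigma(\omega_1,\dots,\omega_k)$, and set $Z_k=\ExCond{X}{\mathcal{F}_k}$, so that $Z_0=\mu$, $Z_m=X$, and $\pth{Z_k}$ is a martingale with increments $D_k=Z_k-Z_{k-1}$, $\ExCond{D_k}{\mathcal{F}_{k-1}}=0$. Were $\cardin{D_k}\le c$ to hold everywhere, Azuma--Hoeffding would finish it; the whole difficulty, exactly as flagged just before the lemma, is that on the bad set $B$ a single coordinate flip can move $X$ by $b$.

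The first step is to bound each increment by a \emph{local} bad probability. Because a conditional expectation is an average over re-samplings of a coordinate, $\cardin{D_k(\omega)}\le\ExSym_{s_k,\,s_{>k}}\bigl[\cardin{X(\omega_{<k},\omega_k,s_{>k})-X(\omega_{<k},s_k,s_{>k})}\bigr]$, and by the definition of strongly difference-bounded the integrand is at most $c$ unless the configuration carrying the original $k$th coordinate lies in $B$, in which case it is at most $b$. Hence $\cardin{D_k}\le c+b\,p_k$, where $p_k(\omega_{\le k})=\Prob{(\omega_{\le k},\text{resampled tail})\in B}$ satisfies $\ExSym[p_k]=\Prob{B}=\hdelta$. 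Fix the threshold $\alpha$, let $\sigma=\min\Set{k}{p_k>\alpha}$ (a genuine stopping time, since $p_k$ is $\mathcal{F}_k$-measurable), and observe that by Markov's inequality and a union bound $\Prob{\sigma\le m}\le\sum_{k=1}^m\Prob{p_k>\alpha}\le m\hdelta/\alpha$.

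The second step is to pass to a truncated martingale $\widetilde Z$ that agrees with $Z$ up to the step where $\sigma$ would trigger, so that its increments are bounded by $c+b\alpha$ everywhere. Two-sided Azuma--Hoeffding then yields $\Prob{\cardin{\widetilde Z_m-\mu}\ge\tau}\le 2\exp\pth{-\tau^2/\pth{2m(c+b\alpha)^2}}$, and since $\widetilde Z_m$ can differ from $X$ only on $\{\sigma\le m\}$, combining with the estimate above gives $\Prob{\cardin{X-\mu}\ge\tau}\le 2\exp\pth{-\tau^2/\pth{2m(c+b\alpha)^2}}+O(m\hdelta/\alpha)$, which is the asserted inequality; the free parameter $\alpha$ is exactly the truncation level, so the statement is really the optimal bound over all choices of threshold.

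The step I expect to be the real obstacle is making this truncation rigorous — i.e.\ producing an object that is simultaneously a martingale, has deterministic increments $\le c+b\alpha$, and coincides with $X$ off an event of probability $O(m\hdelta/\alpha)$ — because the test $p_k>\alpha$ is only revealed at time $k$, not $k-1$, so the naive stopped martingale $Z_{\cdot\wedge\sigma}$ can still take one over-sized last step on $\{\sigma=k\}$; one has to instead truncate the increments themselves (re-centering to keep the conditional mean zero, at the mild cost of a constant factor in the increment bound), or arrange the stopping so that the over-sized step falls inside the already-discarded bad event. The other conceptual point, and the reason the bound cannot be purely exponential, is that $\Prob{B}$ is controlled only \emph{unconditionally} — it is the single number $\hdelta$, with no handle on its conditional mass along a trajectory — which forces the crude Markov/union estimate and the resulting polynomial factor $m\hdelta/\alpha$; this is also why the inequality is useful only when $\hdelta$ is, as in our application, polynomially small in $n$. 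These are precisely the bookkeeping points that a careful proof of Corollary~3.4 of~\cite{k-emidb-02} has to address.
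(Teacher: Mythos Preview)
The paper does not prove this lemma at all: it is quoted verbatim as Corollary~3.4 of \cite{k-emidb-02} and used as a black box. So there is no ``paper's own proof'' to compare against; your sketch is an independent reconstruction of Kutin's argument.

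That said, your outline is essentially the right one, and the obstacle you flag --- that $p_k$ is only $\mathcal{F}_k$-measurable, so the stopped martingale $Z_{\cdot\wedge\sigma}$ may still absorb one oversized increment --- has a cleaner resolution than the two you propose. The definition of \emph{strongly difference-bounded} requires only that \emph{one} of the two configurations lie outside $B$ for the $c$-bound to apply. In your increment bound you conditioned on the configuration carrying the \emph{original} $k$th coordinate $\omega_k$; if instead you test the configuration carrying the \emph{resampled} coordinate $s_k$, you get
\[
\cardin{D_k}\;\le\; c + b\,q_k,\qquad q_k(\omega_{<k})=\Prob{(\omega_{<k},s_k,s_{>k})\in B},
\]
and now $q_k$ is $\mathcal{F}_{k-1}$-measurable. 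The stopping time $\sigma=\min\{k:q_k>\alpha\}$ is then predictable, so stopping at $\sigma-1$ (equivalently, freezing before the first bad step) genuinely caps every surviving increment by $c+b\alpha$, and $\Ex{q_k}=\hdelta$ still gives $\Prob{\sigma\le m}\le m\hdelta/\alpha$ by Markov and a union bound. This removes the need for any re-centering or constant-factor loss, and yields exactly the stated inequality.
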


In the following, let $\P + \p = \P \cup \{\p\}$ and
$\P - \p = \P \setminus \{\p\}$.

\begin{lemma}%
    \lemlab{strongly-bounded-fr}%
    The random variable $f_r$ is strongly difference-bounded by
    \begin{equation*}
        (b,c,\hdelta) := (n-1, O(\sqrt{n \log n }), {1}/{n^{O(1)}}).
    \end{equation*}
\end{lemma}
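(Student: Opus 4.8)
The plan is to verify the two bounded-difference conditions directly from the definition of $f_r$, viewing $\P$ as a point in the product space $\Omega = \prod_{1 \le i \le dn} \Omega_i$, where the coordinates are grouped into $n$ blocks of $d$ coordinates, one block per point $\p_i$. Changing a single coordinate $k$ (lying in the block of point $\p_i$) replaces $\p_i$ by a new point $\p_i'$, with all other points fixed. Write $\P' = (\P - \p_i) + \p_i'$.

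First I would establish the worst-case bound $b = n-1$. The only pairs whose membership in the count $f_r$ can change are the pairs incident to the modified point: pairs $\p_i \p_j$ become $\p_i' \p_j$ for $j \neq i$. There are $n-1$ such pairs, so $|f_r(\P) - f_r(\P')| \le n-1$ regardless of the configuration, giving condition (ii) with $b = n-1$.

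Next, for the typical bound, I would define the bad set $B$ and argue it has probability $1/n^{O(1)}$. Note that for fixed $\p_j$ ($j \neq i$), the pair $\p_i \p_j$ contributes to $f_r(\P)$ iff $\p_j \in \ballYT{\p_i}{r}$, and to $f_r(\P')$ iff $\p_j \in \ballYT{\p_i'}{r}$; so the change equals the number of points of $\P - \p_i$ lying in the symmetric difference $\ballYT{\p_i}{r} \,\triangle\, \ballYT{\p_i'}{r}$. This symmetric difference is contained in the union of two toroidal balls, so by \lemref{vc-toroidalball} the relevant range space has $\VC$ dimension $O(1)$. I would declare $\omega \in B$ if there exist a point $\p$ (among the $n$ points) and a radius-$r$ toroidal ball such that the number of the other points lying in that ball deviates from its expectation by more than, say, $C\sqrt{n \log n}$; applying the $\eps$-sample theorem (\thmref{eps:sample}) with $\eps \asymp \sqrt{(\log n)/n}$, or equivalently a Chernoff bound together with an $\eps$-net-style discretization argument over the $O(1)$-VC-dimension range space, shows that $\Prob{B} \le 1/n^{O(1)}$. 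Outside $B$, the number of points in either toroidal ball, and hence in the symmetric difference, is within $O(\sqrt{n \log n})$ of $\pi_d r^d n$ (the expected count), so the difference $|f_r(\P) - f_r(\P')| = O(\sqrt{n\log n})$; this gives condition (i) with $c = O(\sqrt{n \log n})$.

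The main obstacle is getting the bad-event bound to be uniform over \emph{all} toroidal balls of radius $r$ simultaneously (there is a continuum of centers), not just the finitely many balls actually determined by the sample. This is exactly what the $\eps$-sample theorem (via the bounded $\VC$ dimension from \lemref{vc-toroidalball}) buys us: with probability $1 - 1/n^{O(1)}$, \emph{every} radius-$r$ toroidal ball contains a number of sample points within an additive $O(\sqrt{n\log n})$ of the expectation, and this holds no matter where the perturbed point $\p_i'$ lands. One subtlety to handle carefully is that $B$ must be a subset of $\Omega$ (an event on the whole configuration), and the membership of $\omega$ in $B$ should not depend on which coordinate we perturb; defining $B$ as the event "some radius-$r$ toroidal ball is $O(\sqrt{n\log n})$-unbalanced with respect to $\P$" accomplishes this cleanly, since if $\omega \notin B$ then both $\P$ and $\P'$ are well-balanced (the latter because removing/adding one point changes every count by at most $1$).
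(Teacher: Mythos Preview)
Your proposal is correct and follows essentially the same route as the paper: both establish $b=n-1$ trivially, then invoke the $\eps$-sample theorem (via \lemref{vc-toroidalball}) with $\eps \asymp \sqrt{(\log n)/n}$ to show that, outside a bad event of probability $1/n^{O(1)}$, every radius-$r$ toroidal ball contains its expected number of points up to an additive $O(\sqrt{n\log n})$, so swapping one point changes $f_r$ by at most the difference of two such counts. One small slip: the change is not \emph{equal} to the number of points in the symmetric difference (it is the signed difference of the two ball-counts, which is only \emph{bounded} by the symmetric-difference count), and the symmetric difference itself is certainly not ``within $O(\sqrt{n\log n})$ of $\pi_d r^d n$''---but since you ultimately bound $|f_r(\P)-f_r(\P')|$ via the triangle inequality on the two ball-counts, this detour is harmless.
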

\begin{proof:e:n}{{\Xlemref{strongly-bounded-fr}}}{s:b:f}
    If one moves only one point of $\P$, at most $n-1$ pairwise
    distances involved with this point can change, implying that
    $b \leq n-1$.

    By the \thmrefY{eps:sample}{$\eps$-sample theorem}, and
    \lemref{vc-toroidalball}, a sample of size
    $O\bigl( \eps^{-2} \log n \bigr)$ is an $\eps$-sample for Toroidal
    balls, with high probability. Interpreting $\P$ as an
    $\eps$-sample for $\Hd$, implies that this holds for $\P$ with
    $\eps = \sqrt{\vl} = \sqrt{\frac{c_d \ln n}{n}}$ for sufficiently
    small constant $c_d > 0$. Let
    $\vSX{d}=\volX{\ballYT{\q}{r}\bigr.}$, for any point
    $\q \in [0,1]^d$.  The number of points in distance $\leq r$ from
    a point $\p \in [0,1]^d$, is
    $X_\p = \cardin{\P \cap \ballYT{\p}{r}}$.  Hence, for \emph{any}
    Toroidal ball we have
    \begin{equation}
        \cardin{X_p - \Ex{X_\p}}
        =%
        \cardin{\bigl.\cardin{\P \cap \ballYT{p}{r}}-\vSX{d}n} \leq \eps n =
        \sqrt{c_d n \log n} = \tldO(\sqrt{n})
        \eqlab{sample}
    \end{equation}
    assuming $\P$ is indeed an $\eps$-sample. Note that the bound above crucially uses the Toroidal distance properties. Furthermore, the set
    $\P - \p$, formed by removing any point $\p \in \P$, is an
    $\eps$-sample, and this holds with high probability for all such
    subsets.

    This readily implies that for any two points
    $\p, \p' \in [0,1]^d$, we have
    \begin{equation*}
        \cardin{X_\p - X_{\p'}}
        \leq
        \cardin{X_\p - \vSX{d} n}
        +
        \cardin{\vSX{d} n - X_{\p'} }
        = O( \sqrt{ n \log n}).
    \end{equation*}

    Picking a point $\p \in \P$, and a point $\p' \in [0,1]^d$, and
    setting $\P' = \P - \p + \p'$, we are interested in bounding the
    ``typical'' difference between $f_r(\P)$ and $f_r(\P')$ (this
    would be the value of $c$). We have
    \begin{align*}
      \cardin{f_r(\P) - f_r(\P')}
      \leq%
      \cardin{X_\p - X_{\p'}} + O(1)
      = O( \sqrt{n \log n}).
    \end{align*}
    This implies that $c = O( \sqrt{n \log n})$. This calculation
    fails, only if $\P$ fails to be an $\eps$-sample, which happens
    with probability $\hdelta \leq 1/n^{O(1)}$.
\end{proof:e:n}

\begin{theorem}%
    \thmlab{concentrated-variance}%
    For constant $c'$ sufficiently large, we have
    \begin{math}
        \Prob{\cardin{f_r - \Ex{f_r}} > c' n \log n } %
        \leq%
        {1}/{n^{O(1)}}.
    \end{math}
\end{theorem}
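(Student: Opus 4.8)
The plan is to apply the Kutin variant of McDiarmid's inequality (\lemref{kutin-mcdiarmid}) to the random variable $X = f_r$, viewing $\P$ as a point in the product space $\Omega = \prod_{1 \le i \le dn} \Omega_i$, where each $\Omega_i$ is the uniform distribution on $[0,1]$ governing one coordinate of one point. The number of ``coordinates'' here is $m = dn$. By \lemref{strongly-bounded-fr}, $f_r$ is strongly difference-bounded by $(b,c,\hdelta) = (n-1, O(\sqrt{n\log n}), 1/n^{O(1)})$, which is exactly the input that \lemref{kutin-mcdiarmid} needs. Note that changing a single coordinate of a single point is a special case of the ``move one point'' operation analyzed in \lemref{strongly-bounded-fr}, so the bounds there apply verbatim with $m = dn$ (absorbing the constant $d$ into the $O$-notation).

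The second step is to plug in the parameters. Set $\mu = \Ex{f_r}$, pick the deviation $\tau = c' n \log n$ for a large constant $c'$, and choose the free parameter $\alpha$ in \lemref{kutin-mcdiarmid} appropriately — say $\alpha = 1/n^{2}$ or any inverse polynomial large enough that the term $\frac{m}{\alpha}\hdelta$ is still $1/n^{O(1)}$; since $\hdelta \le 1/n^{C}$ for an arbitrarily large constant $C$ (by taking $c_d$ large in the $\eps$-sample application), this is easy to arrange. Then $c + b\alpha = O(\sqrt{n\log n}) + (n-1)\alpha = O(\sqrt{n\log n})$ once $\alpha \le 1/\sqrt{n}$, so $2m(c+b\alpha)^2 = O(n \cdot n\log n) = O(n^2 \log n)$. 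Hence the exponential term is $\exp\bigl(-\Theta\bigl(\frac{(c' n\log n)^2}{n^2\log n}\bigr)\bigr) = \exp\bigl(-\Theta(c'^2 \log n)\bigr) = 1/n^{\Theta(c'^2)}$, which is $\le 1/n^{O(1)}$ (and can be made to beat any fixed polynomial by taking $c'$ large). The second summand $\frac{m}{\alpha}\hdelta$ is $1/n^{O(1)}$ by the choice of $\alpha$. Adding the two gives $\Prob{\cardin{f_r - \mu} \ge c'n\log n} \le 1/n^{O(1)}$, which is the claim.

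The only real subtlety — and the step I would be most careful about — is the consistency of the failure probability $\hdelta$ across all the roles it plays. In \lemref{strongly-bounded-fr} the bad event is ``$\P$ (or $\P$ with one point removed) fails to be an $\eps$-sample for toroidal balls.'' For the Kutin framework we need a single bad set $B \subseteq \Omega$ with $\Prob{\omega \in B} = \hdelta$ such that outside $B$ every single-coordinate change moves $f_r$ by at most $c$. Taking $B$ to be the union, over the original configuration and all $n$ one-point-deleted subconfigurations, of the events ``this set is not an $\eps$-sample,'' a union bound over these $n+1$ events keeps $\Prob{B} \le 1/n^{O(1)}$ (again absorbed by choosing $c_d$ large enough). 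Everything else is just arithmetic; the main obstacle is really bookkeeping, ensuring the polynomial exponents line up so that the final bound genuinely is $1/n^{O(1)}$ for a suitable constant, which one secures by first fixing the desired output exponent and then choosing $c'$, $\alpha$, and $c_d$ in that order.
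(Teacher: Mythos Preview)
Your proposal is correct and follows essentially the same approach as the paper: apply \lemref{kutin-mcdiarmid} with the strong-difference bounds from \lemref{strongly-bounded-fr}, then choose $\alpha$ and $\tau$ so that both summands are $1/n^{O(1)}$. The paper's concrete choices are $m=n$ (one factor per point rather than per coordinate), $\alpha = 1/n$, and $\hdelta \le 1/(dn^3)$, but your choices $m=dn$, $\alpha=1/n^2$ work equally well since $d$ is constant; your extra paragraph making the bad set $B$ explicit is a welcome clarification that the paper leaves implicit in the proof of \lemref{strongly-bounded-fr}.
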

\begin{proof:e:n}{{\thmref{concentrated-variance}}}{c:v:proof}
    This follows readily by plugging the parameters of
    \lemref{strongly-bounded-fr} into \lemref{kutin-mcdiarmid}. Note
    that in our case, $m=n, b=n-1$, and $c=\sqrt{c_dn\log
       n}$. Choosing $\alpha={1}/{n}$ and $\hdelta \leq {1}/{dn^3}$
    (which can be ensured by making $c_d$ sufficiently large), and
    $\tau=\Omega(n \log n)$, the result follows by straightforward
    calculations from \lemref{kutin-mcdiarmid}.  For example, plugging
    $\tau = 100\sqrt{c_d }n\log n$ into \lemref{kutin-mcdiarmid}
    yields (for sufficiently large $c_d$):
    \begin{math}
        \displaystyle%
        \Prob{\cardin{f_r(\P)-\Ex{f_r(\P)}} > 100\sqrt{c_dd}n\log n}
    \end{math}
    \begin{math}
        \leq%
        2\bigl[\exp{\bigl(-\frac{10000c_d dn^2 \log(n)^2
           }{2dn(2\sqrt{c_d n \log n}+1)^2}\bigr)}+ \frac{dn^2}{dn^3}
        \bigr]%
        \leq%
        \frac{4}{n}.
    \end{math}
\end{proof:e:n}

\section{Warm-up: Computing the convex hull in linear time}
\seclab{convex:hull}

We present here an algorithm for computing the convex hull of $\P$ in
$O(n)$ time. This will serve as a warm-up as the tools here will be
used later on.  \myparagraph{Algorithm}

Given $\P$, we build $T$, a quadtree of height
$h=\lceil (\log_2 n)/{d} \rceil$ and insert the points $\P$ in $O(n)$
time to its leaves -- this can readily be done by storing the points
in the grid formed by the leafs using hashing (or just direct array
indexing).

The algorithm computes the convex hull via a bottom-up traversal of
the tree. It starts by computing the convex hull (potentially empty)
for each leaf of the quadtree using any brute force algorithm. For a
node $v$ at level $k$, the algorithm takes the computed convex-hulls
of its children, extracts all their vertices and stores it in a set $S$,
and computes the combined convex-hull of $S$, using off-the-shelf
algorithm \cite{c-ochaa-93} in $O( |S|\log |S| + |S|^{\floor{d/2}})$
time.

\myparagraph{Analysis} The algorithm correctness is immediate. We next
prove an inferior upper bound on the expected complexity of the random
convex-hull that holds with higher moments.
\begin{lemma}%
    \lemlab{convex-hull-moments}%
    Let $\cardin{\CHX{\P}}$ denote the number of vertices in the
    convex hull of $\P$. For any integer $\cPower > 0$, we have
    $\Ex{\cardin{\CHX{\P}}^\cPower} = O(\log^{O(\cPower d)}n)$ (the
    constant hidden by the $O$ depends on both $\cPower$ and $d$).
\end{lemma}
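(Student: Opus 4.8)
The plan is to bound the number of convex-hull vertices by a deterministic combinatorial structure that only depends on which grid cells are occupied, rather than on the points themselves. Let me think about what we can use. We know that the expected complexity of the convex hull of random points in $\Hd$ is $O(\log^{d-1} n)$ (cited as \cite{bkst-anmsv-78}), but we need higher moments, and presumably we want a self-contained argument using the $\eps$-net/sample machinery already introduced.

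Here's the approach. Build the quadtree $T$ of height $h = \ceil{(\log_2 n)/d}$, so the leaves form a grid $\grid$ with $\Theta(n)$ cells, each of side length roughly $n^{-1/d}$. A point $\p \in \P$ can be a vertex of $\CHX{\P}$ only if there is a closed halfspace whose bounding hyperplane passes (essentially) through $\p$ and contains no other point of $\P$ in its interior. I would say: if $\p$ lies in a grid cell $\cell$, and $\p$ is a hull vertex, then the "witness" halfspace, when shifted to have $\p$ on its boundary, must avoid all of $\P$; in particular, sweeping the halfspace inward by the cell diameter, we obtain a halfspace of small width that is $\P$-empty, i.e., contains $< \eps n$ points for $\eps = c_d (\ln n)/n$. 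By the $\eps$-net theorem (\thmref{eps:net}) applied to the range space of halfspaces (VC dimension $d+1$), $\P$ is an $\eps$-net with high probability, so \emph{every} halfspace containing $\geq \eps n$ points contains a point of $\P$. The contrapositive is that any $\P$-empty halfspace contains $< \eps n = O(\log n)$ points. The key geometric observation is then that the hull vertices are "charged" to boundary grid cells that are pierced by such nearly-empty halfspaces; a standard counting argument bounds the number of occupied such cells by $O(\log^{O(d)} n)$. For the $\cPower$-th moment: condition on the high-probability event $\mathcal{E}$ that $\P$ is an $\eps$-net; on $\mathcal{E}$, $\cardin{\CHX{\P}} \leq \log^{O(d)} n$ deterministically, contributing $(\log^{O(d)} n)^\cPower = \log^{O(\cPower d)} n$; on the complement $\overline{\mathcal{E}}$, which has probability $\leq 1/n^{O(1)}$ (and we can push the exponent as high as we like by enlarging $c_d$), we use the trivial bound $\cardin{\CHX{\P}} \leq n$, so this contributes $\leq n^\cPower / n^{O(1)} = O(1)$ for a large enough $\eps$-net failure exponent. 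Summing gives the claim.

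The step I expect to be the main obstacle is making the geometric "charging" argument fully rigorous: precisely why the occupied grid cells meeting $\P$-empty halfspaces number only $\polylog n$, uniformly over all directions. The cleanest route is probably to fix a polynomial-size $\epsA$-net of directions on $S^{d-1}$ (or use the fact that the relevant combinatorial directions come from facets of the hull, which are themselves bounded in number by induction on dimension), and for each direction count the occupied cells within distance $O(n^{-1/d})$ of the extreme supporting hyperplane in that direction — each such "slab" has volume $O((\log n)/n) \cdot O(1) = O((\log n)/n)$ by the $\eps$-net bound, hence is expected to meet $O(\log n)$ cells, but to get a \emph{deterministic} bound on the event $\mathcal{E}$ one instead argues that the slab contains $< \eps n$ points and that a cell is occupied only if it contains a point, so at most $\eps n = O(\log n)$ occupied cells per direction, and $O(n^{O(1)})$ relevant directions gives $O(\log^{O(1)} n)$ cells, each containing $O(\log n)$ points by \lemref{moments}, for a total of $\log^{O(d)} n$ candidate hull vertices. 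One must be slightly careful that a point near a corner of the cube can be a hull vertex "for free" (the cube corners themselves), but there are only $O(1)$ such regions, and within each the same slab argument applies after a coordinate reflection, exactly as in the proof of \lemref{vicinity:vol}.
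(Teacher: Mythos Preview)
Your high-level plan (show, on an $\eps$-net event, that all hull vertices lie in a region of volume $O(\polylog n/n)$, then invoke \lemref{moments}; handle the bad event with the trivial bound $n^\cPower$) is exactly right and matches the paper. The conditioning argument for the $\cPower$-th moment is fine.

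The gap is in your charging argument. You write ``at most $\eps n = O(\log n)$ occupied cells per direction, and $O(n^{O(1)})$ relevant directions gives $O(\log^{O(1)} n)$ cells'' --- but $O(\log n)$ per direction times $n^{O(1)}$ directions is $n^{O(1)}\log n$, not $\polylog n$. To salvage a direction-net argument you would need only $\polylog n$ directions, which amounts to already knowing the hull has $\polylog n$ facets, i.e., the thing you are trying to prove. The slab idea does not get around this: a hull vertex need not lie within $O(n^{-1/d})$ of the extreme hyperplane in any \emph{fixed} direction from your net, only in \emph{its own} supporting direction.

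The paper sidesteps this entirely by switching from halfspaces to axis-aligned boxes. The key geometric observation you are missing: if $h$ is a supporting hyperplane of $\CHX{\P}$ at $\p$, then the open halfspace $h^+$ avoiding $\P$ contains some corner $\q$ of $\Hd$ (take $\q$ to be the corner maximizing the linear functional defining $h^+$). With this choice of $\q$, the entire axis-parallel box $\bbY{\p}{\q}$ lies in the closed halfspace $\overline{h^+}$, hence contains no point of $\P$ other than $\p$. Now apply the $\eps$-net theorem to \emph{axis-aligned boxes} (\VC dimension $2d$) with $\eps = \vl$: the box must have volume $\leq \vl$, i.e., $\p \in \vicX{\q}$. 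Thus every hull vertex lies in $\Xi = \bigcup_{\q \text{ corner}} \vicX{\q}$, and $\volX{\Xi} = O(\log^{d} n / n)$ by \lemref{vicinity:vol}. Then \lemref{moments} gives $\Ex{|\Xi \cap \P|^\cPower} = O(\log^{d\cPower} n)$ directly --- no direction net, no slab counting, just $2^d$ fixed regions.
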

\begin{proof:e:n}{{\Xlemref{convex-hull-moments}}}{c:h:m:proof}
    Let $\p$ be a vertex of the convex-hull $\CHX{\P}$, and consider a
    tangent (hyper)plane $h$ to $\CHX{\P}$ that passes through
    $\p$. The plane $h$ separates $\CHX{\P}$ from one of the vertices
    of the $\Hd$, say $\q$. Let $R$ be the axis parallel box with $\p$
    and $\q$ as antipodal vertices.

    The \VC dimension of axis aligned boxes is $2d$; see
    \cite{h-gaa-11}. By the $\eps$-net theorem, a sample of size
    $O( d \eps^{-1} \log n)$ is an $\eps$-net for axis aligned boxes,
    with probability $\geq 1- 1/n^{O(d)}$. Setting
    $\eps = \vl = c_d (\log n) / n$, it follows that $R$ contains a
    point of $\P$ with high probability. It follows that all the
    vertices of $\CHX{\P}$ are in the vicinity of some vertex of
    $\Hd$. Let $\Xi$ be the union of the vicinities of the vertices of
    $\Hd$. By \lemref{vicinity:vol}, we have that
    $\alpha = \volX{\Xi} = O( (\log n)^d / n)$. Applying
    \lemref{moments} to $|\Xi \cap \P|^\cPower$ now implies the claim.
\end{proof:e:n}

\begin{lemma}%
    \lemlab{runtime:convex:hull:quad}%
    The above algorithm computes $\CHX{\P}$ is $O(n)$ expected time.
\end{lemma}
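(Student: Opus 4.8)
The plan is to amortize the running time across the $h+1$ levels of the quadtree $\Tree$. The crucial facts are: a level-$k$ cell is an affine copy of $\Hd$ that, in expectation, holds only $\mu_k := n/2^{dk}$ points of $\P$; and a random set of $m$ points in a cube has a convex hull with $O(\polylog m)$ vertices --- not just in expectation, but in every fixed moment. Reading the levels from the leaves upward, the number of cells drops geometrically while these polylogarithmic factors grow only polynomially, so the total expected work forms a convergent series summing to $O(n)$.

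The non-geometric parts are quick. Since $h = \ceil{(\log_2 n)/d}$, we have $2^{dh} = \Theta(n)$, and $\Tree$ has $\sum_{k=0}^{h} 2^{dk} = O(2^{dh}) = O(n)$ nodes. Computing the cell of a point is an $O(1)$ operation, so bucketing $\P$ into the $2^{dh}$ leaf cells (using hashing or an array indexed by cell) takes $O(n)$ time, after which $\Tree$ is assembled bottom-up in time linear in its size. For the brute-force hulls at the leaves, let $c_0 = c_0(d)$ be the (constant) exponent of the routine used, so the work in a leaf cell $\cell$ is $O(\cardin{\P \cap \cell}^{c_0})$. Each leaf cell has volume $2^{-dh} \le 1/n$; enclosing it in a set $\cell' \supseteq \cell$ of volume exactly $1/n$ and invoking \lemref{moments} with $\alpha = 1/n$ gives $\Ex{\cardin{\P \cap \cell}^{c_0}} \le \Ex{\cardin{\P \cap \cell'}^{c_0}} = O(1)$. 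Summing over the $\Theta(n)$ leaves, the expected leaf work is $O(n)$.

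Now consider an internal node $v$ at level $k$, with child cells $\cell_u$ at level $k+1$. Since $\cardin{S_v} \le \sum_u \cardin{\CHX{\P \cap \cell_u}}$ over the $2^d$ children, and the work at $v$ is $O(\cardin{S_v}\log\cardin{S_v} + \cardin{S_v}^{\floor{d/2}}) = O(\cardin{S_v}^{c'} + 1)$ with $c' := \floor{d/2} + 1$, the power-mean inequality gives $\Ex{\cardin{S_v}^{c'}} = O\bigl(\sum_u \Ex{\cardin{\CHX{\P \cap \cell_u}}^{c'}}\bigr)$. The key per-node estimate is
\[
    \Ex{\cardin{\CHX{\P \cap \cell_u}}^{c'}} = O\!\left(\sqrt{\mu_{k+1} + 2}\,\right), \qquad \mu_{k+1} = n/2^{d(k+1)} .
\]
To prove it, condition on $\cardin{\P \cap \cell_u} = m$: the points of $\P$ landing in $\cell_u$ are then $m$ i.i.d.\ uniform points in $\cell_u$, which is an affine image of $\Hd$, and hull size is affine-invariant. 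The argument proving \lemref{convex-hull-moments} applies with $m$ in place of $n$ (alternatively, one invokes the classical bound on the number of maximal points of $m$ random points, \cite{bkst-anmsv-78}), so $\ExCond{\cardin{\CHX{\P \cap \cell_u}}^{c'}}{\cardin{\P \cap \cell_u} = m} = O(\log^{O(c'd)}(m+2))$. Since $\log^q(x+2) = O(\sqrt{x+2})$ for any fixed $q$ and $\sqrt{\cdot}$ is concave, taking expectation over $m$ and applying Jensen's inequality yields $\Ex{\cardin{\CHX{\P \cap \cell_u}}^{c'}} = O\bigl(\Ex{\sqrt{\cardin{\P \cap \cell_u} + 2}}\bigr) = O\bigl(\sqrt{\mu_{k+1} + 2}\,\bigr)$. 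As $\mu_{k+1} \le \mu_k$ and $2^d$ is a constant, $\Ex{\cardin{S_v}^{c'}} = O(\sqrt{\mu_k + 2})$, and the expected work at $v$ is $O(\sqrt{\mu_k + 2})$.

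Summing over all internal nodes, and using $\sqrt{\mu_k + 2} \le \sqrt{\mu_k} + \sqrt 2 = \sqrt{n}\,2^{-dk/2} + \sqrt 2$,
\[
    \sum_{k=0}^{h-1} 2^{dk} \cdot O\!\left(\sqrt{\mu_k + 2}\,\right)
    = O\!\left(\sqrt{n} \sum_{k=0}^{h-1} 2^{dk/2}\right) + O\!\left(\sum_{k=0}^{h-1} 2^{dk}\right)
    = O\!\left(\sqrt{n}\cdot 2^{dh/2}\right) + O(n)
    = O(n),
\]
since $2^{dh} = \Theta(n)$. Together with the $O(n)$ bounds for building $\Tree$ and for the leaves, the expected running time is $O(n)$. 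I expect the main obstacle to be precisely the per-node estimate: the trivial bound $\cardin{\CHX{\P \cap \cell_u}} \le \cardin{\P \cap \cell_u}$ is too weak at the middle levels of $\Tree$ (it would lose polylogarithmic factors there), so one genuinely needs the polylog-size bound on random convex hulls applied at the local scale of each subcell --- and one must then verify that, read from the leaves upward, the series $\sum_k 2^{dk}\sqrt{\mu_k}$ converges.
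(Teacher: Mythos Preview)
Your proof is correct and follows the same high-level strategy as the paper: apply the moment bound of \lemref{convex-hull-moments} to the points in each quadtree cell, then aggregate over the tree. The difference is only in how the aggregation is carried out. The paper writes the one-line recurrence
\[
    T(n) = \log^{O(d^2)} n + \sum_i T(n_i),
\]
using Chernoff to ensure $n_i \leq (7/8)n$, and simply asserts that its solution is $O(n)$. You instead sum level by level and, to make that sum close cleanly, trade the $\polylog(m)$ moment bound for the weaker but concave bound $O(\sqrt{m+2})$, then apply Jensen to pass to $\sqrt{\mu_k+2}$ and obtain a geometric series. Your route is a bit longer but has the virtue of being fully explicit about why the total is $O(n)$; the paper's route is shorter but leaves the recurrence-solving step to the reader.
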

\begin{proof:e:n}{{\Xlemref{runtime:convex:hull:quad}}}{r:c:q}
    Consider the root of the quadtree -- it has $2^d$ children, and
    let $\P_i$ be the set of points of $\P$ stored in the $i$\th
    child. Let $n_i = |\P_i|$ and $m_i = |\VX{\CHX{\P_i}}|$. We have
    that $\sum_i n_i = n$, and $\Ex{n_i} = n/2^d$. In particular,
    using Chernoff's inequality we have that $n_ i \leq (7/8)n$ with
    high probability. Similarly, we have that
    $\Ex{m_i^\cPower} = O(\log^{O(\cPower d)}n)$ by
    \lemref{convex-hull-moments}.  Let $m = \sum_i m_i$, and observe
    that computing the convex-hull at the top most level takes
    $O( m\log m + m^{\floor{d/2}}) = O(m^d) = O(2^d \sum_{i} m_i^d)$.
    Thus, ignoring the construction time of the quadtree itself, we
    have the recurrence
    \begin{equation*}
        T(n) = O\pth{
           \Ex{ \smash{
                 \textstyle \sum\nolimits_i m_i^d
              } \bigr. }
        }
        + \sum_i T(n_i)
        =%
        \log^{O(d^2)}n
        + \sum_i T(n_i),
    \end{equation*}
    and the solution to this recurrence is $O(n)$.
\end{proof:e:n}

\section{Complexity of the Delaunay triangulation %
   of random points}
\seclab{del:triangulation}

Here we show that the Delaunay triangulation of $\P$ has linear
complexity in expectation.

\myparagraph{Background on Delaunay triangulations} %
A simplicial complex $\DT$ over a set $\P$ is a set system with the
(hyper) edges being subsets of $\P$, such that for any
$\sigma, \sigma' \in \DT$, we have that $\sigma \cap \sigma' \in \DT$.
An edge of $\DT$ is a \emphi{simplex}. A simplex is \emphi{$k$
   dimensional} if the affine space its points span is $k$
dimensional.  Simplices of dimension $0,1$ and $2$ are
\emphi{vertices}, \emphi{edges} and (two dimensional) \emphi{faces},
respectively.

For a point $\p \in \Re^d$, and a radius $r > 0$, let $\ballY{\p}{r}$
denote the open \emphi{ball} of radius $r$ centered at $\p$.  For any
points $\p_1, \ldots, \p_k \in \Re^d$, let
$\pencilX{ \p_1, \ldots, \p_k }$ denote the \emphi{pencil} of
$\p_1, \ldots, \p_k$: the set of all \emph{open} balls $\ball$ in
$\Re^d$, such that their boundary sphere passes through
$\p_1, \ldots, \p_k$. If $k=d+1$, and the points are in general
position, the pencil is a single ball $\circumX{\p_1, \ldots, \p_k}$
bounded by the \emphi{circumscribed sphere} of these points.

The \emphi{Delaunay triangulation} $\DT=\DT(\P)$ of $\P$ is a
simplicial complex, where $\simplex \in \DT$ $\IFF$ there is a ball
$\ball \in \pencilX{ \simplex }$ such that $\ball \cap \P =\emptyset$.
The Delaunay triangulation has the property that if the set of points
$\P$ is a random set then the points are in general position with
probability $1$ (i.e., \emphi{almost surely}), and it is then uniquely
defined.

\begin{figure}[t]
    \begin{minipage}{0.4\linewidth}
        \centering%
        \includegraphics[scale=0.65]{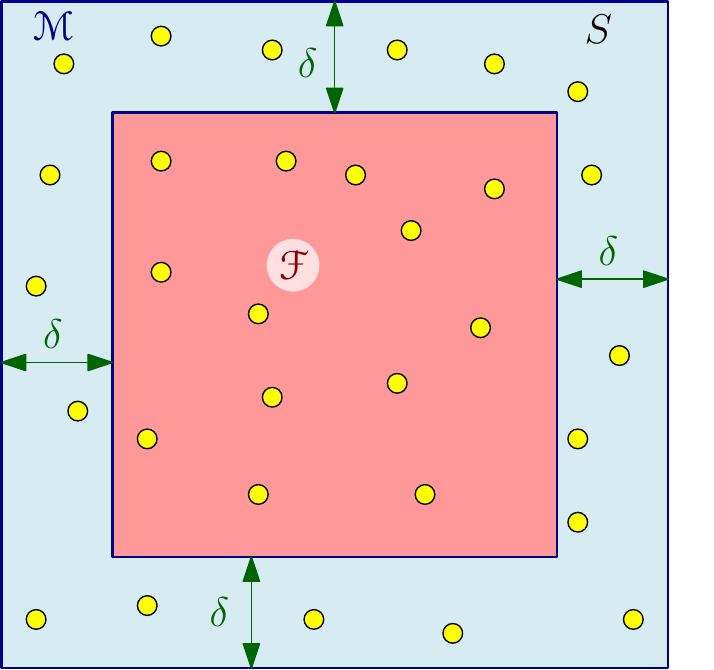}%
        \captionof{figure}{The \emphi{inner fortress} $\IR$ in
           red. The \emphi{moat} $\Moat$ in light blue. Here
           $\vlr=\vlrV$.}
        \figlab{critical-strip}
    \end{minipage}
    \hfill
    \begin{minipage}{0.4\linewidth}
        \centering \includegraphics{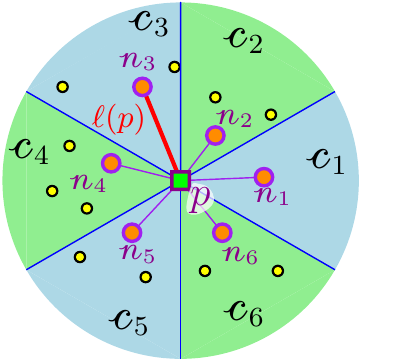}%
        \captionof{figure}{Definitions of
           $\nnX{i} = \nnY{\cone_i}{\p}$ and $\reachX{\p}$.}
        \figlab{six-cones-and-definitions}
    \end{minipage}
\end{figure}

\subsection{A linear bound in the interior of the hypercube }
\seclab{d:t:fortres}

Let $\IR = [\vlr, 1-\vlr]^d$ be the \emphi{fortress} of $\Hd$, and
$\Moat = \Hd \setminus \IR$ be its \emphi{moat}, where $\vlr = \vlrV$.
See \Eqref{delta:val} and \figref{critical-strip}.

\begin{defn}
    Consider a ray emanating from a point $\q$ in a direction $v$ in
    $\Re^d$.  A \emphi{cone} of angle $\alpha$ is the set of all
    points $\p \in \Re^d$, such that the angle between $\p - \q$ and
    $v$ is at most $\alpha$. The point $\q$ is the \emphi{apex} of the
    cone.
\end{defn}

One can cover space around a point with $O_d(1)$ cones, with angle
$\pi/12$, to cover all of $\Re^d$

\begin{lemma}[\cite{dgl-ptpr-96}]
    \lemlab{filling-space-rd-cones}%
    For any point $\p \in \Re^d$, one can construct a set
    $\ConesX{\p}$ of $2^{O(d)}$ cones with apex $\pa$ and angle
    $\pi/12$, such that $\cup \ConesX{\p} =\Re^d$.
\end{lemma}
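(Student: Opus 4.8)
The plan is to dualize the statement to a covering of the unit sphere by spherical caps, and then to build the caps by a greedy packing argument. Fix the apex $\pa = \p$, and let $\mathbb{S}^{d-1}$ be the unit sphere centered at $\p$; for a unit vector $v$ and an angle $\theta \in (0,\pi)$, let $\mathrm{cap}(v,\theta) \subseteq \mathbb{S}^{d-1}$ be the set of unit vectors making an angle at most $\theta$ with $v$. A cone with apex $\p$, axis $v$, and half-angle $\pi/12$ is precisely $\brc{\p} \cup \Set{x \in \Re^d}{(x-\p)/\normX{x-\p} \in \mathrm{cap}(v,\pi/12)}$. Hence a family of such cones covers $\Re^d$ if and only if the corresponding family of caps of angular radius $\pi/12$ covers $\mathbb{S}^{d-1}$ -- the apex $\p$ lies in every cone, so it is covered for free. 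So it suffices to cover $\mathbb{S}^{d-1}$ by $2^{O(d)}$ caps of angular radius $\pi/12$.

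To produce the caps, I would take a maximal set $N \subseteq \mathbb{S}^{d-1}$ any two distinct elements of which make an angle strictly greater than $\pi/12$ (such an $N$ exists by a greedy/Zorn argument, and is finite by the count below). By maximality, no unit vector can be adjoined to $N$ without violating this separation, so every $u \in \mathbb{S}^{d-1}$ makes an angle at most $\pi/12$ with some $v \in N$; that is, $\brc{\mathrm{cap}(v,\pi/12)}_{v \in N}$ covers $\mathbb{S}^{d-1}$. We then set $\ConesX{\p}$ to be the set of cones with apex $\p$, axes the elements of $N$, and half-angle $\pi/12$, which by the previous paragraph covers $\Re^d$.

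It remains to bound $\cardin{N}$. By the triangle inequality for angular distance (a metric on $\mathbb{S}^{d-1}$), the smaller caps $\mathrm{cap}(v,\pi/24)$, $v \in N$, are pairwise disjoint, so $\cardin{N} \le 1/\sigma_{0}$, where $\sigma$ is the normalized surface measure on $\mathbb{S}^{d-1}$ and $\sigma_{0} = \sigma\bigl(\mathrm{cap}(\cdot,\pi/24)\bigr)$. Writing a cap of angular radius $\theta$ as a union of parallel slices, $\sigma\bigl(\mathrm{cap}(\cdot,\theta)\bigr) = \bigl(\int_0^\theta \sin^{d-2}t\,dt\bigr)\big/\bigl(\int_0^\pi \sin^{d-2}t\,dt\bigr)$; bounding the denominator by $\pi$ and the numerator from below by $\int_{\theta/2}^{\theta}\sin^{d-2}t\,dt \ge (\theta/2)\sin^{d-2}(\theta/2) \ge (\theta/2)\sin^{d}(\theta/2)$ (monotonicity of $\sin$ on $[0,\pi/2]$, then $\sin(\theta/2) < 1$) gives, for $\theta = \pi/24 \le \pi/2$, the estimate $\sigma_{0} \ge \tfrac{1}{48}\sin^{d}(\pi/48)$. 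Since $\sin(\pi/48)$ is a positive absolute constant, $\cardin{N} \le 48\,\bigl(1/\sin(\pi/48)\bigr)^{d} = 2^{O(d)}$, as claimed.

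\textbf{Main obstacle.} There is no real difficulty here -- this is a classical covering-number fact (cf.\ \cite{dgl-ptpr-96}), recorded only for completeness. The one point worth a moment's care is extracting the bound $2^{O(d)}$ rather than $2^{O(d\log d)}$: a naive construction, such as radially projecting an axis-aligned grid on the facets of a bounding cube, needs roughly $\sqrt{d}$ subdivisions per coordinate and so produces $2^{O(d\log d)}$ cones, whereas the disjoint-$(\pi/24)$-caps packing argument above, together with the elementary spherical-cap volume estimate, yields the sharper count. (For the constant-dimension applications in this paper either bound would suffice.)
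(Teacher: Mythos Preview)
Your argument is correct: the reduction to covering $\mathbb{S}^{d-1}$ by caps of angular radius $\pi/12$, the maximal $(\pi/12)$-separated set giving both the covering and (via disjoint half-radius caps) the packing bound, and the elementary lower bound on cap measure all go through and yield $2^{O(d)}$ cones. The paper itself gives no proof of this lemma---it is simply quoted from \cite{dgl-ptpr-96}---so there is nothing to compare against beyond noting that your packing/covering argument is the standard one and matches what that reference does. Your remark about the naive cube-facet-grid construction giving only $2^{O(d\log d)}$ is also apt, though as you say either bound suffices here since $d$ is treated as a constant throughout.
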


The following shows that all these cones are not empty, if the apex
$\p$ is in the fortress.

\begin{lemma}%
    \lemlab{one-point-in-each-cone}%
    For any $\p \in \IR$, and any cone $\cone \in \ConesX{\p}$, we
    have $\Prob{ \cone \cap (\P -\p) = \emptyset} < 1/n^{O(1)}$.
\end{lemma}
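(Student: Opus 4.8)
The plan is to reduce the lemma to a single volume estimate. I will exhibit a Euclidean ball $\ball$ that is contained both in the cone $\cone$ and in $[0,1]^d$, with $\volX{\ball} = \Omega(\vl) = \Omega\bigl(c_d(\log n)/n\bigr)$. Once we have such a ball, the $n-1$ points of $\P - \p$ are independent and uniformly distributed on $[0,1]^d$, so the probability that none of them falls inside $\ball$ is $(1-\volX{\ball})^{\,n-1} \le e^{-\volX{\ball}(n-1)} = n^{-\Omega(c_d)}$; taking $c_d$ a large enough constant (depending only on $d$) makes this smaller than $1/n^{c}$ for any prescribed $c$. Since $\cone \cap (\P - \p) = \emptyset$ implies $\ball \cap (\P - \p) = \emptyset$, this is exactly the bound we want. (Alternatively, one can invoke \thmrefY{eps:net}{the $\eps$-net theorem} for the range space of balls with $\eps = \volX{\ball}$, but the direct estimate is cleaner here.)

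It remains to construct $\ball$. Let $v$ be the axis direction of $\cone$, and recall its angular radius is $\pi/12$; also recall that since $\p \in \IR = [\vlr,1-\vlr]^d$, the distance from $\p$ to $\partial[0,1]^d$ is at least $\vlr$. Set $t := \vlr/2$, let $\cen := \p + t v$, and take $\ball := \ballY{\cen}{\rho}$ with $\rho := t\sin(\pi/12)$. There are two things to check. First, $\ball \subseteq \cone$: the ball $\ballY{\cen}{\rho}$ is seen from $\p$ under a cone of angular radius $\arcsin\!\bigl(\rho/\normX{\cen-\p}\bigr) = \arcsin(\sin(\pi/12)) = \pi/12$, so it lies inside $\cone$. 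Second, $\ball \subseteq [0,1]^d$: since $\normX{\cen - \p} = t < \vlr$, the center $\cen$ lies in $[0,1]^d$ with distance at least $\vlr - t = \vlr/2$ to $\partial[0,1]^d$, and $\rho = (\vlr/2)\sin(\pi/12) < \vlr/2$, so the ball does not reach the boundary. Consequently $\volX{\ball}$ is the volume of a $d$-dimensional ball of radius $(\vlr/2)\sin(\pi/12)$ — a positive constant (depending only on $d$) times $\vlr^{\,d} = \vl = c_d(\log n)/n$ — as needed in the first paragraph.

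Assembling the two pieces proves the claim:
\[
  \Prob{\cone \cap (\P - \p) = \emptyset}
  \;\le\; \Prob{\ball \cap (\P - \p) = \emptyset}
  \;=\; (1-\volX{\ball})^{\,n-1}
  \;\le\; e^{-\volX{\ball}\,(n-1)}
  \;=\; n^{-\Omega(c_d)}
  \;\le\; 1/n^{O(1)} .
\]
The step that requires genuine care is the geometric construction: the auxiliary ball must sit close enough to the apex that it does not poke out of the hypercube, yet far enough from the apex that, as seen from $\p$, its angular footprint exhausts the full opening of the cone. The choice $t = \vlr/2$ threads this needle, and it works precisely because the cones are narrow, $\sin(\pi/12) < 1$; this is also the structural reason one needs $\p \in \IR$ rather than merely $\p \in [0,1]^d$ — for an apex on the boundary of a wide cone, no ball inside the hypercube could fill it.
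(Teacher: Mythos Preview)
Your proof is correct and follows essentially the same route as the paper: both arguments lower-bound $\volX{\cone \cap [0,1]^d}$ by $\Omega(\vlr^d) = \Omega(\vl)$ and then use the direct independence estimate $(1-\alpha)^{n-1} \le e^{-\alpha(n-1)} = 1/n^{O(1)}$. The paper simply asserts $\volX{\cone \cap [0,1]^d} = \Omega(\vlr^d)$ (since $\cone \cap \ballY{\p}{\vlr} \subseteq [0,1]^d$ and the cone has a fixed solid angle), whereas you make this explicit by exhibiting a concrete ball inside $\cone \cap [0,1]^d$; your construction is a perfectly valid way to certify the same volume bound.
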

\begin{proof:e:n}{{\Xlemref{one-point-in-each-cone}}}{o:p:e:c:proof}
    Since $\p\in \IR$, it is at a distance of at least $\vlr$ from the
    boundary of $[0,1]^d$.  Thus, $\alpha = \volX{\cone \cap [0,1]^d}$
    $= \Omega(\vlr^d) = \Omega(\vl) = \Omega \bigl( (\log n) / n
    \bigr)$, by \Eqref{delta:val}. This implies that the probability
    that $\cone$ does not contain any of the points of $\P - \p$ is at
    most
    \begin{equation*}
        (1-\alpha)^{n-1}%
        \leq%
        \exp \pth{\bigl. -\alpha (n-1) }
        \leq
        \exp \pth{ - c \ln n }
        =
        \frac{1}{n^{O(1)}},
    \end{equation*}
    where $c$ is a constant that can be made to be arbitrarily large
    by increasing the value of $c_d$. The result now follows by
    applying the union bound to all the cones in $\ConesX{\p}$.
\end{proof:e:n}

\begin{defn}
    \deflab{reach}%
    For a point $\p \in \P \cap \IR$ and a cone
    $\cone \in \ConesX{\p}$, let $\nnY{\cone}{\p}$ denote the nearest
    neighbor to $\p$ in $\cone \cap (\P - \p)$. The \emphi{reach} of
    $\p$ is
    \begin{math}
        \reachX{\p} = \max_{\cone \in \ConesX{\p}}
        \dY{\p}{\nnY{\cone}{\p}},
    \end{math}
    see \figref{six-cones-and-definitions}.
\end{defn}

The following bounds (in expectation) $\ell(\p)$ and the distance
between $\p$ and $\nnY{\cone}{\p}$ for $\cone\in \ConesX{\p}$.

\begin{lemma}\RefProofInAppendix{n:n:cone:decay:proof}
    \lemlab{n:n:cone:decay}%
    For any point $\p \in \P \cap \IR$, a cone $\cone \in \Cones(\p)$
    and $t\leq \vlr$, we have:
    \begin{compactenumI}
        \item
        $\Prob{\dY{\p}{\nnY{\cone}{\p}}>t} \leq \exp(- c\, t^dn )$,
        where $c$ is constant that depends only on the dimension.

        \item
        $\Prob{\reachX{\p}>t}\leq f(t,n) = \exp\bigl(O(d) -c \,t^d n
        \bigr)$,

        \item $\Ex{\reachX{\p}} = \Theta(1/\sqrt[d]{n})$, and

        \item the reach of all the points of $\P \cap \IR$ is bounded
        by $\vlr$, with probability $1/n^{O(d)}$.
    \end{compactenumI}
\end{lemma}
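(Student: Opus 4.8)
The plan is to prove the four parts in the order listed, since each builds on the previous one. For part (I), fix a point $\p \in \P \cap \IR$ and a cone $\cone \in \ConesX{\p}$ of angle $\pi/12$. Because $\p$ is in the fortress, it lies at distance at least $\vlr \geq t$ from the boundary of $\Hd$, so the truncated cone $\cone \cap \ballY{\p}{t} \cap \Hd$ captures a full solid-angle sector of radius $t$ and hence has volume $\Omega(t^d)$, with the constant depending only on $d$ (through the solid angle of a $\pi/12$ cone in $\Re^d$). Conditioning on the location of $\p$, the remaining $n-1$ points are independent and uniform, so $\Prob{\dY{\p}{\nnY{\cone}{\p}} > t} = (1 - \Omega(t^d))^{n-1} \leq \exp(-c\,t^d n)$ for a suitable constant $c = c(d)$; this uses $1-x \leq e^{-x}$ and absorbs the $-1$ in the exponent. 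Part (II) is then immediate from the union bound over the $2^{O(d)}$ cones of $\ConesX{\p}$: $\Prob{\reachX{\p} > t} \leq 2^{O(d)} \exp(-c\,t^d n) = \exp(O(d) - c\,t^d n)$, which is the stated $f(t,n)$.

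For part (III), I would integrate the tail bound. The lower bound $\Ex{\reachX{\p}} = \Omega(n^{-1/d})$ follows because $\reachX{\p} \geq \dY{\p}{\nnY{\cone}{\p}}$ for any single cone, and the nearest neighbor within one cone is already $\Omega(n^{-1/d})$ in expectation (by the standard fact that $\cpX{\P}$-type quantities are $\Omega(n^{-1/d})$, or directly: $\Prob{\dY{\p}{\nnY{\cone}{\p}} < t}$ is small for $t \ll n^{-1/d}$). For the upper bound, write $\Ex{\reachX{\p}} = \int_0^\infty \Prob{\reachX{\p} > t}\,dt$. Split at $t_0 = \Theta(n^{-1/d})$ chosen so that $c\,t_0^d n$ dominates the $O(d)$ term: the contribution from $[0,t_0]$ is at most $t_0 = O(n^{-1/d})$, and the contribution from $[t_0, \vlr]$ is bounded by $\int_{t_0}^\infty \exp(O(d) - c\,t^d n)\,dt$, which after substitution $u = t^d n$ is $O(n^{-1/d})$ as well (the integral of $\exp(O(d) - cu) u^{1/d - 1}$ converges). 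One must be slightly careful that the tail bound in (II) is only asserted for $t \leq \vlr$; but $\vlr = (c_d \ln n / n)^{1/d}$ is already large enough that $\exp(O(d) - c\,\vlr^d n) = n^{-\Omega(c_d)}$ is negligible, so truncating the integral at $\vlr$ and using the trivial bound $\reachX{\p} \leq \diamX{\Hd} = \sqrt d$ on the (exponentially unlikely) remainder contributes only $\sqrt d \cdot n^{-\Omega(c_d)} = o(n^{-1/d})$. Hence $\Ex{\reachX{\p}} = \Theta(n^{-1/d})$.

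Part (IV) is a union bound over all points of $\P \cap \IR$. Plug $t = \vlr = \vlrV$ into part (II): $\Prob{\reachX{\p} > \vlr} \leq \exp(O(d) - c\,\vlr^d n) = \exp(O(d) - c\, c_d \ln n) = n^{-\Omega(c\,c_d)}$, which is at most $1/n^{O(d)+1}$ once $c_d$ is chosen large enough (as it is, by the standing assumption that $c_d$ is a sufficiently large constant depending on $d$). Taking the union bound over the at most $n$ points of $\P \cap \IR$ gives $\Prob{\exists \p \in \P \cap \IR : \reachX{\p} > \vlr} \leq n \cdot n^{-\Omega(d)} = 1/n^{O(d)}$, as claimed. (There is a subtlety: the event in part (II) was stated for a fixed $\p \in \P \cap \IR$, and conditioning on $\p \in \P$ interacts with which other points remain; but since the points are i.i.d., conditioning on the value of $\p_i$ leaves the other $n-1$ points uniform and independent, so the per-point bound is valid, and a union bound over the $n$ indices $i$ — restricted to the event $\p_i \in \IR$ — goes through.)

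I expect the only real obstacle to be part (III), specifically getting the matching lower bound $\Ex{\reachX{\p}} = \Omega(n^{-1/d})$ cleanly: one needs that a single cone's nearest neighbor is not typically much closer than $n^{-1/d}$, i.e. $\Prob{\dY{\p}{\nnY{\cone}{\p}} < t} \leq O(t^d n)$ for small $t$, which follows from a union bound / Markov on the expected number of points of $\P$ in the truncated cone of radius $t$ (volume $O(t^d)$, so expected count $O(t^d n)$), and then $\Ex{\dY{\p}{\nnY{\cone}{\p}}} \geq \int_0^{t_1}(1 - O(t^d n))\,dt = \Omega(n^{-1/d})$ for an appropriate $t_1 = \Theta(n^{-1/d})$. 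Everything else is a routine tail-integration and union-bound exercise.
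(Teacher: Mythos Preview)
Your proposal is correct and follows essentially the same route as the paper: part (I) via the volume $\Omega(t^d)$ of the truncated cone and independence of the remaining $n-1$ points, part (II) by union bound over the $2^{O(d)}$ cones, part (III) by tail integration for the upper bound, and part (IV) by plugging $t=\vlr$ and a union bound over the $n$ points. The only cosmetic difference is in the lower bound of (III): the paper observes directly that the full ball $\ballY{\p}{1/(2\sqrt[d]{n})}$ is empty of other points with constant probability (hence $\reachX{\p}\geq 1/(2\sqrt[d]{n})$ on that event), whereas you lower-bound via $\reachX{\p}\geq \dY{\p}{\nnY{\cone}{\p}}$ for a single cone and integrate; both arguments are equally short and yield the same $\Omega(n^{-1/d})$.
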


\begin{proof:e}{{\Xlemref{n:n:cone:decay}}}{n:n:cone:decay:proof}
    (I) If $\dY{\p}{\nnY{\cone}{\p}}>t$ then the set
    $R = \cone \cap \ballY{\p}{t}$ contains no points of $\P$.  The
    volume of $\cone \cap \ballY{\p}{t}$, for $t\leq \vlr$, is
    $\alpha = \Omega(t^d)$. In this case, all the points of $\P - \p$
    must avoid $R$. This happens with probability at most
    $(1-\alpha)^{n-1} \leq \exp\bigl(-\alpha(n-1) \bigr) \leq
    \exp\bigl(- c t^dn\bigr)$.

    (II) By the union bound, and \lemref{filling-space-rd-cones}, we
    have
    \begin{math}
        \Bigl.  \Prob{\reachX{\p}>t}%
        \leq%
        \sum_{\cone \in \Cones(\p)} \Prob{\nnY{\cone}{\p}>t}%
        \leq%
        f(t,n).%
    \end{math}

    (III) Observe that $\ballY{\p}{1/2\sqrt[d]{n}}$ contains no other
    points of $\P - \p$ with constant probability. This implies that
    $\Ex{\reachX{\p}} = \Omega( 1/ \sqrt[d]{n})$. The upper bound
    $\Ex{\reachX{\p}} = O( 1/ \sqrt[d]{n})$ follows by the above
    exponential decay, as a straightforward calculation shows.

    (IV) Setting $t = \vlr$, and using the union bound implies this
    part.
\end{proof:e}

\begin{definition}
    \deflab{ward}%
    For $\p\in \P\cap \IR$, the \emphi{influence} of $\p$ is
    \begin{math}
        \wardX{\p}=\Set{\q\in \P }{\dY{\p}{\q}\leq 2\reachX{\p}}
    \end{math}
\end{definition}

\begin{figure}[h]
    \vspace{-0.6cm}%
    \centering%
    \phantom{}\hfill%
    {\includegraphics{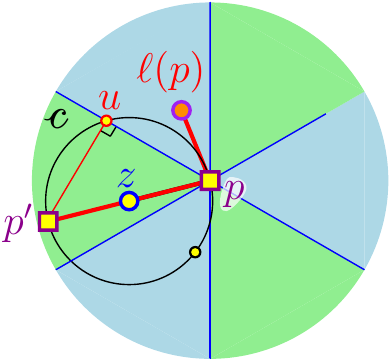}}%
    \hfill%
    \includegraphics[page=2]{figs/cone_containment}%
    \hfill\phantom{}%
    \caption{Sketch of proof of \lemref{internal-2m-proof} }
    \figlab{proof-2m-sketch}
\end{figure}

Importantly, all the Delaunay edges adjacent to a point $\p \in \IR$
are contained in $\p$'s region of influence.  Hence, locally, it is
sufficient to only consider points in the influence when computing the
Delaunay triangulation.
\begin{lemma}%
    \lemlab{internal-2m-proof}%
    For any point $\p \in \P\cap \IR$, If $\p\q\in \DT$, then
    $\q\in \wardX{\p}$.
\end{lemma}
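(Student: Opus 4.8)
The plan is to argue by contradiction. Suppose $\p\in\P\cap\IR$ and $\p\q\in\DT$, but $\dY{\p}{\q}>2\reachX{\p}$. By the definition of the Delaunay triangulation there is an \emph{open} ball $\ball$, with center $c$ and radius $\rho=\dY{c}{\p}=\dY{c}{\q}$, whose boundary sphere passes through $\p$ and $\q$ and with $\ball\cap\P=\emptyset$. First I would record two easy facts about $\ball$: by the triangle inequality $\dY{\p}{\q}\leq\dY{\p}{c}+\dY{c}{\q}=2\rho$, so $\rho\geq\dY{\p}{\q}/2$; and $c\neq\p$, since otherwise $\rho=0$ and $\ball$ cannot meet two distinct points. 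Combining with the contradiction hypothesis gives $\rho\geq\dY{\p}{\q}/2>\reachX{\p}$.

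Next I would use the cone decomposition to produce a point of $\P$ forced to lie inside $\ball$. Since $\cup\ConesX{\p}=\Re^d$ by \lemref{filling-space-rd-cones}, the center $c$ lies in some cone $\cone\in\ConesX{\p}$, which has apex $\p$ and half-angle $\pi/12$. Let $u=\nnY{\cone}{\p}$ be the nearest point of $\P-\p$ in $\cone$; this exists because $\cone$ is nonempty (\lemref{one-point-in-each-cone}), and by \defref{reach} we have $\dY{\p}{u}\leq\reachX{\p}<\rho$. Because both $c$ and $u$ lie in $\cone$, each of the directions $c-\p$ and $u-\p$ is within angle $\pi/12$ of the axis of $\cone$, so the angle $\beta$ between $u-\p$ and $c-\p$ satisfies $\beta\leq\pi/6$.

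The heart of the argument is then a one-line containment computation. Applying the law of cosines in the triangle $\p,c,u$ (using $\dY{\p}{c}=\rho$), \[ \dY{c}{u}^2 = \rho^2 + \dY{\p}{u}\bigl(\dY{\p}{u}-2\rho\cos\beta\bigr). \] Since $\beta\leq\pi/6$ we have $\cos\beta\geq\sqrt{3}/2>1/2$, hence $2\rho\cos\beta>\rho>\dY{\p}{u}>0$, so the parenthesized factor is negative and $\dY{c}{u}^2<\rho^2$, i.e.\ $u\in\ball$. But $u\in\P$, contradicting $\ball\cap\P=\emptyset$. Therefore $\dY{\p}{\q}\leq2\reachX{\p}$, which is exactly $\q\in\wardX{\p}$.

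The only delicate point is the angular bookkeeping, and it is also where the choice of cone angle is used: the certifying ball $\ball$ can be arbitrarily large (even almost a halfspace), so the argument must not invoke any upper bound on $\rho$; what rescues us is that a point of $\P$ lying in the cone that contains $c$ is seen from $c$ at an angle at most $\pi/6=2\cdot\pi/12$, comfortably below $\pi/3$, so $\cos\beta>1/2$ and $2\rho\cos\beta>\rho$, which together with $\dY{\p}{u}<\rho$ forces $u$ into $\ball$. A minor caveat is that the statement presumes $\reachX{\p}$ is well defined, i.e.\ that every cone of $\ConesX{\p}$ meets $\P-\p$; this holds simultaneously for all $\p\in\P\cap\IR$ with high probability by \lemref{one-point-in-each-cone}.
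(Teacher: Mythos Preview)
Your proof is correct and follows essentially the same approach as the paper: pick a witness empty ball for the Delaunay edge, locate its center in some cone of $\ConesX{\p}$, and argue that the cone's nearest neighbor (which is within distance $\reachX{\p}$) must fall strictly inside the ball, contradicting emptiness. The only cosmetic difference is in the trigonometric bookkeeping---the paper uses the inscribed-angle (Thales) argument via the antipodal point $\p'=2c-\p$ to show that any point of $\cone\setminus\ball$ has distance $>\reachX{\p}$ from $\p$, whereas you apply the law of cosines directly to show the nearest neighbor lies inside $\ball$; these are contrapositives of one another.
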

\begin{proof:e:n}{{\Xlemref{internal-2m-proof}}}{i:2:p}
    Consider the largest (open) ball $\ball$ with $\p$ on its
    boundary, that does not contain any point of $\P$ in its interior,
    and let $r$ be its radius and $\cen$ be its center, see
    \figref{proof-2m-sketch}. We claim that $r \leq \reachX{\p}$,
    which would imply that $\dY{\p}{\q} \leq 2r \leq
    2\reachX{\p}$. Assume that $r > \reachX{\P}$, and consider any
    cone $\cone \in \ConesX{\p}$, such that $\cen \in \cone$. Let
    $\p'$ be the diametrical point on $\partial \ball$ to
    $\p$. Consider any point $\pu \in \cone \setminus \ball$. The
    distance $\p \pu$ is minimized if $\pu \in \partial \ball$, but
    then $\angle \p\pu\p'$ forms the right angle of a right triangle. Observe that
    $\angle \pu \p \p'< 30^\circ$ since the cone angle is at most
    $30^\circ$.  But then
    \begin{equation*}
        \dY{\p}{\pu}
        =%
        \dY{\p}{\p'} \cos \angle \p' \p \pu
        =%
        2r \cos \angle \p' \p \pu
        >%
        2\reachX{\p} \cos 30^\circ
        =%
        (2 \sqrt{3} /2) \reachX{\p} >
        \reachX{\p}.
    \end{equation*}
    However, $\ball \cap (\P - \p) =\emptyset$ implies that the
    closest point in $(\P- \p) \cap \cone$ to $\p$ has distance larger
    than $\reachX{\p}$, which contradicts the definition of
    $\reachX{\p}$.
\end{proof:e:n}

We next bound the moments of the size of the set of points inside the
influence of a point.
\begin{lemma}\RefProofInAppendix{i:n:o:o}
    \lemlab{internal:neighbors:f}%
    For $\p\in \P\cap \IR$, and any constant $\cPower \geq 1$, we have
    $\Ex{ |\wardX{\p}|^\cPower \bigr. }=O_\cPower(1)$, see
    \defref{ward}.
\end{lemma}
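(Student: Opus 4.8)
The plan is to bound the upper tail of $|\wardX{\p}|$ at every threshold and then read off the moment bound from the elementary identity $\Ex{|\wardX{\p}|^\cPower} \le \cPower\sum_{k\ge 1} k^{\cPower-1}\Prob{|\wardX{\p}| \ge k}$ (valid for a nonnegative integer random variable, since $k^\cPower-(k-1)^\cPower \le \cPower k^{\cPower-1}$). By \defref{ward}, $\wardX{\p}$ is exactly the set of points of $\P$ in the ball of radius $2\reachX{\p}$ around $\p$, so its size is governed by the reach $\reachX{\p}$, which by \lemref{n:n:cone:decay} is typically $\Theta(n^{-1/d})$ with an exponentially decaying tail. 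The key maneuver is to separate the two ways $|\wardX{\p}|$ can be large -- an unusually large reach, or an unusually dense ball of the \emph{typical} radius -- by a crude union bound, which avoids having to reason about the (unhelpful) negative correlation between $\reachX{\p}$ and the number of nearby points.

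Concretely, I would fix a small constant $\gamma>0$ and set $r_k = \gamma (k/n)^{1/d}$. For $k \le k_0 := c_d\gamma^{-d}\ln n$ one has $r_k \le \vlr$, so \lemref{n:n:cone:decay} gives $\Prob{\reachX{\p} > r_k} \le \exp\bigl(O(d) - c\gamma^d k\bigr)$. When instead $\reachX{\p} \le r_k$, then $\wardX{\p} \subseteq \P \cap \ballY{\p}{2r_k}$, a ball of volume $O(\gamma^d k/n)$; choosing $\gamma$ small enough (relative to the unit-ball volume and the $2e$ threshold of \lemref{moments}) makes the expected number of points in it a fixed fraction of $k$, and \lemref{moments} then yields $\Prob{|\P\cap\ballY{\p}{2r_k}| \ge k} \le 2^{-k+1}$. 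Hence $\Prob{|\wardX{\p}| \ge k} \le \exp(O(d) - c\gamma^d k) + 2^{-k+1}$ on this range, and $\sum_{k\le k_0} k^{\cPower-1}\Prob{|\wardX{\p}|\ge k} = O_\cPower(1)$ since $\sum_k k^{\cPower-1}e^{-\Omega(k)}$ converges (the finitely many $k$ for which the ball volume falls below $1/n$ contribute $O(1)$, bounding $\Prob{\cdot}\le 1$ there).

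For $k > k_0$ I would fall back on the deterministic bound $|\wardX{\p}| \le n$ and the inclusion $\wardX{\p} \subseteq \P\cap\ballY{\p}{2\vlr}$ on the event $\reachX{\p}\le\vlr$. Taking $t = \vlr$ in \lemref{n:n:cone:decay} gives $\Prob{\reachX{\p}>\vlr} \le n^{-\Omega(c_d)}$, which is at most $n^{-\cPower-2}$ once $c_d$ is large (in terms of $\cPower,d$); this contributes at most $n\cdot n^{\cPower-1}\cdot n^{-\cPower-2} = n^{-2}$ to the sum. On the complementary event $|\wardX{\p}| \le |\P\cap\ballY{\p}{2\vlr}|$, a count of expectation $\Theta(c_d\ln n) = \Theta(k_0)$, so with $k_0$ chosen a large enough multiple of this expectation, \lemref{moments} gives $\Prob{|\P\cap\ballY{\p}{2\vlr}|\ge k} \le 2^{-k+1}$ for $k>k_0$, and $\sum_{k>k_0} k^{\cPower-1}2^{-k+1} = O_\cPower(1)$. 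Adding the two ranges gives $\Ex{|\wardX{\p}|^\cPower} = O_\cPower(1)$.

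The step I expect to be most delicate is the calibration of constants in the middle range: $\gamma$ must be chosen so that simultaneously $r_k\le\vlr$ throughout $k\le k_0$ and the expected occupancy of $\ballY{\p}{2r_k}$ stays a constant factor below $k$ (so the exponent in \lemref{moments} exceeds $2e$). There is also a small boundary subtlety: since $\p$ lies only in $\IR$ and not deep inside $\Hd$, the ball $\ballY{\p}{2r_k}$ may protrude from $\Hd$ for the larger $k$ in that range; this only decreases $\volX{\ballY{\p}{2r_k}\cap\Hd}$ and so does not hurt the upper-tail estimate, but I would still need to verify the hypothesis $\volX{\ballY{\p}{2r_k}\cap\Hd}\ge 1/n$ of \lemref{moments} -- which holds because $\ballY{\p}{\vlr/2}\subseteq\Hd$. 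Everything else is routine summation of geometric series.
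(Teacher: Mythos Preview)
Your proof is correct and takes a genuinely different route from the paper. The paper handles the awkward dependence between $\reachX{\p}$ and the nearby point count by splitting $\P$ into two halves $\P_1,\P_2$ \emph{before} sampling locations: the reach computed from $\P_1$ is then independent of the points of $\P_2$ lying near $\p$, so the tail bound on the reach can be multiplied against moment bounds on ring occupancies. You sidestep the dependence entirely by fixing, for each threshold $k$, a \emph{deterministic} radius $r_k$ and union-bounding $\{|\wardX{\p}|\ge k\}\subseteq\{\reachX{\p}>r_k\}\cup\{|\P\cap\ballY{\p}{2r_k}|\ge k\}$; since $r_k$ is non-random, each piece is bounded directly by \lemref{n:n:cone:decay} and \lemref{moments} with no independence trick needed. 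This is more elementary and arguably cleaner. The one place it costs something is the range $k>k_0$, where you need $c_d$ large in terms of $\cPower$ to absorb the $n^{\cPower}\cdot\Prob{\reachX{\p}>\vlr}$ contribution; the paper's ring sum nominally avoids this, but in fact it also invokes the reach tail for $i\psi>\vlr$ (outside the stated range of \lemref{n:n:cone:decay}) and hides the same dependence. Since the only application in the paper is $\cPower=d$, this is harmless in either argument.
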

\begin{proof:e}{{\Xlemref{internal:neighbors:f}}}{i:n:o:o}
    Let $L = \cardin{\wardX{\p}}$.  We break $\P$ into two roughly
    equal sets $\P_1$ and $\P_2$ (this is done before sampling the
    locations of the points). Let $\reachC_i = \reachY{\p}{ \P_i}$ be
    the reach of $\p$ in $\P_i$ for $i\in\{1,2\}$. Arguing as above,
    as $\p \in \IR$, this quantity is well defined. Let $U_i$ be the
    number of points of $\P_{3-i}$ in the ball
    $\ball_i = \ballY{\p}{\reachC_i}$. Clearly,
    $\Ex{L} \leq \Ex{U_1} + \Ex{U_2}$, as $\ball_i$ contains more
    points of $\P_{3-i}$ than the ball defined by the reach of the
    whole set.

    So, let $\psi$ be the minimum value such that
    $f(\psi,n/2) \leq 1/2$, where $f$ is the function defined in
    \lemref{n:n:cone:decay}. It is easy to verify that
    $\psi = O (1/n^{1/d})$. Let $\ball_0 = \ballY{\p}{\psi}$, and,
    for $i>0$, let
    $\ring_i = \ballY{\p}{i\psi} \setminus \ballY{\p}{(i-1)\psi}$.
    Observe that $\volX{\ball_0} = O(1/n)$, and
    $\volX{\ring_i} = O(i^{d}/n)$.  By \lemref{moments}, we have that
    $N_0 = \Ex{\bigl. |\P_2 \cap \ball_0|^\cPower } = O(1)$, and
    $N_i = \Ex{ |\P_2 \cap \ring_i|^\cPower }= O(i^{d\cPower})$.  We
    have that $\Ex{L^\cPower \bigr.} = O(T)$, where
    \begin{equation*}
        T
        =%
        \sum_{i=0}^\infty N_i \Prob{ \ell > i\psi }%
        \leq
        \sum_{i=0} O_\cPower(i^{\cPower d}  f(i \psi, n/2))
        =%
        \sum_{i=0} O_\cPower(i^{\cPower d} / 2^i)
        =
        O_\cPower(1),
    \end{equation*}
    since
    \begin{math}
        f(i \psi, n/2) =%
        \exp\bigl(O(d) -c \,i^d \psi^d n \bigr) \leq
        \pth{\exp\bigl[O(d) -c \, \psi^d n \bigr]}^{i^d} \leq
        1/2^{i^d}.
    \end{math}
\end{proof:e}

Combining everything, we get the main result for points in the fortress.
\begin{lemma}%
    \lemlab{internals-are-fine}%
    Let $\DT$ be the Delaunay triangulation of $\P$.  The expected
    number of simplices that include any point of $\P\cap \IR$ is
    $O(n)$.
\end{lemma}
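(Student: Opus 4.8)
The plan is to localize: I would bound the number of Delaunay simplices incident to a fixed point $\p \in \P \cap \IR$ by a polynomial in the size of its influence $\wardX{\p}$, and then sum the resulting expectations over the at most $n$ points of $\P \cap \IR$. The key structural observation is that every simplex of $\DT$ containing $\p$ lies in $\{\p\} \cup \wardX{\p}$. Indeed, if $\simplex \in \DT$ and $\p \in \simplex$, then by definition there is an open ball $\ball \in \pencilX{\simplex}$ with $\ball \cap \P = \emptyset$; for any other vertex $\q$ of $\simplex$, the boundary sphere of $\ball$ passes through both $\p$ and $\q$, so $\ball$ already witnesses $\p\q \in \DT$, and hence $\q \in \wardX{\p}$ by \lemref{internal-2m-proof}.

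Next, I would use that $\DT$ is a simplicial complex in $\Re^d$, so — in general position, which holds almost surely for random $\P$ — every simplex spans an affinely independent point set and thus has at most $d+1$ vertices. A simplex of $\DT$ containing $\p$ therefore has at most $d$ vertices in $\wardX{\p}$, so the number of such simplices is at most $\sum_{j=0}^{d} \binom{|\wardX{\p}|}{j} = O\bigl(|\wardX{\p}|^{d} + 1\bigr)$, with the constant depending only on $d$. Taking expectations and invoking \lemref{internal:neighbors:f} with $\cPower = d$ (a constant, since $d$ is fixed), the expected number of simplices of $\DT$ incident to a fixed point of $\P \cap \IR$ is $O(1)$.

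Finally, by linearity of expectation I would sum over $\P \cap \IR$: writing the total count as a sum over $i \in \IRX{n}$ of the number of simplices of $\DT$ containing $\p_i$, restricted to the event $\p_i \in \IR$, each summand has expectation $O(1)$ by the previous paragraph, so the grand total is $O(n)$. A simplex incident to several points of $\P \cap \IR$ is counted once per such vertex, which only strengthens the inequality in the direction we want. The only genuine obstacle is conceptual rather than computational: one must notice that the number of Delaunay simplices at $\p$ is merely polynomial — not exponential — in $|\wardX{\p}|$, which is exactly where the bounded dimension of simplices in $\Re^d$ enters; the substantive work (that $\wardX{\p}$ has bounded moments of every constant order, and that it already contains every Delaunay neighbour of $\p$) has been done in \lemref{internal:neighbors:f} and \lemref{internal-2m-proof}.
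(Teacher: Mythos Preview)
Your proposal is correct and follows essentially the same approach as the paper: bound the number of Delaunay simplices at $\p$ by $O(|\wardX{\p}|^d)$ via \lemref{internal-2m-proof}, apply \lemref{internal:neighbors:f} with $\cPower=d$ to get $O(1)$ in expectation, and sum over the at most $n$ points. Your write-up is in fact more explicit than the paper's (you spell out why every Delaunay neighbour of $\p$ lies in $\wardX{\p}$ and invoke linearity of expectation at the end), but the argument is the same.
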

\begin{proof:e:n}{{\Xlemref{internals-are-fine}}}{i:a:f}
    Let $\tau=|\wardX{\p}|$.  All the vertices of a simplex of the
    Delaunay triangulation containing $\p$ must have all its vertices
    in $\wardX{\p}$ by \lemref{internal-2m-proof}. Thus, the number of
    such simplices, of all dimensions, is bounded by
    $\sum_{i=0}^{d+1} \binom{\tau}{i} = O( \tau^d)$.  By
    \lemref{internal:neighbors:f}, we have $\Ex{O( \tau^d)}=O(1)$.
\end{proof:e:n}

\subsection{The complexity of the Delaunay triangulation near the
   boundary}

We are now left with the tedious technicality of handling points that
are too ``close'' to the boundary\footnote{Ha, the boundary! A source
   of unmitigated delight to the authors, and hopefully also to the
   readers.}. The idea is to use a similar argumentation to the above,
but to replace the influence ball induced by the reach by a
different region. This inflated region contains significantly more
points, but since the number of points in the moat is small, this
would still be linear overall.  We remind the reader that the
\emph{moat} is the area $\Moat = \Hd \setminus [\vlr, 1-\vlr]^d$, see
\figref{critical-strip} and \Eqref{delta:val}.

The following is an immediate consequence of \lemref{moments} and
\lemref{vicinity:vol}.

\begin{lemma}
    \lemlab{degree}%
    For any point $\p \in \P$, we have
    $X = \cardin{\vicX{\p} \cap \P} = O( \log^d n)$ with probability
    $\geq 1 -1/n^{O(1)}$.
\end{lemma}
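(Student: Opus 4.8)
The plan is to read this off as a direct corollary of \lemref{moments} applied to the single measurable set $\Xi = \vicX{\p}$, which by definition already lies in $[0,1]^d$. Set $\alpha = \volX{\vicX{\p}}$. The upper estimate $\alpha = O(\log^d n / n)$ is handed to us by \lemref{vicinity:vol}, and this is exactly what will produce the $O(\log^d n)$ in the conclusion. The one thing I have to check before calling \lemref{moments} is a matching lower bound on $\alpha$ — in fact I want $\alpha \ge \vl = c_d\ln n/n$, since merely $\alpha \ge 1/n$ would only give a constant failure probability, whereas $\alpha n = \Omega(\log n)$ is what turns the tail bound into a polynomially small quantity.

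That lower bound is quick: if $\q \in [0,1]^d$ satisfies $\max_i \cardin{\q[i]-\p[i]} \leq \vlr$, then $\volX{\bbY{\p}{\q}} = \prod_{i=1}^d \cardin{\q[i]-\p[i]} \leq \vlr^{\,d} = \vl$, so $\vicX{\p}$ contains the part of the $\ell_\infty$-ball of radius $\vlr$ around $\p$ lying in $[0,1]^d$. Since $\p \in [0,1]^d$, one can pick, coordinate by coordinate, whichever of the two directions fits, and obtain an axis-parallel box of side $\vlr$ with a corner at $\p$ that stays inside $[0,1]^d$ (for $n$ large enough that $\vlr \le 1/2$); hence $\alpha \ge \vlr^{\,d} = \vl = c_d\ln n/n \ge 1/n$.

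With both bounds on $\alpha$ in hand I would fix any constant $t > 2e$ and invoke \lemref{moments}; using $\alpha n \geq c_d\ln n$,
\begin{equation*}
    \Prob{ \cardin{\vicX{\p}\cap \P} > t\alpha n }
    \leq
    \frac{1}{2^{t\alpha n}}
    \leq
    \frac{1}{2^{t c_d \ln n}}
    =
    \frac{1}{n^{O(1)}},
\end{equation*}
while on the complementary event $\cardin{\vicX{\p}\cap \P} \leq t\alpha n = O(\log^d n)$ by \lemref{vicinity:vol}, which is the claim for a $\p$ chosen independently of $\P$. For the stated case $\p \in \P$, the only adjustment is to first condition on the location of $\p$: then $\vicX{\p}$ is a fixed measurable subset of $[0,1]^d$ and $\cardin{\vicX{\p}\cap\P}$ equals $1$ (from $\p$ itself) plus a binomial count over the remaining $n-1$ points, so the same estimate applies and the extra $+1$ is absorbed into the $O(\log^d n)$. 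If one wants it simultaneously for every point of $\P$, a union bound over the $n$ points merely enlarges the exponent in $n^{O(1)}$, which is reabsorbed by increasing $c_d$.

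I do not expect a genuine obstacle — the text itself flags this as ``an immediate consequence of \lemref{moments} and \lemref{vicinity:vol}''. The only two spots needing a sentence of care are (i) establishing $\alpha \ge \vl$ so that \lemref{moments} both applies and yields a polynomially small probability, and (ii) the trivial conditioning when $\p$ is itself one of the sampled points.
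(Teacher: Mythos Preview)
Your proposal is correct and is exactly the approach the paper intends: the paper gives no proof at all, merely stating that the lemma is ``an immediate consequence of \lemref{moments} and \lemref{vicinity:vol},'' and you have spelled out precisely that immediate consequence. Your care in verifying $\alpha \geq \vl$ (not just $\alpha \geq 1/n$) so that the tail bound is polynomially small, and in handling the conditioning when $\p \in \P$, fills in the only details worth mentioning.
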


\begin{lemma}\RefProofInAppendix{sub:simplex}
    \lemlab{sub:simplex}%
    Consider an axis parallel box $B = \prod_{i=1}^d [\p_i,\q_i]$, and
    assume that there is a ball $\ball$ that contains the points
    $\p = (\p_1,\ldots, \p_d)$ and $\q = (\q_1,\ldots, \q_d)$. Then
    $\ball$ contains a $d$-dimensional simplex $\simplex$ defined by
    $d+1$ vertices of $B$, such that the volume of this simplex is
    $\geq \volX{B}/d!$. More generally, this holds for any ball that
    contains two diametrical vertices of $B$.
\end{lemma}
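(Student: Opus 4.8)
The plan is to reduce the statement to a clean fact about the box $B$ alone: if a ball $\ball$ contains two antipodal (diametrical) vertices $\p$ and $\q$ of $B$, then $\ball$ contains every point of $B$ that is "between" $\p$ and $\q$ in a suitable sense, and in particular it contains a large simplex on the vertex set of $B$. First I would normalize: by translating and rescaling each coordinate independently (an affine map that only changes volumes by a fixed factor and maps boxes to boxes, balls to ellipsoids -- so I should be careful here), reduce to the case $\p = \origin$ and $\q = \mathbf{1}$, i.e. $B = [0,1]^d$. Actually, to keep $\ball$ a genuine ball I would instead argue directly: let $\cen$ be the center and $\rho$ the radius of $\ball$, so $\ddY{\cen}{\p} \le \rho$ and $\ddY{\cen}{\q}\le \rho$. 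The key geometric observation is that for any point $x$ on the segment $\overline{\p\q}$, convexity of $\ball$ gives $x \in \ball$; more is true -- I will show that every vertex $v$ of $B$ of the form $v = (v_1,\dots,v_d)$ with $v_i \in \{\p_i,\q_i\}$ satisfies the inequality $\ddY{\cen}{v}^2 \le \max(\ddY{\cen}{\p}^2, \ddY{\cen}{\q}^2)$ whenever... no: that is false in general. So the correct route is the segment/convexity route combined with choosing the simplex's vertices cleverly.

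So the second step: I claim that $\ball$ contains the $d+1$ points $v^{(0)},v^{(1)},\dots,v^{(d)}$, where $v^{(0)} = \p$, and $v^{(k)}$ is obtained from $v^{(k-1)}$ by switching the $k$-th coordinate from $\p_k$ to $\q_k$; thus $v^{(d)} = \q$. Each consecutive pair $v^{(k-1)}, v^{(k)}$ differs in one coordinate, and both lie in $B$; but I need each $v^{(k)}$ itself in $\ball$, which is not automatic from convexity alone. Here is the fix: the function $g(\lambda_1,\dots,\lambda_d) = \ddY{\cen}{(\p_1 + \lambda_1(\q_1-\p_1),\dots)}^2$ is a convex quadratic in $(\lambda_1,\dots,\lambda_d)$ on the cube $[0,1]^d$, being a sum of one-variable convex quadratics, hence its maximum over the cube is attained at a vertex. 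This does not immediately bound an arbitrary vertex by $\max(g(\mathbf 0),g(\mathbf 1)) = \max(\ddY{\cen}{\p}^2,\ddY{\cen}{\q}^2) \le \rho^2$. Therefore the honest statement to prove is weaker/different: I should take the simplex spanned by $\p$, $\q$, and $d-1$ further vertices of $B$ chosen so that the whole simplex lies in $\ball$, using only segments emanating within $\ball$. Concretely: every point of the segment $\overline{\p\q}$ is in $\ball$; pick the midpoint $x_0$ of $\overline{\p\q}$, which is the center of $B$; then pick $d$ vertices $w_1,\dots,w_d$ of $B$ such that $x_0$ is in the interior of their convex hull together with $\p$... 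This is getting delicate, so let me instead phrase the cleanest version:

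The cleanest approach: the chord $\overline{\p\q}$ is a diameter-type chord of $B$, and the set $\ball \cap B$ is convex and contains $\overline{\p\q}$; I want to show $\ball \cap B$ contains a simplex of volume $\ge \volX{B}/d!$. Consider the $d$ segments $\overline{\p\, m}$ and $\overline{m\, \q}$ where $m$ ranges appropriately -- actually the right construction: let $m$ be the midpoint of $\overline{\p\q}$ (center of $B$); for each coordinate axis $i$, let $a_i$ be the vertex of $B$ agreeing with $\p$ in all coordinates except possibly such that the simplex $\{\p, \q, \text{reflections}\}$ works. I would then verify by a determinant computation that the simplex on vertex set $\{v^{(0)},\dots,v^{(d)}\}$ (the "staircase" path from $\p$ to $\q$ through vertices of $B$) has volume exactly $\volX{B}/d!$, since its edge vectors from $v^{(0)}$ are $e_1(\q_1-\p_1), e_1(\q_1-\p_1)+e_2(\q_2-\p_2), \dots$ giving a triangular matrix with determinant $\prod_i(\q_i-\p_i) = \volX{B}$. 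The remaining -- and genuinely hard -- step is proving that every staircase vertex $v^{(k)}$ lies in $\ball$. I expect this to be the main obstacle, and I believe it requires the stronger hypothesis that $\ball$ is the \emph{minimal} enclosing ball of $\{\p,\q\}$ or, more robustly, a reflection/monotonicity argument: reflecting $B$ across the hyperplane bisecting $\overline{\p\q}$ and using that $\ball$ contains both endpoints to sandwich each intermediate vertex. In the writeup I would isolate this as the crux: show that for a ball containing two antipodal vertices of an axis-parallel box, it contains the entire "lower-left to upper-right" staircase path, by induction on $d$, peeling off one coordinate at a time and using the $2$-dimensional case (a circle through two opposite corners of a rectangle contains the other two corners iff the rectangle... -- in fact a disk through two antipodal corners of a rectangle need not contain the other corners, so even the base case needs the minimality or an averaging over the two choices of staircase). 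This tension is exactly why the lemma says "a $d$-dimensional simplex defined by $d+1$ vertices of $B$" rather than "all vertices": among the two natural staircases (and more generally the $d!$ monotone lattice paths from $\p$ to $\q$), at least one must lie in $\ball$ by an averaging argument over the symmetric group, and that path's simplex has the claimed volume -- that averaging step is where I would concentrate the real work.
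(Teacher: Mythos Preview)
Your proposal circles around the right argument without quite landing on the key step. You correctly identify that the desired simplex is the convex hull of a monotone ``staircase'' path $v^{(0)}=\p, v^{(1)},\dots,v^{(d)}=\q$ through vertices of $B$, and your determinant computation showing such a simplex has volume exactly $\volX{B}/d!$ is right. You also correctly observe that no \emph{fixed} staircase works (already in $d=2$ a disk through two opposite corners of a rectangle need not contain a prescribed third corner), so one must \emph{choose} the staircase depending on $\ball$. Where the proposal stalls is in \emph{how} to choose it: you end by proposing an ``averaging argument over the symmetric group'' on all $d!$ monotone paths, but averaging the \emph{maximum} of $\|c-v^{(k)}_\sigma\|^2$ over $\sigma\in S_d$ does not give the needed bound, and your 2D averaging observation does not globalize in that way.

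The paper's proof supplies exactly the missing mechanism, and it is the inductive ``peel off one coordinate'' route you briefly mention but do not carry out. Slice $\ball$ by the two hyperplanes $x_d=\p_d$ and $x_d=\q_d$, obtaining $(d-1)$-balls $\ball_\p,\ball_\q$ with the \emph{same} center in the first $d-1$ coordinates. Whichever slice has the larger radius --- say $\ball_\p$ --- must then contain $\q'=(\q_1,\dots,\q_{d-1},\p_d)$, because $\ball_\q$ already contains $\q$ at that same $(d-1)$-distance from the common center. Now $\p$ and $\q'$ are diametrical vertices of the $(d-1)$-box $B\cap\{x_d=\p_d\}$, both lying in the $(d-1)$-ball $\ball_\p$, and induction produces the staircase from $\p$ to $\q'$; appending the edge $\q'\q$ finishes. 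Equivalently, this is your 2D averaging ($\|c-\q'\|^2+\|c-\p'\|^2=\|c-\p\|^2+\|c-\q\|^2\le 2\rho^2$, so one of $\q',\p'$ lies in $\ball$) applied once per inductive step --- a $2$-way choice at each level, not a global $d!$-way average. That single observation is what your writeup needs.
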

\begin{proof:e}{\Xlemref{sub:simplex}}{sub:simplex}
    The proof is by induction on $d$. The claim is immediate if $d=1$.

    \begin{figure}[h]
        \centering%
        \includegraphics{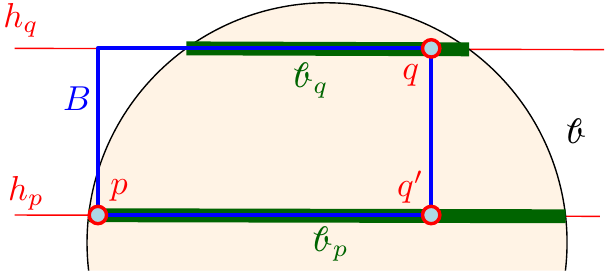}%
        \qquad\begin{minipage}[b]{3cm}
            \captionof{figure}{}
            \figlab{b:b}

            \bigskip \bigskip

        \end{minipage}
    \end{figure}

    For $d>1$, the idea it to provide a path along the edges of the
    box $B$ between the two vertices that is contained in $\ball$ --
    the convex-hull of this path would provide the desired
    simplex. So, consider the two hyperplanes $h_\p \equiv x_d = \p_d$
    and $h_\q \equiv x_d = \q_d$, see \figref{b:b}. Consider the two
    balls $\ball_\p = \ball \cap h_\p$ and
    $\ball_\q = \ball \cap h_\q$. Both balls have the same center if
    we ignore the $d$\th coordinate, and one of them must have a
    bigger (or equal) radius to the other. Assume that $\ball_\p$ has
    the bigger radius, and observe that it as such must contain the
    point $\q' = (\q_1, \ldots, \q_{d-1}, \p_d)$. This implies that
    the segment $\q \q' \subseteq \ball$. By induction there is a path
    on the edges of $B \cap h_\p$ between $\p$ and $\q'$, which
    implies that there is a path on the edges of $B$ between $\p$ and
    $\q$ that lies inside $\ball$.
\end{proof:e}

For points $\p\in \Moat$, as the following testifies, one needs to
consider only simplices and points that are in $\vicX{\p}$. This is
indeed a larger region than the influence region used before, but is
small enough for our purposes.
\begin{lemma}\RefProofInAppendix{moat:degree}
    \lemlab{moat:degree}%
    Let $\p$ be any point in $\P$. With high probability, there are at
    most $O( \log^{d^2} n)$ simplices in $\DT = \DTX{\P}$ that
    contains $\p$ as a vertex.  Furthermore, all the points
    neighboring $\p$ in $\DT$ must be in $\vicX{\p}$, with probability
    $\geq 1-1/n^{O(d)}$.
\end{lemma}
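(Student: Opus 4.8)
The plan is to prove the ``furthermore'' statement first --- that every Delaunay neighbor of $\p$ lies in $\vicX{\p}$ --- and then read off the bound on the number of incident simplices. Suppose $\p\q\in\DT$. By the definition of the Delaunay triangulation there is an open Euclidean ball $\ball$, disjoint from $\P$, with $\p$ and $\q$ on its boundary. Assume, for contradiction, that $\q\notin\vicX{\p}$; then the axis-parallel box $B=\bbY{\p}{\q}$ satisfies $\volX{B}>\vl$, and $B\subseteq\Hd$ because $\p,\q\in\Hd$. The closed ball $\overline{\ball}$ contains the two diametrical vertices $\p,\q$ of $B$, so by \lemref{sub:simplex} it contains a $d$-dimensional simplex $\simplex$ whose vertices are vertices of $B$, with $\volX{\simplex}\ge\volX{B}/d!>\vl/d!$. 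Since $\simplex\subseteq B\subseteq\Hd$ and $\ball$ is an open ball (hence strictly convex), the relative interior of $\simplex$ is contained in $\ball$, and therefore is disjoint from $\P$.

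Next I invoke the \thmrefY{eps:net}{$\eps$-net theorem} for the range space of Euclidean balls, whose \VC dimension is $d+1$: a random sample of $O(d\,\eps^{-1}\log n)$ points from $\Hd$ is an $\eps$-net for balls with probability $\ge 1-1/n^{O(d)}$. Setting $\eps=\vl/d!=\Theta\bigl((\log n)/n\bigr)$ and taking $c_d$ large enough (relative to $d$ and $d!$), the size $n$ of $\P$ exceeds this sample-size threshold, so with probability $\ge 1-1/n^{O(d)}$ the set $\P$ is such an $\eps$-net, i.e.\ every ball $\ball'$ with $\volX{\ball'\cap\Hd}\ge\eps$ meets $\P$. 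Applied to $\ball$, whose intersection with $\Hd$ contains the relative interior of $\simplex$ and hence has volume $>\vl/d!=\eps$, this contradicts $\ball\cap\P=\emptyset$. Thus $\q\in\vicX{\p}$, proving the second assertion.

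For the first assertion, condition on the event just proved together with the event of \lemref{degree} that $X:=\cardin{\vicX{\p}\cap\P}=O(\log^d n)$; by a union bound both hold with probability $\ge 1-1/n^{O(1)}$. The Delaunay complex is closed under taking faces, so any two vertices of a common simplex of $\DT$ form a Delaunay edge; hence every simplex of $\DT$ incident to $\p$ has its (at most $d+1$) vertices in $\{\p\}\cup(\vicX{\p}\cap\P)$. Consequently the number of such simplices is at most $\sum_{i=0}^{d}\binom{X}{i}=O(X^d)=O(\log^{d^2}n)$, as claimed.

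The step requiring genuine care is the $\eps$-net argument. The box $B$ need not be contained in the empty ball $\ball$, so \lemref{sub:simplex} is exactly the tool that upgrades ``$\volX{B}$ is large'' to ``$\ball$ contains a region of volume $\Omega(\vl)$ inside $\Hd$''; and to land the failure probability at $1/n^{O(d)}$ one must keep the $1/d!$ factor inside $\eps$ and choose $c_d$ large enough that $n$ clears the sample-size threshold of the \thmrefY{eps:net}{$\eps$-net theorem} even with $\delta=1/n^{\Theta(d)}$. Everything else is bookkeeping with \lemref{sub:simplex} and \lemref{degree}.
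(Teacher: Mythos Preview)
Your proof is correct and follows essentially the same approach as the paper's: use \lemref{sub:simplex} to extract a large-volume region inside the empty Delaunay ball and inside $\Hd$, invoke the $\eps$-net theorem with $\eps=\vl/d!$ to force a point of $\P$ into it, and then count simplices via \lemref{degree}. The one genuine difference is the choice of range space: the paper applies the $\eps$-net theorem to \emph{simplices} (\VC dimension $O(d^2\log d)$), hitting the simplex $\simplex$ itself, whereas you apply it to \emph{balls} (\VC dimension $d+1$), using $\simplex$ only to lower-bound $\volX{\ball\cap\Hd}$. Your route is marginally cleaner --- smaller \VC dimension, no need to track that $\simplex$ sits entirely inside $\Hd$ as a \emph{range} --- but both arguments are the same in spirit and yield identical bounds since $d$ is constant.
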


\begin{proof:e}{\Xlemref{moat:degree}}{moat:degree}
    The \VC dimension of simplices in $\Re^d$ is $O( d^2 \log d)$ as
    it is the intersection of $d+1$ halfspaces, each of \VC dimension
    $d+1$, see \cite{h-gaa-11}. By the $\eps$-net theorem, a sample of
    size $O\bigl( (d^2 \log d) \eps^{-1} \log n \bigr)$ is an
    $\eps$-net for simplices, with high probability. Interpreting $\P$
    as an $\eps$-sample for $\Hd$, implies that this holds for $\P$
    with $\eps = \vl/d!$, where $\vl = (c_d\ln n)/n$, see
    \Eqref{delta:val} (by making $c_d$ sufficiently large).

    Consider a point $\q \in \P$, such that $\q \notin \vicX{\p}$, and
    assume that $\p \q \in \DTX{\P}$. This implies that there is a
    close ball $\ball$ that has $\p$ and $\q$ on its boundary, and no
    other points of $\P$ in its interior. By \lemref{sub:simplex},
    there is an (open) simplex $\simplex$ of volume
    $\geq \volX{\bbY{\p}{\q}} /d! \geq \eps$ that contains $\p$ and
    $\q$ on its boundary, and it is contained inside $\Hd \cap
    \ball$. But since $\P$ is an $\eps$-net for simplices, it follows
    that there is a point of $\P$ in $\simplex$, which is a
    contradiction.

    We conclude that all the edges adjacent to $\p$ in $\DT$ must be
    to points in $\vicX{\p}$. But there are at most
    $t = O( \log^{d} n)$ such points, by \lemref{degree}. Since any
    simplex involving $\p$ in $\DT$ must use only points that are in
    the vicinity, it follows that the number of simplices (of all
    dimensions) adjacent to $\p$ in $\DT$ is bounded by
    $\sum_{i=0}^{d} \binom{t}{i} = O(t^d)$.
\end{proof:e}

Finally, we show that the complexity of the Delaunay triangulation in
the moat is sublinear.
\begin{lemma}
    \lemlab{moat:d:t}%
    Let $\DT$ be the Delaunay triangulation of $\P$.  The expected
    number of simplices in $\DT$ that include any point of
    $\P\cap \Moat$ is $o(n)$.
\end{lemma}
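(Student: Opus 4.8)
The plan is to combine the two structural facts already established: the moat is thin, and every sample point has low Delaunay degree. Recall that $\Moat = \Hd \setminus [\vlr,1-\vlr]^d$ has volume $\volX{\Moat} = 1-(1-2\vlr)^d \le 2d\vlr = O\pth{(\log n / n)^{1/d}}$, so the expected number of points of $\P$ in the moat is $\volX{\Moat}\cdot n = O\pth{n^{1-1/d}\log^{1/d}n}$. Let $N$ be the number of simplices of $\DT$ (of all dimensions) that are incident to at least one point of $\P \cap \Moat$, and for $\p \in \P$ let $D_\p$ be the number of simplices of $\DT$ having $\p$ as a vertex; then $N \le \sum_{\p \in \P \cap \Moat} D_\p$. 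I would estimate $\Ex{N}$ by conditioning on a \emph{good} event $\GA$.

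Define $\GA$ to be the event that (i) $\cardin{\P\cap\Moat} \le 6\,\volX{\Moat}\,n$ and (ii) every $\p\in\P$ is incident to at most $O(\log^{d^2}n)$ simplices of $\DT$. On $\GA$ the bound is immediate: $N \le O\pth{n^{1-1/d}\log^{1/d}n}\cdot O(\log^{d^2}n) = O\pth{n^{1-1/d}\,\polylog n} = o(n)$, since $1-1/d < 1$. It remains to control the rare complementary event. Condition (i) fails with probability at most $2^{-6\,\volX{\Moat}\,n}$ by \lemref{moments} applied to $\Xi=\Moat$ (with $t=6>2e$ and $\alpha = \volX{\Moat}\ge 1/n$), which is super-polynomially small since $\volX{\Moat}\,n = \Omega(n^{1-1/d})$. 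Condition (ii) fails with probability at most $n/n^{O(d)} = 1/n^{O(d)}$ by \lemref{moat:degree} together with a union bound over the $n$ points of $\P$; by enlarging the constant $c_d$ of \Eqref{delta:val}, the exponent here can be taken larger than $\ceil{d/2}+1$. Hence $\Prob{\overline{\GA}} \le 1/n^{c}$ for a constant $c$ as large as we please.

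Finally I would handle $\overline{\GA}$ by a deterministic worst-case estimate: the Delaunay triangulation of $n$ points in $\Re^d$ has $O\pth{n^{\ceil{d/2}}}$ simplices of all dimensions, so always $N \le \sum_{\p\in\P}D_\p \le (d+1)\cdot O\pth{n^{\ceil{d/2}}} = O\pth{n^{\ceil{d/2}}}$. Splitting $\Ex{N} = \ExCond{N}{\GA}\,\Prob{\GA} + \ExCond{N}{\overline{\GA}}\,\Prob{\overline{\GA}} \le o(n) + O\pth{n^{\ceil{d/2}}}/n^{c}$, and choosing $c > \ceil{d/2}$, yields $\Ex{N} = o(n)$, as desired.

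The elementary parts — the volume bound for the moat, the union bound, and the final arithmetic — are routine. The only point that genuinely needs care, and the main obstacle, is making sure the failure probability of $\GA$ is small enough to dominate the crude worst-case simplex count on $\overline{\GA}$; this is precisely why the $\eps$-net argument underlying \lemref{moat:degree} is run with $c_d$ chosen large, so that its success probability is an arbitrarily large inverse polynomial in $n$. Once that is arranged, the whole statement is bookkeeping.
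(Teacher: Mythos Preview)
Your proposal is correct and follows the same approach as the paper: bound the number of moat points by $O(n^{1-1/d}\polylog n)$ and multiply by the $O(\log^{d^2} n)$ per-point simplex bound from \lemref{moat:degree}. Your explicit conditioning on a good event $\GA$ to convert the high-probability bound into an expectation bound is more careful than the paper's proof, which states the $o(n)$ bound ``with high probability'' and leaves the passage to expectation implicit.
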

\begin{proof}
    We have
    $\alpha = \volX{\Moat} \leq 2d \vlr = O( \sqrt[d]{(\log n) /n })$,
    see \Eqref{delta:val}. Thus, the expected number of points of $\P$
    in $\Moat$ is $\alpha n = O( n^{1-1/d} \log n)$. (As usual, this
    bound holds with high probability.) By \lemref{moat:degree}, the
    total number of simplices in the Delaunay triangulation of $\P$
    involving points in the moat is bounded by
    $O( \alpha n \log^{O(d^2)} n ) = o(n)$, with high probability.
\end{proof}

\myparagraph{The result}

Combining \lemref{internals-are-fine} and \lemref{moat:d:t} implies the
following.

\begin{theorem}
    For fixed $d$, the complexity of the Delaunay triangulation of a
    set of $n$ random points picked uniformly and independently in
    $[0,1]^d$ is $O(n)$ in expectation.
\end{theorem}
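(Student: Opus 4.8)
The plan is to use the partition $\Hd = \IR \cup \Moat$ already introduced — $\IR = [\vlr,1-\vlr]^d$ the fortress and $\Moat = \Hd\setminus\IR$ the moat — and to charge each Delaunay simplex to a vertex lying in one of the two parts. Every simplex $\simplex\in\DT$ is spanned by a nonempty subset of $\P$, and every point of $\P$ lies in $\IR$ or in $\Moat$; hence, writing $S_\IR$ (resp.\ $S_\Moat$) for the set of simplices of $\DT$ having at least one vertex in $\P\cap\IR$ (resp.\ $\P\cap\Moat$), we have $\DT = S_\IR\cup S_\Moat$, and therefore $\cardin{\DT}\le \cardin{S_\IR}+\cardin{S_\Moat}$. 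It then suffices to bound $\Ex{\cardin{S_\IR}}$ and $\Ex{\cardin{S_\Moat}}$ separately.

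For the fortress term I would invoke \lemref{internals-are-fine}: for $\p\in\P\cap\IR$ all incident Delaunay edges stay inside the influence region $\wardX{\p}$ (\lemref{internal-2m-proof}), whose cardinality has $O_\cPower(1)$ moments (\lemref{internal:neighbors:f}), so the expected number of incident simplices of all dimensions, which is $O(\cardin{\wardX{\p}}^{d})$, is $O(1)$ per fortress point; summing over the at most $n$ fortress points gives $\Ex{\cardin{S_\IR}} = O(n)$. For the moat term I would invoke \lemref{moat:d:t}: since $\volX{\Moat} = O(\sqrt[d]{(\log n)/n})$, the expected number of points of $\P$ in the moat is $O(n^{1-1/d}\log n)$, and by \lemref{moat:degree} each such point is incident (with high probability) to only $O(\log^{O(d^2)}n)$ simplices, so $\Ex{\cardin{S_\Moat}} = o(n)$. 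Adding the two estimates yields $\Ex{\cardin{\DT}} = O(n)$, which is the claim.

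Seen this way, the final step is a one-line union bound over the two regions, and the real content is in the two cited lemmas. The step I expect to be the genuine obstacle is the moat estimate \lemref{moat:d:t}: near $\partial\Hd$ the clean ``influence ball'' argument used in the interior breaks down, and one must instead combine the $\eps$-net theorem for simplices with \lemref{sub:simplex} — the fact that any ball through two antipodal vertices of an axis-parallel box $B$ contains a $d$-simplex of volume $\ge\volX{B}/d!$ — to argue that every moat point's Delaunay neighbors lie in its vicinity, whose size is controlled by \lemref{degree}. Granting that, the theorem follows immediately.
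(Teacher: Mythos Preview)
Your proposal is correct and matches the paper's approach exactly: the paper states the theorem as an immediate consequence of \lemref{internals-are-fine} and \lemref{moat:d:t}, and your write-up unpacks precisely those two lemmas and the supporting results (\lemref{internal-2m-proof}, \lemref{internal:neighbors:f}, \lemref{moat:degree}, \lemref{sub:simplex}, \lemref{degree}) in the right way. Your identification of the moat estimate as the genuine obstacle, and of the $\eps$-net/\lemref{sub:simplex} trick as the device that handles it, is also on target.
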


\section{Constructing the Delaunay triangulation in linear %
   time}
\seclab{d:t:build}

\subsection{Algorithm}

We established above that the (expected) complexity of the Delaunay
triangulation is linear by giving a (linear sized) superset of
vertices/simplices that are a superset of the features of $\DT$. We
are now left with the task of extracting the features that do appear
in $\DT$. Recall, that the input is a set $\P$ of $n$ random points
from $\Hd$.

\myparagraph{I: Computing the Delaunay simplices attached %
   to points in $\P \cap \IR$}

Let $N = \ceil{n^{1/d}}$.  The algorithm throws the points of $\P$
into a $N \times \cdots \times N$ uniform grid covering $\Hd$. This
can be done in linear time using hashing, where one can retrieve a
list of all the points stored in a grid cell in constant time. Here a
grid cell is uniquely represented by an integer tuple from
$\{0,1\ldots, N-1\}^d$. Formally, we map a point
$\p =(\p_1, \ldots, \p_d) \in [0,1)^d$, to the grid cell with
\emphi{id} $\idX{\p} = (\floor{\p_1 N}, \ldots, \floor{\p_d N} )$; see
\cite{h-gaa-11}.

For a point $\p \in \P \cap \IR$, the algorithm computes the reach
$\reachX{\p}$ by performing a marching cubes algorithm computing the
intersection of the grid with the ball $\ballY{\p}{r_i}$, where
initially $r_i = 2^i/N$, for $i=0,1,\ldots$. The algorithm uses
scanning to compute the point set $\PA_i = \ballY{\p}{r_i} \cap \P$ by
extracting all the points stored in the intersecting grid cells. The
algorithm stops in the $i$\th iteration, if all cones in $\ConesX{\p}$
contains at least one point of $\P$. At this point one can compute the
reach of $\p$ by computing for each cone $\cone \in \ConesX{\p}$ the
closest point in $\cone \cap \P_i$ to $\p$. The algorithm then
computes the point set $\P_\p = \P \cap \ballY{\p}{2r_i}$, and
computes the Delaunay triangulation of $\P_\p$ using any standard
algorithm for computing Delaunay triangulation. Finally, the algorithm
extract the star of $\p$ from the computed triangulation, and store
it. As a reminder, the \emphi{star} of $\p$, denoted by $\starX{\p}$,
is the set of all the simplices in the triangulation that contains
$\p$. The algorithm repeats this process for all the points of
$\P \cap \IR$, and returns the union of all the stars computed.

\myparagraph{II: Computing the Delaunay simplices attached %
   to points in $\P \cap \Moat$}

The algorithm builds an orthogonal range searching data structure on
the points $\P \cap \Moat$ (and not on all the whole point set $P$).
Next, for each $\p \in \P\cap \Moat$, the algorithm constructs the set
of $O(\log^{d-1} n)$ canonical boxes $\CBoxes_\p$ (as defined in the
proof of \lemref{vicinity:vol}) that their union covers
$\vicX{\p}$. Then for each $\bx\in \CBoxes_\p$, it queries the data
structure for points set $\P_\bx =\bx\cap \Moat \cap \P$. Next, it
loops over $\q \in \P_\bx$ and adds points in $\P_\bx \cap \vicX{\p}$
to the computed set $N_{\Moat}(\p) = \P \cap \Moat \cap \vicX{\p}$.
Next, using the above grid, it computes the set
$N_{\IR}(\p) = \P \cap \ballY{\p}{2 \vlr}$.  Finally, the algorithm
computes the Delaunay triangulation of
$\P_\p = N_{\Moat}(p)\cup N_{\IR}(p)$ using a standard algorithm and
extracts the star $\starX{\p}$ of $\p$, from the computed
triangulation, and stores it.  The algorithm repeats this for each
$\p\in \P \cap \Moat$ and returns the union of all stars computed for
all $\p$.

\subsection{Analysis}

In the following, we prove that the output of the algorithm is correct
with probability $\geq 1-1/n^{O(d)}$ and the expected running time is
$O(n)$.

\myparagraph{Part I: The fortress}

The correctness of the algorithm is implied by the following claim.

\begin{lemma}
    \lemlab{shaving-is-good}%
    For $\p \in \P\cap \IR$, we have that $\simplex \in \starX{\p}$
    if and only if $\simplex \in \DTX{\P}$.
\end{lemma}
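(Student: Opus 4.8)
**

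Proof proposal.**

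The plan is to show both directions of the equivalence by arguing that the local computation at $\p$ ``sees'' exactly the right set of points. Recall the algorithm computes $\P_\p = \P \cap \ballY{\p}{2r_i}$, where $r_i$ is the first radius (a power of two over $N$) at which every cone in $\ConesX{\p}$ contains a point of $\P$. By \lemref{one-point-in-each-cone} and the union bound over all $\p\in\P\cap\IR$, every cone is nonempty with probability $\geq 1-1/n^{O(d)}$, so the algorithm terminates; moreover, since $r_i$ is within a factor $2$ of the true reach-defining radius, one checks $\reachX{\p} \leq r_i \leq 2\reachX{\p}$, and hence $\wardX{\p} \subseteq \P_\p$ (indeed $\ballY{\p}{2\reachX{\p}} \subseteq \ballY{\p}{2r_i}$, so all points within distance $2\reachX{\p}$ of $\p$ are captured). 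This containment is the crux: \lemref{internal-2m-proof} guarantees that every Delaunay neighbor of $\p$ in $\DTX{\P}$ lies in $\wardX{\p}$, hence in $\P_\p$.

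First I would prove the ``only if'' direction: suppose $\simplex \in \DTX{\P}$ with $\p \in \simplex$. Then there is an open ball $\ball$ with the vertices of $\simplex$ on its boundary and $\ball \cap \P = \emptyset$. All vertices of $\simplex$ are Delaunay neighbors of $\p$ (or $\p$ itself), so by the previous paragraph they all lie in $\P_\p$. The same witnessing ball $\ball$ certifies $\simplex \in \DTX{\P_\p}$ (emptiness with respect to the superset $\P$ implies emptiness with respect to the subset $\P_\p$), and since $\simplex$ contains $\p$, it belongs to the star $\starX{\p}$ extracted by the algorithm. For the ``if'' direction: suppose $\simplex \in \starX{\p}$, i.e., $\simplex \in \DTX{\P_\p}$ and $\p \in \simplex$. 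Let $\ball$ be the open ball witnessing $\simplex \in \DTX{\P_\p}$, so $\ball \cap \P_\p = \emptyset$. I must show $\ball \cap \P = \emptyset$. The key geometric observation is that $\ball$ is contained in $\ballY{\p}{2r_i}$: since $\p$ lies on $\partial\ball$ and $\p$ lies on $\partial\ball$... rather, I should invoke the argument of \lemref{internal-2m-proof} applied inside $\P_\p$ — the radius of any empty ball through $\p$ (with respect to $\P_\p$) is at most the reach of $\p$ in $\P_\p$, which is at most $r_i$ since all cones are already populated by points of $\P_\p$ (these are exactly the points used to compute $\reachX{\p}$). Hence $\ball \subseteq \ballY{\p}{2r_i}$, so $\ball \cap \P = \ball \cap (\P \cap \ballY{\p}{2r_i}) = \ball \cap \P_\p = \emptyset$, which certifies $\simplex \in \DTX{\P}$.

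The main obstacle is the ``if'' direction, specifically justifying that the empty ball witnessing a simplex of $\starX{\p}$ in $\DTX{\P_\p}$ is contained in $\ballY{\p}{2r_i}$. This requires re-running the reach argument of \lemref{internal-2m-proof} but with $\P_\p$ in place of $\P$ and being careful that the cone-nearest-neighbors of $\p$ computed from $\P_\p$ coincide with those computed from $\P$ (they do, because $\ballY{\p}{r_i} \cap \P \subseteq \P_\p$, so no closer point in any cone is lost). One subtlety is that a simplex $\simplex \in \DTX{\P_\p}$ containing $\p$ might, a priori, have a huge circumscribing ball if $\P_\p$ is sparse in some direction — but this cannot happen precisely because the algorithm only stops once every cone of $\ConesX{\p}$ is occupied by a point of $\P_\p$, which (by the $30^\circ$ cone argument) forces every empty ball through $\p$ to have radius at most $r_i$. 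Everything else is routine set inclusion, and the failure probability $1/n^{O(d)}$ comes solely from the event that some cone is empty, handled by \lemref{one-point-in-each-cone}.
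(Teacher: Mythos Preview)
Your proof is correct and follows essentially the same route as the paper's: both directions hinge on \lemref{internal-2m-proof}, using that the circumball of any Delaunay simplex through $\p$ has radius at most the reach of $\p$, and hence cannot reach outside the locally collected point set. The paper's proof is terser---it phrases the ``if'' direction as ``if the circumball contained a point of $\P$ outside $\wardX{\p}$, that point would be farther than $2\reachX{\p}$, contradicting the argument of \lemref{internal-2m-proof}''---whereas you spell out the equivalent containment $\ball \subseteq \ballY{\p}{2r_i}$ explicitly. One small note: the paper's proof actually argues with $\wardX{\p}$ rather than the larger set $\P_\p = \P \cap \ballY{\p}{2r_i}$ that the algorithm computes; your version is more faithful to the algorithm as written, and your observation that the cone-nearest-neighbors of $\p$ in $\P_\p$ coincide with those in $\P$ (so the reach argument transfers verbatim) is exactly the right justification---the paper leaves this implicit. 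Your side remark that $r_i \leq 2\reachX{\p}$ need not hold when $i=0$, but since you only use the inequality $\reachX{\p} \leq r_i$, this does not affect the argument.
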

\begin{proof}
    If $\simplex \in \DTX{\P}$, then by \lemref{internal-2m-proof},
    $\simplex \subseteq \wardX{\p}$. This implies that
    $\simplex \in \DTX{\wardX{\p}}$, which implies that $\simplex$ is
    in the computed set $\starX{\p}$.

    If $\simplex \in \starX{\p}$, then the circumball of $\simplex$
    does not contain any point of $\wardX{\p}$ in its interior. If
    this ball contained any point of $\P$ in its interior, then it
    must be further than $2\reachX{\p}$, but this is not possible by
    the argument used in the proof of \lemref{internal-2m-proof}.
\end{proof}

\begin{lemma}
    The above algorithm runs in expected $O(n)$ time.
\end{lemma}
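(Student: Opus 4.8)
The plan is to charge the running time of Part~I and Part~II of the algorithm separately, showing each is $O(n)$ in expectation, and absorbing the (linear) cost of building the grid and the orthogonal range-searching structure into the total. For Part~I, the work done for a single point $\p \in \P \cap \IR$ splits into three pieces: (a) the marching-cubes search that expands balls $\ballY{\p}{2^i/N}$ until every cone in $\ConesX{\p}$ is nonempty; (b) collecting the point set $\P_\p = \P \cap \ballY{\p}{2r_i}$; and (c) running an off-the-shelf Delaunay algorithm on $\P_\p$ and extracting $\starX{\p}$. The key observation is that the search terminates at the first $i$ with $2^i/N \geq \reachX{\p}$, so the final radius $r_i$ satisfies $r_i \leq 2\reachX{\p}$, and the number of grid cells scanned (over all iterations, since they geometrically increase) is $O\bigl((1 + N\reachX{\p})^d\bigr)$. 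By \lemref{n:n:cone:decay}~(II) the reach has exponentially decaying tails, so $\Ex{(N\reachX{\p})^{d}} = O(1)$; combined with \lemref{moments} applied to the ball $\ballY{\p}{2r_i}$ this gives $\Ex{|\P_\p|^{\cPower}} = O_\cPower(1)$ for every constant $\cPower$, hence the cost of step (c), which is polynomial in $|\P_\p|$, also has $O(1)$ expectation. Summing over the $\le n$ points in the fortress yields $O(n)$ expected time. One technical point: we must condition on, or rather union-bound over, the low-probability event (from \lemref{n:n:cone:decay}~(IV)) that some reach exceeds $\vlr$; on the complement the per-point cost is deterministically polylogarithmic and the event itself contributes only $o(1)$ to the expectation, so it is harmless.

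For Part~II, the number of points in the moat is $O(n^{1-1/d}\log n)$ with high probability, so we can afford $\mathrm{polylog}(n)$ work per moat point plus the near-linear cost of the orthogonal range-searching structure on the $O(n^{1-1/d}\log n)$ moat points. For each $\p \in \P \cap \Moat$ we issue $O(\log^{d-1} n)$ canonical-box queries (as in the proof of \lemref{vicinity:vol}); by \lemref{degree} the total number of reported points is $O(\log^d n)$ with high probability, and the grid lookup for $N_{\IR}(\p) = \P \cap \ballY{\p}{2\vlr}$ touches $O\bigl((N\vlr)^d\bigr) = O(\log n)$ cells containing, by \lemref{moments}, $O(\log n)$ points in expectation (and $O(\log n)$ with high probability since $\vlr^d n = \Theta(\log n)$). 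Thus $|\P_\p| = O(\mathrm{polylog}\, n)$ with high probability, the Delaunay computation on $\P_\p$ and the star extraction cost $\mathrm{polylog}(n)$, and multiplying by $O(n^{1-1/d}\log n)$ moat points gives $o(n)$; adding the $O(n^{1-1/d}\,\mathrm{polylog}\,n)$ preprocessing cost for range searching keeps the total sublinear. The low-probability failure events (grid cells overfull, $\eps$-net failing, too many moat points) each have probability $1/n^{O(d)}$ and in the worst case cost at most $\mathrm{poly}(n)$, so they add $o(1)$ to the expected running time.

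The main obstacle, and the step deserving the most care, is controlling the cost of step~(c) in Part~I: a single invocation of the worst-case Delaunay algorithm on $m$ points costs $\Theta(m \log m + m^{\ceil{d/2}})$, so we genuinely need a bound on a \emph{polynomial} moment of $|\P_\p|$, not just its expectation. This is exactly what the higher-moment machinery --- \lemref{moments} together with the exponential tail of $\reachX{\p}$ from \lemref{n:n:cone:decay} --- is designed to supply: one writes $\Ex{|\P_\p|^{\cPower}}$ as a sum over dyadic shells around $\p$, the shell at radius $\approx i \cdot n^{-1/d}$ contributes $O_\cPower(i^{d\cPower})$ points in expectation but is reached only when $\reachX{\p} \gtrsim i \cdot n^{-1/d}$, an event of probability $\le 2^{-i^{d}}$, and the resulting series converges, mirroring the argument in the proof of \lemref{internal:neighbors:f}. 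Once this moment bound is in hand, linearity of expectation over the fortress points closes Part~I, and Part~II is comparatively routine since it rides on the smallness of $|\Moat \cap \P|$.
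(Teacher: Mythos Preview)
Your proposal is correct and follows essentially the same approach as the paper. The paper splits the statement into two separate lemmas (one for the fortress, one for the moat) and its fortress proof is much terser---it simply invokes \lemref{internal:neighbors:f} to bound $\Ex{\sum_{\p\in\P\cap\IR}|\wardX{\p}|^d}=O(n)$---whereas you spell out the underlying dyadic-shell moment argument explicitly and also account separately for the marching-cubes scan cost; your moat analysis matches the paper's almost line-for-line.
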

\begin{proof}
    The Delaunay triangulation of $n$ points in $\Re^d$, can be
    computed in $O(n^{\ceil{d/2}} + n \log n) = O(n^d)$
    \cite{m-cgitr-94}. As such, we have that the expected running time
    is
    \begin{math}
        \Ex{ \sum_{ \p \in \P \cap \IR} O(|\wardX{\p}|^d )} = O( n),
    \end{math}
    by \lemref{internal:neighbors:f}.
\end{proof}

\myparagraph{II: The moat} Let $\DTX{\P}_{\Moat}$ denote the set of
simplices in $\DTX{\P}$ with some vertices in $\Moat$. The correctness
of the algorithm is implied by the following.
\begin{lemma}
    \lemlab{sub:moatsimplices}
    For all $\p \in \P \cap \Moat$, we have
    $\simplex \in \starX{\p}$ if and only if $\simplex \in \DTX{\P}_{\Moat}$ with
    probability $\geq 1-1/n^{O(d)}$.
\end{lemma}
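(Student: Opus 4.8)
The plan is to fix a point $\p\in\P\cap\Moat$ and a simplex $\simplex$ having $\p$ as a vertex, and to prove that $\simplex\in\DTX{\P_\p}$ $\iff$ $\simplex\in\DTX{\P}$; since $\p\in\Moat$, the latter is the same as $\simplex\in\DTX{\P}_{\Moat}$, so the lemma follows by a union bound over the $n$ choices of $\p$, which keeps the per-point failure probability $1/n^{O(d)}$. Throughout, write $\ball$ for the circumball of $\simplex$ (for a lower-dimensional face, any ball of its pencil that realizes it in the triangulation under consideration).

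\emph{Completeness: $\simplex\in\DTX{\P}\implies\simplex\in\starX{\p}$.} Suppose $\simplex\in\DTX{\P}$ with $\p$ a vertex, so $\interiorX{\ball}\cap\P=\emptyset$. By \lemref{moat:degree}, with high probability every other vertex $\q$ of $\simplex$ lies in $\vicX{\p}$. If such a $\q$ is in $\Moat$, then $\q\in\P\cap\Moat\cap\vicX{\p}=N_{\Moat}(\p)$. If instead $\q\in\IR$, then $\p\q$ is a Delaunay edge of $\P$, so \lemref{internal-2m-proof} applied with apex $\q$ gives $\p\in\wardX{\q}$, i.e. $\dY{\p}{\q}\le 2\reachX{\q}$; by part~(IV) of \lemref{n:n:cone:decay}, $\reachX{\q}\le\vlr$ simultaneously for all points of $\P\cap\IR$ with high probability, hence $\dY{\p}{\q}\le 2\vlr$ and $\q\in\P\cap\ballY{\p}{2\vlr}=N_{\IR}(\p)$. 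In either case $\q\in\P_\p$, so $\simplex\subseteq\P_\p$; since $\interiorX{\ball}\cap\P=\emptyset$ and $\P_\p\subseteq\P$, we get $\simplex\in\DTX{\P_\p}$, i.e. $\simplex\in\starX{\p}$.

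\emph{Soundness: $\simplex\in\starX{\p}\implies\simplex\in\DTX{\P}$.} Here $\simplex\in\DTX{\P_\p}$, so $\interiorX{\ball}\cap\P_\p=\emptyset$, and it suffices to show $\interiorX{\ball}\cap\P=\emptyset$. Suppose not, and pick $\q\in\P\cap\interiorX{\ball}$; then $\q\notin\P_\p$, so $\dY{\p}{\q}>2\vlr$. Let $r$ be the radius of $\ball$ and recall $\p\in\partial\ball$. If $r<\vlr$, then $\ball\subseteq\ballY{\p}{2r}\subsetneq\ballY{\p}{2\vlr}$, so $\interiorX{\ball}\cap\P\subseteq\ballY{\p}{2\vlr}\cap\P=N_{\IR}(\p)\subseteq\P_\p$, which forces $\interiorX{\ball}\cap\P=\emptyset$ — a contradiction. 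So $r\ge\vlr$. Next I reduce to $\q\in\vicX{\p}$: among all points of $\P\cap\interiorX{\ball}$ choose one minimizing the volume of its axis-parallel bounding box with $\p$, and suppose this box had volume $>\vl$; by \lemref{sub:simplex} the ball $\ball$ then contains a $d$-simplex of volume $\ge\vl/d!=\eps$ lying in $[0,1]^d$, and since $\P$ is an $\eps$-net for $d$-simplices with high probability (as in the proof of \lemref{moat:degree}), its interior contains a point of $\P$, which is in $\interiorX{\ball}$ and has a strictly smaller bounding box with $\p$, contradicting minimality. So the minimizer — rename it $\q$ — lies in $\vicX{\p}$, and as $\q\notin\P_\p$ we must have $\q\in\IR$ and $\dY{\p}{\q}>2\vlr$ (a point of $\Moat\cap\vicX{\p}\cap\P$ would be in $N_{\Moat}(\p)$). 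Finally, using $\q\in\IR$: by \lemref{one-point-in-each-cone} the cone of $\Cones(\q)$ that points towards $\p$ contains a point $\mathsf{w}\in\P$, and by part~(IV) of \lemref{n:n:cone:decay} we may take $\dY{\q}{\mathsf{w}}\le\reachX{\q}\le\vlr<\dY{\p}{\q}/2$; a short geometric argument (convexity of $\ball$, with $\p\in\partial\ball$ and $\q\in\interiorX{\ball}$, and $\mathsf{w}$ lying in a narrow cone from $\q$ towards $\p$ at less than half the distance to $\p$) shows $\mathsf{w}\in\interiorX{\ball}$, which — after possibly iterating the reduction — contradicts either the minimality of $\q$ or $\interiorX{\ball}\cap\P_\p=\emptyset$.

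The main obstacle is this last step: ruling out a point of $\P$ sitting deep inside the circumball of a locally-Delaunay simplex when the anchor vertex $\p$ lies within $\vlr$ of the boundary of $[0,1]^d$, where $\ball$ may bulge out of the cube so the region near $\p$ is thin and the naive ``$\eps$-net near $\p$'' argument breaks. The fix is to work from the fortress point $\q$ instead — its bounded reach forces points of $\P$ near it in every direction — and this is exactly the place where the definitions $\IR=[\vlr,1-\vlr]^d$ and the inflated radius $2\vlr$ in $N_{\IR}(\p)$ are calibrated so the threshold comes out to $2\vlr$. All the other ingredients (the $\eps$-net/$\eps$-sample invocations, \lemref{moat:degree}, \lemref{internal-2m-proof}, \lemref{n:n:cone:decay}, \lemref{sub:simplex}) are immediate from the earlier material, and the union bound over $\p\in\P\cap\Moat$ preserves the failure probability $1/n^{O(d)}$.
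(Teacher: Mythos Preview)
Your completeness direction is correct and matches the paper's argument.

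Your soundness direction has a real gap. The ``short geometric argument'' that $\mathsf{w}\in\interiorX{\ball}$ is not valid. You only know that $\mathsf{w}$ lies in a cone from $\q$ whose axis is within $15^\circ$ of the direction $\q\p$, at distance $\le\vlr<\dY{\p}{\q}/2$; nothing forces this cone to stay inside $\ball$. Concretely, let $\ball$ have large radius $R$, let $\q$ sit at distance $\epsilon\ll\vlr$ from $\partial\ball$, and let $\p\in\partial\ball$ lie at distance $\approx\sqrt{2R\epsilon}$ from $\q$ (so $\dY{\p}{\q}>2\vlr$ once $R>2\vlr^2/\epsilon$). The direction $\q\p$ is then nearly tangent to $\partial\ball$ at the point of $\partial\ball$ nearest $\q$, and a ray from $\q$ tilted $30^\circ$ toward that boundary exits $\ball$ after distance $\approx 2\epsilon\ll\vlr$. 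So $\mathsf{w}$ can lie outside $\ball$. Even when $\mathsf{w}\in\interiorX{\ball}$, your fallback ``contradicts the minimality of $\q$'' does not follow either: $\mathsf{w}$ being near $\q$ says nothing about $\volX{\bbY{\p}{\mathsf{w}}}$ versus $\volX{\bbY{\p}{\q}}$, so the iteration has no reason to terminate.

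The paper sidesteps both issues by choosing $\q$ differently: it takes $\q$ to be the point of $\ball\cap(\P\setminus\P_\p)$ \emph{closest to $\p$} (equivalently, shrink $\ball$ toward $\p$ until it first meets a point of $\P$). This makes $\p\q$ a Delaunay edge of $\DTX{\P}$. Now \lemref{moat:degree} forces $\q\in\vicX{\p}$, so if $\q\in\Moat$ we already have $\q\in N_{\Moat}(\p)\subseteq\P_\p$, a contradiction; and if $\q\in\IR$, \lemref{internal-2m-proof} applied at $\q$ gives $\dY{\p}{\q}\le 2\reachX{\q}$, whence $\reachX{\q}>\vlr$, which has probability $\le 1/n^{O(d)}$ by \lemref{n:n:cone:decay}(IV). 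No cone-in-ball containment and no iterated minimality is needed.
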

\begin{proof}
    Consider a simple $\simplex \in \DTX{\P}_{\Moat}$ with
    $\p \in \VX{\simplex}$.  If $\simplex$ contains a point
    $\q \in \VX{\simplex}$ that is in the fortress $\IR$, and is
    outside $\ballY{\p}{2\vlr}$, then $\reachX{\q} >\vlr$, and
    \lemref{n:n:cone:decay} implies that this happens with probability
    $<1/n^{O(d)}$.  Thus, we have that
    $V(\simplex)\subseteq \P\cap \bigl((\Moat \cap \vicX{p}) \cup
    \ballY{\p}{2\delta}\bigr) \subseteq \P_\p$.  Thus, the empty ball
    $\ball$ in $\DTX{\P}_{\Moat}$ that circumscribes $\simplex$ is
    still empty in $\P_\p$, $\VX{\simplex} \subseteq \P_\p$, and thus
    $\simplex \in \starX{p}$.

    If $\simplex \in \starX{p}$, then there is an empty ball $\ball$
    that circumscribes $\simplex$ and is a witness to this. Assume for
    the sake of contradiction that $\ball$ is not empty, and let $\q$
    be the closest point to $\p$ in $\ball \cap( \P \setminus \P_\p)$.
    If $\q \in \Moat$, then $\q \notin \vicX{\p}$ (as
    $\P \cap \Moat \cap \vicX{\p} \subseteq \P_\p$). The probability
    for that this happens is $< 1/n^{O(d)}$ by \lemref{moat:degree}.
    If $\q \in \IR$, then the cone $\cone \in \ConesX{\q}$ that
    contains $\p$, can not contain any closer point to $\q$ (than
    $\p$) from $\P$. Namely, the reach of $\q$ is bigger than $2\vlr$,
    and probability for that is $\leq 1/n^{O(d)}$, by
    \lemref{n:n:cone:decay}.
\end{proof}

\begin{lemma}
    \lemlab{runtime:delaunay:linear:moat}%
    The above algorithm runs in expected $O(n)$ time.
\end{lemma}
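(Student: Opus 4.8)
The plan is to show that the moat phase does only $O(\polylog n)$ work per point of $\P\cap\Moat$ with high probability, that with high probability there are only $O(n^{1-1/d}\log n)$ such points, and then to absorb the (polynomially small) failure probability against a crude worst-case bound so that the \emph{expected} running time is $O(n)$. First I would bound the size of the problem and the preprocessing. Since $\volX{\Moat}\le 2d\vlr = O\bigl((\log n/n)^{1/d}\bigr)$ by \Eqref{delta:val}, \lemref{moments} gives $\nMoat = \cardin{\P\cap\Moat} = O(n^{1-1/d}\log n)$ with probability $\ge 1-1/n^{O(d)}$. Building the $d$-dimensional orthogonal range-reporting structure on these $\nMoat$ points (a range tree with fractional cascading) then costs $O(\nMoat\log^{d-1}\nMoat) = o(n)$ time and space in this event.

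Next I would bound the work charged to a fixed point $\p\in\P\cap\Moat$. The $O(\log^{d-1}n)$ canonical boxes $\CBoxes_\p$ covering $\vicX{\p}$ depend only on $\p$ and $\vl$ and are produced in $O(\log^{d-1}n)$ time. Each box $\bx$ has volume $O((\log n)/n)$, so by \lemref{moments} $\cardin{\P_\bx} = O(\log n)$ with high probability; hence querying and filtering over all boxes costs $O(\polylog n)$, and $\cardin{N_{\Moat}(\p)}\le\cardin{\vicX{\p}\cap\P} = O(\log^d n)$ by \lemref{degree}. For $N_{\IR}(\p) = \P\cap\ballY{\p}{2\vlr}$: the ball has radius $\Theta\bigl((\log n/n)^{1/d}\bigr)$ while a grid cell has side $\Theta(n^{-1/d})$, so the ball meets only $O(\log n)$ cells, each holding $O(\log n)$ points with high probability (\lemref{moments}, cell volume $1/n$); thus $N_{\IR}(\p)$ is extracted in $O(\polylog n)$ time and has $O(\log n)$ points. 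Therefore $m := \cardin{\P_\p} = O(\log^d n)$ with high probability, computing $\DTX{\P_\p}$ costs $O(m^{\ceil{d/2}}+m\log m) = O(m^d) = O(\log^{d^2}n)$, and extracting $\starX{\p}$ is no more. So the work charged to $\p$ is $O(\log^{d^2}n)$ with high probability, dominated by the local Delaunay computation.

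Finally I would union-bound the per-cell, per-box, per-vicinity, and $\nMoat$ estimates over all $n$ points (and the $O(\log^{d-1}n)$ boxes per point), keeping the overall failure probability at $\le 1/n^{c}$, where $c$ grows with $c_d$. In the good event, together with the preprocessing this gives moat-phase running time $O\bigl(\nMoat\cdot\log^{d^2}n\bigr) = O\bigl(n^{1-1/d}\log^{d^2+1}n\bigr) = o(n)$. In the complementary event every relevant quantity is at most $n$, so each of the $\le n$ local Delaunay computations is $O(n^d)$ and each range query or grid scan touches $O(n)$ items, giving an $O(n^{d+1})$ worst-case bound overall; choosing $c_d$ so that $c\ge d+1$ makes the bad event contribute $O(n^{d+1}/n^{c}) = O(1)$ to the expectation. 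Adding the two contributions yields expected time $O(n)$.

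I expect the only non-routine point to be making the two auxiliary structures cooperate: using the range tree solely for the thin, boundary-hugging region $N_{\Moat}(\p)$ (covered by $O(\log^{d-1}n)$ canonical boxes of volume $O((\log n)/n)$) and the uniform grid solely for the small ball $N_{\IR}(\p)$ (which meets only $O(\log n)$ cells), and arranging the high-probability estimates so they hold simultaneously for all $n$ points with a failure exponent that beats the trivial $O(n^{d+1})$ worst case — this is what guarantees the bound holds in expectation and not merely with high probability.
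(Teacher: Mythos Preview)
Your proposal is correct and follows essentially the same approach as the paper: bound $\nMoat$, bound the preprocessing, bound the per-point work by $\polylog n$ (range queries over the canonical boxes plus grid extraction plus a brute-force Delaunay on a polylog-sized local set), and multiply. The only substantive difference is that you explicitly convert the high-probability running-time bound into an expectation by charging the polynomially-rare bad event against an $O(n^{d+1})$ worst case; the paper's proof simply asserts the expected bound from the high-probability estimates without spelling this step out, so your treatment is if anything more careful.
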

\begin{proof}
    We have
    $\nMoat = \Ex{ |\P \cap \Moat|} \leq n \cdot d \cdot \vlr =
    O(n^{1-1/d} \log n)$.  Building the orthogonal range searching
    data-structure of $\P \cap \Moat$ takes
    $O(n + \nMoat \log^d n ) = O( n)$.

    For any $\p \in \P \cap \Moat$, computing
    $\ballY{\p}{2\vlr}\cap \P$ (using the grid) takes $O( \log n)$
    time (and this bound holds with high probability).  Computing the
    points in the vicinity of $\p$ in the moat takes
    $O( \log^{O(d)} n )$ time -- indeed each orthogonal range query
    takes $O( \log^{d} n)$ time, and there are $O( \log^d n)$ such
    queries. Finally, the time to compute the Delaunay triangulations
    $\P_\p$, is $O( \log^{O(d^2)} n )$.

    Putting everything together, we have that the expected running
    time of the second part of the algorithm is
    $O( n + \nMoat \log^{O(d^2)} n ) = O(n)$.
\end{proof}

\begin{theorem}
    For fixed $d$, and a uniformly and independently sampled point set
    $\P \subseteq[0,1]^d$ of size $n$, the above algorithm computes
    the Delaunay triangulation $\DTX{\P}$ of $\P$ in expected $O(n)$
    time. The algorithm succeeds with high probability.
\end{theorem}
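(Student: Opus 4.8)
The plan is to assemble the theorem from the per-phase correctness and running-time lemmas already in hand. Recall that the algorithm outputs $\bigcup_{\p \in \P} \starX{\p}$, where the star $\starX{\p}$ is produced by Part~I when $\p \in \P \cap \IR$ and by Part~II when $\p \in \P \cap \Moat$. For correctness, I would take any $\simplex \in \DTX{\P}$ and pick a vertex $\p \in \VX{\simplex} \cap \P$. If $\p \in \IR$, then \lemref{shaving-is-good} gives $\simplex \in \starX{\p}$ (deterministically); if $\p \in \Moat$, then $\simplex$ has a moat vertex, so $\simplex \in \DTX{\P}_{\Moat}$, and \lemref{sub:moatsimplices} gives $\simplex \in \starX{\p}$ on an event of probability $\geq 1 - 1/n^{O(d)}$. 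Conversely, every simplex appearing in a computed star lies in $\DTX{\P}$ by the same two lemmas. Intersecting the (at most $n$) relevant good events and applying the union bound, the algorithm outputs exactly $\DTX{\P}$ with probability $\geq 1 - 1/n^{O(d)}$.

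For the running time I would add three contributions. Building the grid (by hashing) and the orthogonal range-searching structure on $\P \cap \Moat$ takes $O(n)$ time. The fortress phase runs in expected $O(n)$ time: for $\p \in \P \cap \IR$ the marching-cubes scan that determines $\reachX{\p}$ touches $O(|\wardX{\p}|)$ grid cells, and the local Delaunay triangulation of $\P_\p$ costs $O(|\wardX{\p}|^d)$, so by \lemref{internal:neighbors:f} the total is $\sum_{\p \in \P \cap \IR} O(\Ex{|\wardX{\p}|^d}) = O(n)$. The moat phase runs in expected $O(n)$ time by \lemref{runtime:delaunay:linear:moat}. Summing the three terms yields the claimed $O(n)$ expected running time.

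The work here is bookkeeping rather than new mathematics, and the only points needing care are: (i) checking that the union of the two phases misses no simplex of $\DTX{\P}$ --- a simplex straddling $\IR$ and $\Moat$ is picked up both through its fortress vertex in Part~I and, via $\DTX{\P}_{\Moat}$, through its moat vertex in Part~II, so nothing is lost; (ii) summing the $1/n^{O(d)}$ failure probabilities over all $n$ points, so that the union bound still delivers a $1/n^{O(d)}$ bound; and (iii) reconciling the fact that the running time is bounded \emph{in expectation} over the full sample space while correctness holds only \emph{with high probability} --- these are compatible because the algorithm terminates on every input (the marching cubes halts once every cone of $\ConesX{\p}$ is nonempty, which is guaranteed for $\p \in \IR$, and the moat phase is finite deterministically). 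I do not expect any genuine obstacle; the substantive content was already discharged in the lemmas of this section and of \secref{del:triangulation}.
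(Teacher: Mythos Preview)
Your proposal is correct and mirrors the paper's approach exactly: the paper does not even give an explicit proof of this theorem, leaving it as the evident combination of \lemref{shaving-is-good}, the unlabeled fortress running-time lemma (which is precisely your $\sum_{\p} O(\Ex{|\wardX{\p}|^d}) = O(n)$ argument via \lemref{internal:neighbors:f}), \lemref{sub:moatsimplices}, and \lemref{runtime:delaunay:linear:moat}. Your bookkeeping of the union bound and the expectation/high-probability distinction is more careful than the paper's own exposition.
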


\section{Constructing the \MST in linear time}
\seclab{m_s_t}%

\subsection{Preliminaries}

\begin{lemma}
    \lemlab{mst:short:edges}%
    Let $\MSTP$ be the \MST of $\P$. The longest edge in $\MSTP$ has
    length $\leq \vlr =\sqrt[d]{ c_d (\log n)/n}$ (see
    \Eqref{delta:val}), with probability $\geq 1- 1/n^{O(d)}$, where
    $c_d$ is a sufficiently large constant.
\end{lemma}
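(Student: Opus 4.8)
The plan is to show that any edge of $\MSTP$ longer than $\vlr$ can be ``swapped out'' by a shorter edge, contradicting minimality, and that the geometric configuration enabling the swap is present with high probability. First I would recall the standard cycle/cut property of minimum spanning trees: if $\p\q$ is an edge of $\MSTP$, then removing it splits $\MSTP$ into two components $\P_\p \ni \p$ and $\P_\q \ni \q$, and $\p\q$ must be a \emph{shortest} edge of $\GX{\P}$ crossing this cut (otherwise replacing it by a shorter crossing edge yields a lighter spanning tree). So it suffices to show that, with high probability, for \emph{every} pair of points $\p,\q \in \P$ with $\dY{\p}{\q} > \vlr$, there is a third point $\pu \in \P$ that is strictly closer to both $\p$ and $\q$ than they are to each other (more precisely, $\pu$ lies in a region all of whose points are within distance $\dY{\p}{\q}$ of both endpoints), so that $\pu$ lies on the opposite side of the cut from one of them and provides a shorter crossing edge.

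Concretely, I would look at the \emph{lens} (intersection of the two balls) $L = \ballY{\p}{\dY{\p}{\q}} \cap \ballY{\q}{\dY{\p}{\q}}$, or really just a fixed-size sub-region of it such as a ball of radius $\dY{\p}{\q}/2$ centered at the midpoint of $\p\q$ — every point of that sub-ball is within distance $\dY{\p}{\q}$ of each endpoint, in fact strictly closer once we are a little inside. When $\dY{\p}{\q} > \vlr$, this sub-region has volume $\Omega(\vlr^d) = \Omega(\vl) = \Omega((\log n)/n)$; but there is a boundary subtlety: $\p$ and $\q$ could both be near $\partial \Hd$, so I should use a quadrant of the lens that points into the cube, which still has volume $\Omega(\vl)$ (exactly as in the proof of \lemref{one-point-in-each-cone}). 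Then the probability that this region contains no point of $\P - \{\p,\q\}$ is at most $(1-\Omega((\log n)/n))^{n-2} \leq \exp(-c\ln n) = 1/n^{O(1)}$, where $c$ grows with $c_d$.

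To finish, I would like a union bound over all $\binom{n}{2}$ pairs $\p,\q$. The naive union bound over the continuum of possible pair-distances is fine here because we only ever use \emph{actual} pairs of sample points: there are at most $\binom{n}{2} \le n^2$ such pairs, so the failure probability is at most $n^2 \cdot 1/n^{O(1)} = 1/n^{O(1)}$, and by choosing $c_d$ large enough the exponent beats $2$ by any desired margin, giving the claimed $1/n^{O(d)}$. On the good event, take any MST edge $\p\q$; if $\dY{\p}{\q} > \vlr$ then the region above contains some $\pu \in \P$, and since $\pu$ is strictly closer to both $\p$ and $\q$, whichever side of the $\p\q$-cut $\pu$ falls on, the edge from $\pu$ to the endpoint on the other side is a strictly shorter crossing edge, contradicting the cut property. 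Hence the longest MST edge has length $\le \vlr$ with probability $\ge 1 - 1/n^{O(d)}$.

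The main obstacle I anticipate is purely the boundary bookkeeping — making sure the chosen sub-region of the lens always has volume $\Omega(\vl)$ regardless of how close $\p$ and $\q$ sit to $\partial\Hd$ (including the degenerate case where $\p,\q$ are nearly antipodal corners, where one must be careful that a quadrant of the lens still lies inside the cube). This is the same kind of argument already used for cones in the fortress, so it is routine but needs a clean statement; everything else (cut property, single-region probability bound, union bound) is immediate.
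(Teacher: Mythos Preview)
Your proposal is correct and follows essentially the same line as the paper: the key geometric fact is that an \MST edge $\p\q$ forces the diametrical ball of $\p$ and $\q$ to be empty of other sample points (your ``sub-ball of radius $\dY{\p}{\q}/2$ at the midpoint'' \emph{is} the diametrical ball), and the boundary is handled by observing that at least one orthant of this ball lies in $\Hd$, giving volume $\Omega_d(\vlr^d)$. The only technical difference is how the uniform bound is obtained: you do a direct probability estimate plus a union bound over the $\binom{n}{2}$ index pairs (valid, since conditioning on $\p_i,\p_j$ leaves the remaining $n-2$ points independent of the region), whereas the paper invokes the $\eps$-net theorem for ranges of the form ball-intersect-box, which handles all such regions in one shot without singling out sample pairs.
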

\begin{proof}
    Let $\p \q$ be the longest edge in $\MSTP$. Observe that
    diametrical ball $\ball$ defined by $\p$ and $\q$ can not contain
    any points of $\P$ in its interior, as such a point $\pz$, would
    induce a cycle $\p \pz \q$ with $\p \q$ being the longest edge,
    which implies that it is not the \MST. The volume of $\ball$ is
    minimized if its center lies in one the corners of $[0,1]^d$. We
    conclude that the region $R = \ball \cap [0,1]^d$ has
    \begin{math}
        \volX{R } = \Omega_d( \dY{\p}{\q}^d/2^d) = \Omega_d(
        \dY{\p}{\q}^d).
    \end{math}
    Furthermore, $R$ is formed by the intersection of a hyperbox with
    ball, and the \VC dimension of such ranges is $O(d)$
    \cite{h-gaa-11}. The point set $\P$ can be interpreted as an
    $\eps$-net for such ranges, with
    $\eps = \vlr^d /2^d = \Omega_d\bigl( (\log n) / n)$, with high
    probability.  We conclude that if $\dY{\p}{\q} \geq \vlr$, then
    $\P$ fails as an $\eps$-net, which implies the claim.
\end{proof}

\begin{defn}
    \deflab{g:c:y}%
    The \emphi{Yao graph} $\GC = \GCX{\P}$ \cite{y-cmstk-82} of $\P$
    formed by connecting two points $\p,\q \in \P$ by an edge if $\q$
    is the nearest point to $\p$ in one of the cones of $\Cones(\p)$
    (see \lemref{filling-space-rd-cones}). Let $\GCY{\P}{\vlr}$ be the
    graph $\GC$ after removing from it all the edges with length
    $\geq \vlr$.
\end{defn}

It is well known that this graph contains the \MST of $\P$
\cite{y-cmstk-82}.

\begin{lemma}
    \lemlab{y:graph:vlr}%
    Let $\P$ be a set of $n$ points picked uniformly at random from
    $[0,1]^d$. One can compute the graph $\GCY{\P}{\vlr}$ in $O(n)$
    expected time.
\end{lemma}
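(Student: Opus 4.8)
The plan is to lay down a uniform grid fine enough that each cell holds $O(1)$ points in expectation, and then to recover, for each point $\p\in\P$, its incident edges of $\GCY{\P}{\vlr}$ by an expanding search over the grid cells surrounding the cell of $\p$. Concretely, I take $N=\ceil{n^{1/d}}$, build the $N\times\cdots\times N$ grid over $\Hd$, and bucket each point $\p$ into the cell $\idX{\p}$ by hashing, all in $O(n)$ time, so that the list of points in any cell is retrievable in $O(1)$ time (exactly as in the Delaunay algorithm of \secref{d:t:build}). For a fixed $\p$, I scan the cells in rounds $\ell=0,1,2,\ldots$, round $\ell$ visiting every cell whose $\ell_\infty$ cell-distance from the cell of $\p$ is exactly $\ell$; each encountered point $\q$ is classified in $O(1)$ time into one of the $2^{O(d)}$ cones of $\ConesX{\p}$ and used to update that cone's current nearest candidate. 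Since after completing round $\ell$ every point of $\P$ at Euclidean distance at most $\ell/N$ from $\p$ has been seen, I stop the search for $\p$ as soon as either (a) every cone of $\ConesX{\p}$ has a current candidate within distance $\ell/N$ of $\p$ --- each such candidate is then certified to be its cone's true nearest neighbor, since all unseen points are farther --- or (b) $\ell\ge\ceil{\vlr N}$, in which case every point within distance $\vlr$ of $\p$ has been seen, so any cone whose nearest candidate is at distance $\ge\vlr$ (or that has no candidate) correctly contributes no edge of $\GCY{\P}{\vlr}$. Collecting the certified cone-nearest-neighbors over all $\p$ yields $\GCY{\P}{\vlr}$; this settles correctness.

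For the running time, it suffices to bound, for each $\p$, the quantity $W(\p)$ equal to the number of cells scanned plus the number of points encountered while processing $\p$, since the total time is $O(n+\sum_\p W(\p))$. Let $\p\in\P\cap\IR$. With probability $\ge 1-1/n^{O(d)}$ every cone of $\ConesX{\p}$ is nonempty (\lemref{one-point-in-each-cone}) and $\reachX{\p}\le\vlr$ (\lemref{n:n:cone:decay}(IV)); on this event the search stops by rule (a) at round $\ell=O(\reachX{\p}N+1)$, so it scans $O\bigl((\reachX{\p}N)^d+1\bigr)$ cells and encounters only points of $\P$ inside an axis-parallel box of side $O(\reachX{\p}+1/N)$ about $\p$. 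In any case the search never runs past round $\ceil{\vlr N}$, so $W(\p)=O(\log n+|\P\cap\ballY{\p}{2\vlr}|)$ always, whose moments are polylogarithmic by \lemref{moments}; hence the complementary $1/n^{O(d)}$-probability event contributes only $o(1/n)$ to $\Ex{W(\p)}$. By linearity, $\Ex{W(\p)}\le O\bigl(N^d\,\Ex{\reachX{\p}^d}+1\bigr)+\Ex{Z(\p)}+o(1/n)$, where $Z(\p)$ counts the points in that box. Here $N^d\,\Ex{\reachX{\p}^d}=O(1)$ because $N^d=O(n)$ and $\Ex{\reachX{\p}^d}=O(1/n)$ by integrating the tail bound of \lemref{n:n:cone:decay}(II), and $\Ex{Z(\p)}=O(1)$ by the argument of \lemref{internal:neighbors:f} (splitting $\P$ into two independent halves to decouple the reach from the point count, which applies to a box of side $O(\reachX{\p})$ exactly as to $\wardX{\p}$). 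Thus $\sum_{\p\in\P\cap\IR}\Ex{W(\p)}=O(n)$. For $\p\in\P\cap\Moat$ I bound crudely: the search scans at most $O((\vlr N)^d)=O(\log n)$ cells, and since $\ballY{\p}{\vlr}\cap\Hd\subseteq\vicX{\p}$ (any point within Euclidean distance $\vlr$ of $\p$ has a bounding box with $\p$ of volume at most $\vlr^d=\vl$, by the AM--GM inequality), it encounters $O(\log^d n)$ points with high probability by \lemref{degree}; as $|\P\cap\Moat|=O(n^{1-1/d}\log n)$ with high probability, $\sum_{\p\in\P\cap\Moat}\Ex{W(\p)}=O(n^{1-1/d}\log^{d+1}n)=o(n)$. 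Altogether the expected running time is $O(n)$.

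The only real obstacle is the fortress analysis: one must simultaneously control how far the ring search reaches (governed by $\reachX{\p}$, whose $d$-th moment is $\Theta(1/n)$) and how many points lie within that reach (governed by $|\wardX{\p}|$, with all moments $O(1)$), and verify that the resolution $N=\ceil{n^{1/d}}$ is precisely the choice balancing the cost of scanning empty cells against the point density, with the coarser boundary behaviour absorbed by the sublinear cardinality of the moat. As an aside, the same bound can be obtained by piggybacking on the Delaunay algorithm of \secref{d:t:build}: for each $\p$ the set $\P_\p$ it already computes contains $\ballY{\p}{\reachX{\p}}\cap\P$, hence (with high probability) every $\GCY{\P}{\vlr}$-neighbor of $\p$, so the $2^{O(d)}$ cone-nearest-neighbors of $\p$ can be extracted in $O(|\P_\p|)$ additional time and the $O(n)$ bound carries over.
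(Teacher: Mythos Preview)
Your proof is correct and follows essentially the same approach as the paper: bucket into a grid with $\Theta(n)$ cells, do an expanding (marching-cubes) search per point truncated at distance $\vlr$, and split the analysis into fortress and moat. The paper's proof is considerably terser --- it simply cites \lemref{internal:neighbors:f} for the $O(1)$ expected cost per fortress point and states the $O(\log n)$ bound per moat point --- whereas you unpack the fortress cost into cells-scanned (controlled by $\Ex{\reachX{\p}^d}=O(1/n)$) plus points-encountered (controlled by the $|\wardX{\p}|$ moment bound), which is a welcome elaboration but not a different argument.
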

\begin{proof}
    We store the points of $\P$ in a uniform grid with roughly
    $\Theta(n)$ cells in $[0,1]^d$. For every point $\p \in \P$, and
    every cone $\cone \in \ConesX{\p}$, we perform a marching cube
    algorithm to compute the closest point to $\p$ in $\cone \cap
    \P$. If the search distance exceeds $\vlr$, we abort the search.

    For a point $\p$ in the fortress $\IR$, computing the edges around
    $\p$ takes $O(1)$ time in expectation, by
    \lemref{internal:neighbors:f}. For points in the moat, their
    number is $O( n^{1-1/d} \log n)$, with high probability, and the
    search for each point is truncated after the distance exceeds
    $\vlr$. Per point, such a search takes $O( \log n)$ time. It
    follows that the overall expected running time is
    $O( n + n^{1-1/d} \log^2 n) = O(n)$.
\end{proof}

\myparagraph{A refresher on \Boruvka's algorithm}

Let $\G = (\VV,\Edges)$ be an undirected graph with $n$
vertices and $m \geq n$ edges, and weights on the edges. \Boruvka's
algorithm creates an empty forest $F_0$ over the vertices. Let
$\CC_{i-1}$ be the set of connected components of $F_{i-1}$. For
$\p \in \P$, let $\sigma_{i-1}(v)\in \CC_{i-1}$ denote the connected
component of $v$ in $F_{i-1}$.  While $|\CC_{i-1}|\geq 2$, for each
connected component $C \in \CC_{i-1}$, the algorithm adds the cheapest
edge leaving $\VX{C}$ to some other connected component of
$F_{i-1}$. Let $F_i$ be the resulting forest from $F_i$ after adding
these edges. The final forest is the desired \MST.

Each rounds takes time $O(m)$, and for any $i$ we have
$\cardin{\CC_{i}}\leq \cardin{\CC_{i-1}} / 2$. Thus, \Boruvka's
algorithm takes $O(m\log n)$ time.

\subsection{An \TPDF{$O(n \log n)$}{O(n log n)} time algorithm}

The underling graph in our case is
$\G(\P) = \bigl(\P, \Set{uv}{u,v, \in \P}\bigr)$ where the weight of
each edge is the distance between its endpoints. A naive
implementation of \Boruvka on $\G(\P)$ would require roughly quadratic
time.

\begin{lemma}%
    \lemlab{mst:log:n}%
    For a set $\P$ of $n$ random points in $[0,1]^d$, one can compute,
    in $O(n\log n)$ time, the euclidean minimum spanning tree of
    $\G(\P)$.
\end{lemma}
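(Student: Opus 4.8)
The plan is to run \Boruvka's algorithm on a sparse graph that provably spans the \MST. First I would use \lemref{y:graph:vlr} to build the truncated Yao graph $\GCY{\P}{\vlr}$ of \defref{g:c:y} in $O(n)$ expected time. This graph has $n$ vertices, and since every point $\p$ contributes at most one edge per cone of $\ConesX{\p}$, \lemref{filling-space-rd-cones} bounds its number of edges by $n\cdot 2^{O(d)} = O(n)$. I would then run \Boruvka's algorithm on $\GCY{\P}{\vlr}$: each round costs $O(n)$ and at least halves the number of connected components, so after $O(\log n)$ rounds it terminates, for a total of $O(n\log n)$ time.

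The crux is to argue correctness, and the key observation is that \emph{whenever $\GCY{\P}{\vlr}$ is connected it already contains $\MSTP$}. Indeed, the (untruncated) Yao graph $\GCX{\P}$ always contains $\MSTP$ \cite{y-cmstk-82}; and if $\GCY{\P}{\vlr}$ is connected then $\GX{\P}$ admits a spanning tree all of whose edges are shorter than $\vlr$, so the minimum over all spanning trees of the longest edge is below $\vlr$. Since $\MSTP$ is also a minimum-bottleneck spanning tree, every edge of $\MSTP$ is shorter than $\vlr$, hence no edge of $\MSTP$ was removed by the truncation, i.e.\ $\MSTP \subseteq \GCY{\P}{\vlr}$. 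Finally, every spanning tree of $\GCY{\P}{\vlr}$ is also a spanning tree of $\GX{\P}$ and therefore no lighter than $\MSTP$; as $\MSTP$ itself spans $\GCY{\P}{\vlr}$, it is the minimum spanning tree of $\GCY{\P}{\vlr}$, and it is the unique one because a random point set is almost surely in general position (distinct edge lengths). Hence \Boruvka's output on $\GCY{\P}{\vlr}$ is exactly $\MSTP$.

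It remains to handle the case where $\GCY{\P}{\vlr}$ is disconnected: there \Boruvka returns a spanning forest with more than one component, which is detected immediately, and one recomputes the \EMST directly on $\GX{\P}$ by brute force (e.g.\ Prim in $O(n^2)$ time). By \lemref{mst:short:edges}, with probability $\geq 1-1/n^{O(d)}$ every edge of $\MSTP$ has length at most $\vlr$ (strictly less, generically); since $\MSTP \subseteq \GCX{\P}$ always, the truncation then removes no edge of $\MSTP$, so $\MSTP \subseteq \GCY{\P}{\vlr}$ and in particular $\GCY{\P}{\vlr}$ is connected. Thus the fallback is triggered with probability at most $1/n^{O(d)}$ and contributes only $o(1)$ to the expected running time, giving a Las Vegas algorithm that always outputs $\MSTP$ and runs in $O(n\log n)$ expected time. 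I do not expect a genuinely hard step here: the lemma is essentially an assembly of \lemref{y:graph:vlr}, \lemref{mst:short:edges}, the sparsity of the Yao graph, and the minimum-bottleneck property of the \MST, the only mild subtlety being the standard genericity remark that the threshold $\vlr$ is not attained exactly, so the weak/strict inequality in \lemref{mst:short:edges} is immaterial.
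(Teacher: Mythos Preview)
Your proof is correct and follows the same approach as the paper: compute $\GCY{\P}{\vlr}$ via \lemref{y:graph:vlr}, use \lemref{mst:short:edges} together with the Yao-graph containment of the \MST to conclude the truncated graph contains $\MSTP$, and then run \Boruvka on this $O(n)$-edge graph. Your version is more careful---the paper's proof is three lines and does not spell out the $O(n)$ edge count, the minimum-bottleneck detour, or the Las Vegas fallback for the low-probability failure event---but the route is identical.
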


\begin{proof}%
    One can compute the graph $\GCY{\P}{\vlr}$, see \defref{g:c:y}, in
    $O(n)$ expected time, using \lemref{y:graph:vlr}. By
    \lemref{mst:short:edges}, this graph contains the \MST, which can
    be computed in $O(n \log n)$ time using \Boruvka's algorithm.
\end{proof}

We did some experiments on \Boruvka's algorithm, depicted in
\figref{degree_distribution_boruvka}.

\begin{figure}
    \centering%
    \includegraphics[width=0.8\textwidth]{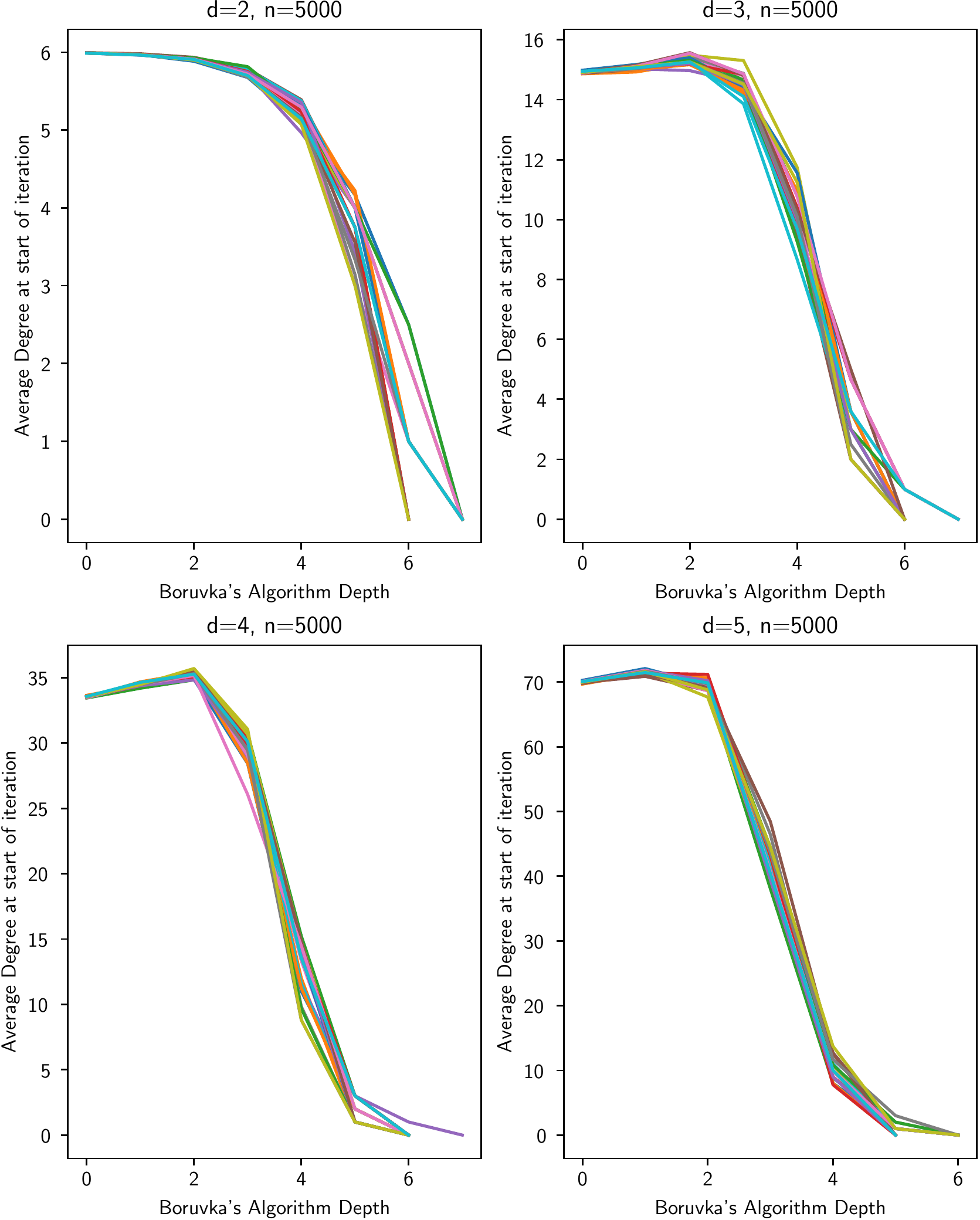}
    \caption{We randomly sample $20$ samples $\P_i, 1\leq i \leq 20$
       where $\P_i \subseteq [0,1]^d$ and $\cardin{\P_i}=5000$. We run
       \Boruvka algorithm with $\DTX{\P_i}$ as input for
       $d\in \{2,3,4,5\}$. In each iteration of \Boruvka algorithm, we
       record the current average degree of the components, and plot
       the average degree progression for the $20$ different samples.}
    \figlab{degree_distribution_boruvka}
\end{figure}

\subsection{Adapting \Boruvka to divide and conquer}

The algorithm precomputes the graph $\GA = \GCY{\P}{\vlr}$. Next, we
turn \Boruvka into a geometric divide and conquer algorithm. To this
end, let $\cell \subseteq [0,1]^d$ be some axis-parallel cube, and
consider computing the \MST of $\P \cap \cell$. Without any outside
information, the output can only be a forest that is part of the final
\MST, and a set of candidate edges that might participate in the final
\MST.  To this end, the algorithm splits $\cell$ into $\nu = 2^d$
identical subcells $\cell_1, \ldots, \cell_\nu$.

The algorithm recursively computes the \MST of
$\P_i = \P \cap \cell_i$, for all $i$. Specifically, the edges of the
\MST are edges of $\GA$, and as such, all the edges of the \MST with
exactly one endpoint in $\P_i$ are in the cut
$\Gamma(\P_i) = \Set{uv \in \EGX{\GA}}{u \in \P_i, v \in \P \setminus
   \P_i}$. Intuitively, the size of this cut is quite small (roughly)
$O(n^{1-1/d})$, and we can identify the vertices in $\P_i$ adjacent to
such edges. These vertices are \emphi{portals}, the set of all portals
in $\P_i$ is denoted by $\portalsX{\P_i}$.

\myparagraph{\Boruvka's algorithm with portals} Imagine running
\Boruvka only on the points of $\P_i$. In every round, each connected
component (in the current spanning forest) chooses the shortest edge
in the cut it defines, and add it to the constructed forest. The catch
is that if a connected component contains a portal point, then it
might be part of a larger tree (in the larger forest) that is outside
$\P_i$. As such, this cut is no longer well defined (as it involves
vertices and edges outside $\P_i$). Thus, a connected component that
contains a portal is \emphi{frozen} -- it can no longer choose edges
to add to the spanning tree.
During a \Boruvka round, all the components that are active (i.e., not
frozen), each chooses the shortest edge in the cut they induce --
note, that an active component might choose an edge connected to a
frozen component. Thus, a frozen component might grow by active
components attaching themselves to it. The algorithm continue doing
rounds till all components are frozen.

A natural implementation of \Boruvka is via collapsing each tree in
the forest being constructed into a single node, and among parallel
edges with the same endpoints, preserving the cheapest edge of the
bunch. Thus, the execution on the modified \Boruvka on $\P_i$ results
in an induced graph $\G_i$ over $\portalsX{\P_i}$ -- where the
surviving edges are potential edges for use by the \MST later on.

\myparagraph{Pruning} The number of edges of $\G_i$ is potentially too
large. The algorithm computes the \MST of $\G_i$ (treating it as its
own graph, ignoring portals) running the standard \Boruvka algorithm
on $\G_i$. The algorithm deletes from $\G_i$ all the edges that do not
appear in the computed \MST.

To recap -- every vertex of $\G_i$ is a collapsed tree forming part of
the final \MST. All the edges of $\G_i$ are candidate edges that might
appear in the final \MST -- all these edges form a spanning tree of
$\G_i$. See \figref{rbs:1} and \figref{rbs:2} for a toy dry run on the
\Boruvka step and the pruning step.

\begin{figure}[t]
    \centering%
    \begin{tabular}{ccc}%
      \includegraphics[width=0.3\linewidth,page=1]%
      {figs/restricted_boruvka_step_1}%
      \quad%
      &%
        \includegraphics[width=0.3\linewidth, page=2]%
        {figs/restricted_boruvka_step_1}%
        \quad%
      &%
        \includegraphics[width=0.3\linewidth, page=3]%
        {figs/restricted_boruvka_step_1}%
      \\%
      (a) & (b) & (c)\\[0.2cm]
      \includegraphics[width=0.3\linewidth,page=4]%
      {figs/restricted_boruvka_step_1}%
      \quad%
      &%
        \includegraphics[width=0.3\linewidth, page=5]%
        {figs/restricted_boruvka_step_1}%
        \quad%
      &%
        \includegraphics[width=0.3\linewidth, page=6]%
        {figs/restricted_boruvka_step_1}%
      \\%
      (d) & (e) & (f)
    \end{tabular}
    \caption{(a) shows the axis parallel cube (in blue) and the points
       inside that we restrict ourselves to. $(b)$ shows some of the
       edges of $\GCY{\P}{\vlr}$ inside the cube. $(c)$ shows the
       \emphi{portal} vertices in red, and all other points in
       blue. $(d)$ shows the connected components initially for
       \Boruvka algorithm. $(e)$ shows the edges in cyan that were
       added by restricted \Boruvka to $\EMST$ of $\P$ in the first
       round. (f) Shows the new components after round one of \Boruvka
       algorithm (note the previous blue vertex is now red because it
       joined a component with a portal).  See \figref{rbs:2} for the
       rest.  }
    \figlab{rbs:1}
\end{figure}

\begin{figure}[t]
    \begin{tabular}{ccc}
      \includegraphics[width=0.3\linewidth,page=7]%
      {figs/restricted_boruvka_step_1}%
      \quad%
      &%
        \includegraphics[width=0.3\linewidth, page=8]%
        {figs/restricted_boruvka_step_1}%
        \quad%
      &%
        \includegraphics[width=0.3\linewidth, page=9]%
        {figs/restricted_boruvka_step_1}%
      \\%
      (g) & (h) & (i)\\[0.2cm]
      \includegraphics[width=0.3\linewidth,page=10]%
      {figs/restricted_boruvka_step_1}%
      \quad%
      &%
        \includegraphics[width=0.3\linewidth, page=11]%
        {figs/restricted_boruvka_step_1}%
        \quad%
      \\%
      (j) & (k) \\
    \end{tabular}
    \caption{(g) again shows the new edges added to $\EMST$ in the
       second round of \Boruvka. $(i)$ shows the final connected
       components since all components have a portal. $(j)$ shows the
       edges of the minimum spanning tree of the components which
       might be in the \EMST of $\P$. $(k)$ shows the final graph
       returned by the restricted \Boruvka algorithm.  }
    \figlab{rbs:2}
\end{figure}

\myparagraph{The conquer stage}

The algorithm recursively computes the (collapsed) graphs
$\G_1, \ldots, \G_\nu$, for $i=1,\ldots, \nu$.  Next, the algorithm
computes the set of portals $\partial = \portalsX{\cell \cap \P}$,
which is contained in $\cup_i \portalsX{\P_i}$. Let
$\EG_1 = \bigcup_{i < j} (\P_i, \P_j)$ be the set of all possible
edges between the subproblems. Let $\EG_2 = \EG_1 \cap\EGX{
   \GA}$. Next, the algorithm computes the graph
$\G_\cell = \cup_i \G_i \cup \EG_2$. The algorithm runs the modified
\Boruvka with portals, described above, on the graph $\G_\cell$, with
$\partial$ being the set of portals (thus, all the vertices comping
from the children are portals in their own subproblem, but some of
them lose their portal status as they migrate to the parent
subproblem).

\myparagraph{The overall algorithm}

We apply the above algorithm to $[0,1]^d$ and $\P$. Note that the root
has no portals, so the output is a single tree which is the \MST.

\myparagraph{Some low level implementation details} %
We throw the points into a uniform grid over $[0,1]^d$, with each cell
having volume $\Theta(1/n)$. We construct the quadtree over this grid
in the natural way. We register each edge of $\GA$ with the lowest
node of the quadtree that contains both endpoints. This can be done in
$O(1)$ time per edge using a data-structure for \LCA queries in $O(1)$
time. Now, scanning the edges, each vertex can compute the level in
the quadtree where it stops being a portal. The \LCA operation can be
replaced by computing the level of the grid that contains a segment --
using the floor operation and bit operations, this can be done in
$O(1)$ time, see \cite{h-gaa-11}.  The rest of the algorithm
implementation is as described above.

\subsection{Analysis of the new MST algorithm}
\apndlab{m:s:t:analysis}

Clearly, edges that are added to an active component are edges that
are minimal in their respective cuts, and thus must appear in the
final \MST. The more mysterious step in the pruning stage -- let
$\p\q$ be an edge that was deleted by the pruning stage from
$\G_i$. Observe that there is a path $\pi$ between $\p$ and $\q$ in
the graph of $\G_i$ using edges that are shorter than $\p\q$. Namely,
$\p\q$ is the longest edge in a cycle, and can not appear in the final
\MST.

\subsubsection{Running time analysis}

\begin{lemma}
    \lemlab{few:portals}%
    Let $\cell$ be a quadtree cell of depth $i$. Then, the number of
    portals in $\cell \cap \P$ is bounded by
    $O( (n/2^{id} )^{1-1/d} \log^2 n)$, with high probability. This
    also bounds the total number of edges in $\GA$ adjacent to these
    vertices.
\end{lemma}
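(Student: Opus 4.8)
The plan is to show that every portal of $\cell\cap\P$ must lie in a thin tube of width $\vlr$ around the boundary $\partial\cell$, bound the volume of this tube, and then invoke the moment bound \lemref{moments} together with a union bound over all quadtree cells; the bound on the number of incident edges will follow by charging each edge to a Yao cone of one of its endpoints.

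First I would set $s = 1/2^i$, the side length of $\cell$, so that $\volX{\cell} = 1/2^{id}$ and $\cell$ is expected to contain $n/2^{id}$ points of $\P$. Since $\GA = \GCY{\P}{\vlr}$ retains only edges of length at most $\vlr$, any portal $v$ of $\cell\cap\P$ has an edge $vw\in\GA$ with $w\in\P\setminus\cell$ and $\ddY{v}{w}\le\vlr$; as the segment $vw$ crosses $\partial\cell$, the point $v$ lies within distance $\vlr$ of $\partial\cell$. Letting $\Xi$ denote the set of points of $[0,1]^d$ at distance at most $\vlr$ from $\partial\cell$, we therefore have $\portalsX{\cell\cap\P}\subseteq\P\cap\Xi$. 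An elementary estimate gives $\alpha := \volX{\Xi} = O\pth{ s^{d-1}\vlr + \vlr^d }$ (the first term dominating when $s\ge\vlr$, the second when $\cell$ is tiny), and substituting $\vlr = \pth{c_d(\log n)/n}^{1/d}$ from \Eqref{delta:val} yields
\[
    \alpha n = O\pth{ 2^{-i(d-1)} n^{1-1/d}(\log n)^{1/d} + \log n }
    = O\pth{ (n/2^{id})^{1-1/d}\log^2 n }.
\]

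Next I would apply \lemref{moments} to $\Xi$, after enlarging $\Xi$ to have volume exactly $1/n$ in the degenerate case $\alpha < 1/n$ (which costs only a constant factor). Choosing the parameter $t$ of \lemref{moments} so that $t\alpha n = \Theta(\log n)$ gives $\cardin{\P\cap\Xi} = O\pth{\alpha n + \log n} = O\pth{(n/2^{id})^{1-1/d}\log^2 n}$ with probability $\ge 1 - 1/n^{O(1)}$, the failure exponent being as large as desired by enlarging the constant hidden in $t$. Since $\portalsX{\cell\cap\P}\subseteq\P\cap\Xi$, this proves the portal bound for a fixed cell; and since the quadtree has $O(n)$ cells in total, a union bound makes the bound hold for all cells (of all depths) simultaneously, with high probability.

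For the edge bound, I would observe that any edge $vw\in\GA$ incident to a portal $v$ has $v$ within distance $\vlr$ of $\partial\cell$ and, since $\ddY{v}{w}\le\vlr$, its other endpoint within distance $2\vlr$ of $\partial\cell$; hence both endpoints lie in $\Xi'$, the set of points of $[0,1]^d$ at distance at most $2\vlr$ from $\partial\cell$, which obeys the same volume bound as $\Xi$ up to a constant factor. Each edge of the Yao graph $\GCX{\P}$ (and hence of $\GA$) can be oriented from an endpoint to that endpoint's nearest neighbor in one of that endpoint's $2^{O(d)}$ cones, and each vertex is the tail of at most one such oriented edge per cone. Orienting every portal-incident edge this way, its tail lies in $\P\cap\Xi'$ and each point of $\P\cap\Xi'$ is the tail of $O(1)$ of them; therefore the number of portal-incident edges is $O\pth{\cardin{\P\cap\Xi'}} = O\pth{(n/2^{id})^{1-1/d}\log^2 n}$, again with high probability by the moment argument applied to $\Xi'$.

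The only slightly delicate point is the volume estimate for the boundary tube together with the degenerate regime of very small or nearly-empty cells, where the additive $O(\log n)$ slack coming from the high-probability tail must be absorbed into the claimed bound — it can be, because $(n/2^{id})^{1-1/d}\log^2 n = \Omega(\log n)$ for every cell of the quadtree; the rest is bookkeeping.
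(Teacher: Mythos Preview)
Your proof is correct and follows essentially the same approach as the paper: portals must lie in the $\vlr$-thick tube around $\partial\cell$, bound that tube's volume by the surface area times $\vlr$, and apply \lemref{moments}. The one minor difference is in the edge count: the paper simply observes that every vertex of $\GA$ has at most $O(\log n)$ neighbors (all of them within distance $\vlr$, hence in a ball of volume $O((\log n)/n)$), whereas you use the Yao-graph out-degree-$O(1)$ property together with the $2\vlr$-tube $\Xi'$ --- both work, and yours is slightly tighter and more explicit about the union bound over cells.
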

\begin{proof}
    A point of $\p \in \cell$ that is in distance larger than $\vlr$
    from the boundary of $\cell$ can not be a portal, since $\GA$ does
    not contain such long edges. The volume of the moat $\Moat_\cell$
    containing such points is bounded by the surface area of $\cell$
    multiplied by $\vlr$. That is $\alpha = (2d \cdot /2^{id})
    \vlr$. Each such moat point has with high probability $O( \log n)$
    edges in $\GA$. It follows that the expected number of portal
    edges is $O( \alpha n ) = O( (n/2^{id} )^{1-1/d} \log^2 n)$, as
    long as $\alpha > (\log n)/n$, by \lemref{moments},
\end{proof}

\begin{lemma}
    \lemlab{runtime:mstlinear}
    The above algorithm runs in $O(n)$ expected time.
\end{lemma}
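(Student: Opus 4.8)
The plan is to charge the running time to (i)~preprocessing and (ii)~the work performed at each node of the recursion tree, which is precisely the quadtree built over the $\Theta(n)$-cell grid, and then to sum~(ii) level by level.

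Preprocessing costs $O(n)$ in expectation: throwing $\P$ into the grid and building the quadtree takes $O(n)$; computing $\GA=\GCY{\P}{\vlr}$ takes $O(n)$ expected time by \lemref{y:graph:vlr}; and since $\GA$ is a Yao graph it has $2^{O(d)}n=O(n)$ edges, so registering each edge with the lowest quadtree cell containing both its endpoints costs $O(1)$ per edge (using the floor/bit trick in place of \LCA, as in the low-level details), hence $O(n)$ overall. A fact I reuse: every edge of $\GA$ is ``born'' at exactly one quadtree node (its \LCA), so $\sum_{\cell}\cardin{\EG_2(\cell)}=\cardin{\EGX{\GA}}=O(n)$, where $\EG_2(\cell)$ is the set of cross-edges introduced at $\cell$.

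Now consider a node $\cell$ of depth $i$. It receives the pruned graphs $\G_1,\dots,\G_\nu$ of its $\nu=2^d$ children, forms $\G_\cell=\bigcup_j\G_j\cup\EG_2(\cell)$, runs the portal-restricted \Boruvka, and then prunes $\G_\cell$ with one ordinary \Boruvka run. Each pruned child graph $\G_j$ is a spanning forest of a graph on $\portalsX{\P_j}$, so it has at most $\cardin{\portalsX{\P_j}}$ edges; therefore $\cardin{\VX{\G_\cell}}\le\sum_j\cardin{\portalsX{\P_j}}$ and $\cardin{\EGX{\G_\cell}}\le\sum_j\cardin{\portalsX{\P_j}}+\cardin{\EG_2(\cell)}$. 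Applying \lemref{few:portals} to each depth-$(i{+}1)$ child gives $\sum_j\cardin{\portalsX{\P_j}}=O\pth{(n/2^{id})^{1-1/d}\polylog n}$ with high probability, and the same bound holds in expectation up to constants since the portal count never exceeds the point count in the cell, whose upper tail is exponentially small by \lemref{moments}. Both \Boruvka runs cost $O(\cardin{\EGX{\G_\cell}}\log n)$ in the worst case; the aim, though, is to show the portal-restricted run performs only $O(1)$ rounds in expectation, since most vertices of $\G_\cell$ are collapsed trees that already own a $\GA$-edge leaving $\cell$, hence are portals of $\cell$ and frozen from the start, while the non-frozen ones lie in thin slabs of width $\vlr$ around the $d$ internal walls of $\cell$ and are absorbed into frozen components quickly.

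Finally, sum over the quadtree. At depth $i$ there are $2^{id}$ cells, so the depth-$i$ contribution of the portal/edge counts is $2^{id}\cdot O\pth{(n/2^{id})^{1-1/d}}=O\pth{2^{i}n^{1-1/d}}$ (times $\polylog n$), and $\sum_{i=0}^{h}2^{i}=O(2^{h})=O(n^{1/d})$ because the grid has $\Theta(n)$ leaves; this telescopes to $O\pth{n^{1/d}\cdot n^{1-1/d}}=O(n)$ up to polylogarithmic factors, and adding $\sum_{\cell}\cardin{\EG_2(\cell)}=O(n)$ from preprocessing gives the total. The main obstacle is exactly this polylogarithmic slack: the crude product of ``$\polylog n$ portal-edges per unit of moat'' (\lemref{few:portals}) with ``$O(\log n)$ \Boruvka rounds per node'' only yields $O(n\polylog n)$, so reaching a clean $O(n)$ forces the two refinements flagged above — (a)~replacing the high-probability portal bound by the correct expectation so that the $\log^{2}n$ of \lemref{few:portals} is swallowed by the geometric decay of $(n/2^{id})^{1-1/d}$, and (b)~proving that the portal-restricted \Boruvka at an internal node terminates in $O(1)$ rounds in expectation, using that almost all of its vertices are frozen portals and the active ones are confined to the $\vlr$-slabs along the internal walls. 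The remaining summation is routine.
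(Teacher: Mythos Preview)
Your skeleton is exactly the paper's: bound the size of the graph $\G_\cell$ at each quadtree node by the portal counts of its children via \lemref{few:portals}, charge the \Boruvka work to $O(\text{size}\cdot\log n)$, and sum over the tree. The paper packages this as the recurrence
\[
T(n)=O\bigl(n^{1-1/d}\log^3 n\bigr)+\sum_i T(n_i),\qquad \sum_i n_i=n,\quad n_i<n/2\ \text{whp},
\]
and simply asserts that it solves to $O(n)$, pointing to the log-free variant $S(n)=O(n^{1-1/d})+2^d S(n/2^d)$ for intuition.

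Where you diverge is in treating the polylog factors as a real obstacle that forces two additional refinements. The paper does \emph{neither} of them: it never argues that the portal-restricted \Boruvka terminates in $O(1)$ rounds, and it never replaces the high-probability portal bound by a sharper expectation. Its implicit stance is that $n^{1-1/d}\log^3 n$ is polynomially sublinear in $n$ (e.g.\ $O(n^{1-1/(2d)})$ for large $n$), so the recurrence is of the form $T(n)=O(n^{1-\epsilon})+\sum_i T(n_i)$ with geometric shrinkage and therefore solves to $O(n)$ by the usual unrolling. Your level-by-level sum $\sum_i 2^i n^{1-1/d}\cdot\polylog n = O(n\,\polylog n)$ loses only because you froze the $\log$ at the \emph{global} $n$; the paper writes the recurrence in the local subproblem size, so the logs shrink with it and the geometric slack in the exponent absorbs them.

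Your refinement (b) is also shaky on its own terms and would not close the gap even if the restricted phase did finish in $O(1)$ rounds: the subsequent \emph{pruning} step is an ordinary \Boruvka on the graph of frozen components and costs $\Theta\bigl(\log(\text{\#components})\bigr)$ rounds regardless, so you would not shed that last $\log$ factor this way. The paper makes no attempt in this direction; it just lets the recurrence eat the polylogs.
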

\begin{proof}
    Let $\nu=2^d$, and $\P_1, \ldots, \P_\nu$ be the points sent to
    the children of the root of the quadtree.  Let
    $n_i' = \cardin{\portalsX{\P_i}}$, for all $i$.  By
    \lemref{few:portals}, $n_i = O(n^{1-1/d} \log^2 n )$ with high
    probability, and this also bounds the number of edges these
    portals have. Note, that each $\G_i$ has exactly $n_i'-1$
    edges. Thus, the graph created in the root has $\sum_i n_i$
    vertices, and $O(2^d n^{1-1/d} \log^2 n )$ edges. Running \Boruvka
    algorithm on this graph takes $O(n^{1-1/d} \log^3 n )$ time. We
    thus get the recurrence
    \begin{equation*}
        T(n) = O(n^{1-1/d} \log^3 n ) + \sum_i T(n_i).
    \end{equation*}
    It is easy to verify that the solution to this recurrence is
    $O(n)$, as $\sum_i n_i = n$ and $n_i < n/2$ with high
    probability. (To convince yourself of this, consider the
    over-simplified recurrence $S(n) = O(n^{1-1/d}) + 2^dS(n/2^d)$.)
\end{proof}

\begin{remark}
    Note that the linear time \MST algorithm can also be extended to a
    linear time \MST algorithm for graphs with small separators. In
    that case, the portals are the separator vertices in the separator
    hierarchy, and we run the restricted \Boruvka bottom up on the
    separator decomposition tree.
\end{remark}

\subsection{The result}

The details of the following results are described in \secref{m_s_t}.

\begin{theorem}
    For fixed constant $d$, the \MST of $n$ uniformly and
    independently sampled points from $ [0,1]^d$ can be computed, by
    the above algorithm, in $O(n)$ expected time.
\end{theorem}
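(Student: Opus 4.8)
The plan is to assemble the theorem from the pieces already established: a precomputed sparse graph that still spans the \MST, the correctness of the recursive restricted-\Boruvka, and the recurrence for the running time. First I would record the two structural facts that make the algorithm well founded. By \lemref{mst:short:edges}, every edge of the \MST of $\P$ has length at most $\vlr$ with probability $\geq 1-1/n^{O(d)}$, and the Yao graph $\GC$ always contains the \MST \cite{y-cmstk-82}; hence the truncated Yao graph $\GA = \GCY{\P}{\vlr}$ contains the \MST with high probability, and by \lemref{y:graph:vlr} it can be built in $O(n)$ expected time. Building the quadtree over the $\Theta(1/n)$-volume grid and registering each edge of $\GA$ at its lowest enclosing cell costs $O(1)$ per edge using \LCA (or the equivalent floor/bit trick), so the whole preprocessing is $O(n)$ expected. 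From here on I would condition on the good event that $\GA \supseteq \MSTP$; its complement has probability $1/n^{O(1)}$ and is absorbed by a trivial $O(n^2)$ fallback.

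Next, correctness of the divide-and-conquer. I would prove, by induction on the quadtree, the invariant that the recursive call on a cell $\cell$ returns a graph $\G_\cell$ whose vertices are contracted subtrees guaranteed to be subtrees of the final \MST, and whose edges form a spanning tree of $\G_\cell$ consisting only of \MST-candidate edges, with the property that the final \MST is recovered by running \Boruvka on the union of these partial structures with $\EG_2$. The two halves are the standard cut and cycle arguments: every edge a restricted-\Boruvka round commits is the lightest edge leaving a vertex set $S$ that is a union of components none of which contains a portal of the current subproblem --- so \emph{all} \MST-candidate edges leaving $S$ are present in the current graph, and by the cut property the committed edge lies in the \MST; conversely, every edge deleted in the pruning step is the heaviest edge of a cycle of $\G_i$ built from strictly lighter edges, so by the cycle property it is not in the \MST, and since $\GA$ already dropped no \MST edge, nothing needed is ever lost. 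The delicate part --- and the step I expect to be the main obstacle --- is justifying the freezing rule: one must argue that the instant a component acquires a portal it is unsafe to let it choose another cut edge (its cut inside $\P_i$ may differ from its cut in $\G(\P)$), yet that promoting it to the parent cell loses no information, because every edge of $\GA$ leaving $\cell$ has length at most $\vlr$ and hence is incident to a point within distance $\vlr$ of $\partial\cell$, i.e.\ to a portal of $\cell$; thus the parent's portal set fully exposes the interface of every frozen component, and the missing cut edges reappear as candidate edges at the parent level. Making this precise --- in particular that $\portalsX{\cell \cap \P} \subseteq \cup_i \portalsX{\P_i}$ and that contraction preserves the lightest edge across every cut that will ever matter --- is the technical heart of the argument.

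Finally, the running time. By \lemref{few:portals}, a depth-$i$ cell has $O((n/2^{id})^{1-1/d}\log^2 n)$ portals with high probability, which also bounds the number of $\GA$-edges incident to them; hence at a node handling $n$ points the children contribute spanning trees with a total of $\sum_i n_i$ vertices, the assembled graph has only $O(2^d n^{1-1/d}\log^2 n)$ edges, and a full \Boruvka sweep there costs $O(n^{1-1/d}\log^3 n)$. This yields the recurrence $T(n) = O(n^{1-1/d}\log^3 n) + \sum_i T(n_i)$ with $\sum_i n_i = n$ and, by Chernoff, $n_i \leq n/2$ with high probability; as in \lemref{runtime:mstlinear} this solves to $O(n)$. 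Adding the $O(n)$ preprocessing, and charging the low-probability failure events (a cell with too many portals, an unbalanced split, $\GA$ missing an \MST edge --- each of probability $1/n^{O(1)}$) to the $O(n^2)$ worst case, keeps the total expected running time $O(n)$, which is the claim.
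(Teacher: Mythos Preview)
Your proposal is correct and follows the paper's own approach essentially verbatim: precompute $\GA=\GCY{\P}{\vlr}$ via \lemref{y:graph:vlr}, invoke \lemref{mst:short:edges} so that $\GA\supseteq\MSTP$ with high probability, establish correctness of the restricted-\Boruvka recursion via the cut and cycle properties, and bound the running time by the recurrence of \lemref{runtime:mstlinear} using the portal bound of \lemref{few:portals}. If anything, your treatment of the freezing rule and of the low-probability failure events (via an $O(n^2)$ fallback to keep the expectation honest) is more careful than the paper's, which handles correctness in a single informal paragraph and does not spell out how bad events are absorbed into the expected time.
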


\section{Simple distance selection in %
   \TPDF{$O(n^{4/3}\log^{2/3}n)$}{n to the 4/3}
   time in \TPDF{$d=2$}{d=2}}

\myparagraph{The task} The input is a set $\P$ of $n$ points picked
randomly in $[0,1]^2$. For two sets $X,Y$, let
\begin{equation*}
    X \aotimes Y = \Set{ \bigl \{x,y\}}{ x \in X, y\in Y, x \neq y}
\end{equation*}
be the set of all unordered pairs in $X \times Y$.  Let
$\Pairs = \P \aotimes \P $, and for a fixed radius $\lambda$, let
$\Pairs_{\leq r} = \Set{ \{\p, \q\} \in \Pairs}{\dY{\p}{\q} \leq r}$
be the number of all pairs in $\P$ that are in distance at most $r$
from each other.  The task at hand is to compute
$\cardin{\Pairs_{\leq r}}$.

\myparagraph{Basic idea and some tools}

Let $\grid$ be a uniform $\nG \times \nG$ grid $\grid$, where
$\nG = \ceil{ (n / \log n)^{1/3}}$.  Let $\P_{i,j}$ denote the points
of $\P$ that fall in the grid cell
$\cell_{i,j} = [i/\nG,(i+1)/\nG] \times [j/\nG, (j+1)/\nG]$. Let
$\dc = \diamX{\cell_{i,j}} = \sqrt{2}/ \nG$ be the diameter of a grid
cells. We assume here that $r > 8 \dc$. The case for $r\leq 8\dc$ can be handled simply by bruteforce search of a fine grid.

Let
$\pr_{i,j} = \cardin{\Set{ \p \q \in \Pairs_{\leq r}}{ \p \in
      \P_{i,j}}}$. Observe that
$\cardin{\Pairs_{\leq r}} = \sum_{i,j} |\pr_{i,j}|/2 $. Thus, we
restrict our attention to computing the values of $\pr_{i,j}$, for all
$i,j$.  For a grid cell $\cell \in \grid$, consider the sets
\begin{equation*}
    \ballZY{\grid}{\cell}
    =%
    \Set{ \cellA \in \grid}{ \cellA \subseteq
       \ballY{\cenX{\cell}}{r-2\dc} \bigr.}
    \quad\text{and}\quad
    \BallY{\grid}{\cell}
    =%
    \Set{ \cellA \in \grid}{ \cellA \cap
       \ballY{\cenX{\cell}}{r+2\dc} \neq \emptyset \bigr.},
\end{equation*}
where $\cenX{\cell}$ is the \emph{center} of $\cell$.  All the grid
cells of $\ballZY{\grid}{\cell}$ are contained in any disk of radius
$r$ centered at a point of $\cell$.  Similarly,
$ \BallY{\grid}{\cell}$ is a super set of all the grid cells that
cover any disk of radius $r$ centered at any point of $\cell$.

Let
$\alpha_{i,j} = \cardin{(\ballZY{\grid}{\cell_{i,j}}\cap \P) \aotimes
   \P_{i,j}}$ and
$\beta_{i,j} = \cardin{(\BallY{\grid}{\cell_{i,j}}\cap \P) \aotimes
   \P_{i,j}}$. Observe that
$\alpha_{i,j } \leq \pr_{i,j} \leq \beta_{i,j}$.  The set
$\Ring_{i,j} = \BallY{\grid}{\cell_{i,j}} \setminus
\ballZY{\grid}{\cell_{i,j}}$ is formed by all the grid cells
intersecting a ring with outer radius $r+2\dc$ and inner radius
$r-2\dc$. Let $\PA_{i,j} = (\cup\Ring_{i,j}) \cap \P_{i,j}$.  Observe
that $\P_{i,j}$ and $\PA_{i,j}$ are disjoint. Consider the set of
pairs they induce $\P_{i,j} \aotimes \PA_{i,j}$, and let $\tau_{i,j}$
be the number of pairs in $\P_{i,j} \aotimes \PA_{i,j}$ of length at
most $r$. We have that $\pr_{i,j} = \alpha_{i,j} + \tau_{i,j}$. Thus,
the algorithm would compute the quantities $\alpha_{i,j}$ and
$\tau_{i,j}$ for all $i,j$. The algorithm would then compute
$\sum_{i,j} \pr_{i,j}/2$, which is the desired quantity.

\myparagraph{Low level procedures} In the following, we assume that
$n_{i,j} = |\P_{i,j}| = O(n/\nG^2)$.

\begin{lemma}
    After $O(n + \nG^2)$ preprocessing, given a query of numbers
    $i,j$, one can compute $\alpha_{i,j}$ in $O( \nG)$ time.
\end{lemma}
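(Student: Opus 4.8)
The plan is to reduce the computation of $\alpha_{i,j}$ to evaluating two cardinalities and then substituting into a closed form. First I would observe that, since $r > 8\dc$, the cell $\cell_{i,j}$ itself lies inside the disk $\ballY{\cenX{\cell_{i,j}}}{r-2\dc}$: the distance from $\cenX{\cell_{i,j}}$ to any corner of $\cell_{i,j}$ is $\dc/2 \le r-2\dc$. Hence $\P_{i,j} \subseteq \ballZY{\grid}{\cell_{i,j}} \cap \P$. Writing $A = \ballZY{\grid}{\cell_{i,j}} \cap \P$, $B = \P_{i,j}$, $a = |A|$, $b = |B|$, and using $B \subseteq A$, a direct count of the unordered pairs in $A \aotimes B$ (pairs with both endpoints in $B$, plus pairs with one endpoint in $A\setminus B$ and one in $B$) gives $\alpha_{i,j} = \binom{b}{2} + (a-b)\,b$. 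So it suffices to compute $a$ and $b$ fast.

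For the preprocessing, bucket the points of $\P$ into the $\nG\times\nG$ grid in $O(n)$ time (hashing or direct array indexing), recording $n_{k,l} = |\P_{k,l}|$ for every cell, and for each grid row $l$ compute the $1$d prefix sums $S_l[\cdot]$ of the sequence $(n_{k,l})_k$; this is $O(n+\nG^2)$ total. I would also precompute, once, a fixed \emph{disk stencil}: for each vertical offset $\delta$, the interval $[\mathrm{lo}(\delta),\mathrm{hi}(\delta)]$ of horizontal offsets $\gamma$ such that the cell at relative position $(\gamma,\delta)$ from a query cell lies fully inside a radius-$(r-2\dc)$ disk centered at the query cell's center. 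This data is independent of the query: the cell centers form a regular grid and the radius is fixed, so the condition depends only on $(\gamma,\delta)$. Moreover, since the disk is convex, the contained cells in a fixed row form a contiguous interval in $\gamma$ — if $\cell_{\gamma_1,\delta}$ and $\cell_{\gamma_2,\delta}$ are both contained and $\gamma_1<\gamma<\gamma_2$, then each corner of $\cell_{\gamma,\delta}$ is a convex combination of the corresponding corners of the two outer cells, hence also in the disk. Only $O((r-2\dc)\nG) = O(\nG)$ rows are nonempty (as $r \le \sqrt{2}$), so the stencil has $O(\nG)$ size and is built within the preprocessing budget.

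Given a query $(i,j)$, I read $b = n_{i,j}$ directly, and compute $a$ by summing over the stencil rows: $a = \sum_{\delta}\bigl( S_{j+\delta}[\,i+\mathrm{hi}(\delta)\,] - S_{j+\delta}[\,i+\mathrm{lo}(\delta)-1\,]\bigr)$, where offsets falling outside the $\nG\times\nG$ grid contribute $0$ (this is correct because $\P\subseteq[0,1]^2$, and out-of-range column indices are absorbed by clamping the prefix sums). Each of the $O(\nG)$ nonempty rows costs one $O(1)$ prefix-sum lookup, so $a$, and hence $\alpha_{i,j} = \binom{b}{2} + (a-b)\,b$, is computed in $O(\nG)$ time.

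There is no deep obstacle here; the two points needing care are getting the $\aotimes$-count right — which relies on the observation $\P_{i,j}\subseteq A$, and this is precisely why the hypothesis $r > 8\dc$ is imposed — and justifying that each stencil row is a single contiguous interval so that one prefix-sum query per row suffices; both follow from convexity of the disk together with $r > 8\dc$ (and $r \le \sqrt{2}$ bounding the number of rows).
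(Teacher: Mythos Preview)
Your proof is correct and follows essentially the same approach as the paper: bucket points into the grid, build per-row prefix sums, use that each row of $\ballZY{\grid}{\cell_{i,j}}$ is a contiguous interval (by convexity of the disk) to get the total count $a$ in $O(\nG)$ time, and plug into a closed-form expression for $\alpha_{i,j}$. Your formula $\binom{b}{2}+(a-b)b$ is algebraically identical to the paper's $n_{i,j}\cdot a - n_{i,j}^2 + \binom{n_{i,j}}{2}$; you are simply more explicit in justifying $\P_{i,j}\subseteq A$ via $r>8\dc$ (which the paper's formula uses implicitly) and in precomputing the row offsets as a stencil, but neither changes the substance of the argument.
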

\begin{proof}
    The algorithm computes the grid $\grid$, the subset of points in
    $\P$ in each grid cell, and their number. The algorithm then
    preprocess the grid so that given an a contiguous range of cells
    in a row (of the grid), the algorithm can report the number of
    points in this range in $O(1)$ time. This can be done using prefix
    sums for each row of the grid.

    The desired quantity is
    $\alpha_{i,j} = \cardin{(\ballZY{\grid}{\cell_{i,j}}\cap \P)
       \aotimes \P_{i,j}} = n_{i,j} \sum_{\cell_{u,v} \in
       \ballZY{\grid}{\cell_{i,j}}} n_{u,v} - n_{i,j}^2 +
    \binom{n_{i,j}}{2} $.  The set $\ballZY{\grid}{\cell}$ in a row
    (of the grid) is just an contiguous box, and one can compute the
    number of points of $\P$ inside this box in $O(1)$ time. Thus,
    computing
    $\sum_{\cell_{u,v} \in \ballZY{\grid}{\cell_{i,j}}} n_{u,v}$ can
    be done in $O( \nG)$ time.
\end{proof}

\begin{lemma}
    \lemlab{p:a:i:j}%
    After $O(n + \nG^2)$ preprocessing, given a query numbers $i,j$,
    one can compute the set
    $\PA_{i,j} = (\cup \Ring_{i,j}) \cap \P_{i,j}$ in $O( n / \nG)$
    time (this also bounds its size).
\end{lemma}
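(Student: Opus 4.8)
The plan is to reuse the preprocessing of the previous lemma: in $O(n + \nG^2)$ time we build the grid $\grid$ and store, for each cell, the list of points of $\P$ lying in it (together with the counts $n_{u,v}$). On a query $i,j$ we then enumerate the cells of $\Ring_{i,j} = \BallY{\grid}{\cell_{i,j}} \setminus \ballZY{\grid}{\cell_{i,j}}$ one grid-row at a time, and return the concatenation of the point lists stored in these cells, which is exactly $\PA_{i,j}$.

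The key structural fact is that a single grid-row contributes at most two contiguous runs of cells to $\Ring_{i,j}$, and these runs can be found in $O(1)$ time. Fix a row $v$. Its horizontal strip meets the disk $\ballY{\cenX{\cell_{i,j}}}{r+2\dc}$ in one $x$-interval, so the cells of $\BallY{\grid}{\cell_{i,j}}$ in row $v$ form one contiguous run $[w_1,w_2]$, whose endpoints are given in closed form (solve a quadratic for the strip–circle crossing, round, and clip to $\{0,\dots,\nG-1\}$). Likewise, the cells of row $v$ that lie entirely inside $\ballY{\cenX{\cell_{i,j}}}{r-2\dc}$ — i.e.\ the cells of $\ballZY{\grid}{\cell_{i,j}}$ in row $v$ — form a contiguous run $[w_3,w_4]\subseteq[w_1,w_2]$ (possibly empty): a cell of row $v$ is contained in that disk iff its corner farthest from $\cenX{\cell_{i,j}}$ is within distance $r-2\dc$, which is again a closed-form threshold on the column index. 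Hence the ring cells of row $v$ are $[w_1,w_3-1]\cup[w_4+1,w_2]$. Only rows whose $y$-strip meets $\ballY{\cenX{\cell_{i,j}}}{r+2\dc}$ can contribute, and since $r\le\sqrt2$ there are $O(r\nG+1)=O(\nG)$ of them; iterating over these rows and computing the (at most) two runs per row costs $O(\nG)$ time overall.

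Collecting the points then costs $O(|\PA_{i,j}|)$, so it remains to bound this size. Every ring cell lies within distance $\dc$ of the annulus of radii $r\pm 2\dc$, hence $\cup\Ring_{i,j}$ is contained in the annulus of radii $r\pm 3\dc$, whose area is $O(r\dc)=O(1/\nG)$ (using $r\le\sqrt2$); and this volume is at least that of one cell, $1/\nG^2\ge 1/n$ (since $\nG^2=O(n/\log n)\le n$), provided $\Ring_{i,j}\neq\emptyset$ — if it is empty there is nothing to compute. Writing $\alpha=\volX{\cup\Ring_{i,j}}=O(1/\nG)$, \lemref{moments} gives $|\PA_{i,j}| = |\P\cap\cup\Ring_{i,j}| = O(\alpha n) = O(n/\nG)$ with high probability, and a union bound over the $O(\nG^2)=O(n)$ cells makes this hold for all queries simultaneously. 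Therefore the query time is $O(\nG)+O(|\PA_{i,j}|) = O(\nG+n/\nG) = O(n/\nG)$, where the last step uses $\nG=O(n^{1/3})=O(n/\nG)$; the stated size bound is precisely the estimate just used.

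The only mildly delicate step is computing the run $[w_3,w_4]$ of cells of a row lying \emph{entirely} inside the inner disk, since full containment is a two-dimensional condition; reducing it to ``the cell's farthest corner is within distance $r-2\dc$'' turns it into a one-dimensional comparison, after which everything is routine, so no genuine obstacle arises.
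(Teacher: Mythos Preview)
Your proof is correct, but you are working harder than needed compared to the paper.

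The paper's argument is two lines: the annulus defining $\Ring_{i,j}$ has radial width $4\dc$, i.e.\ a constant number of grid cells across, so $|\Ring_{i,j}| = O(\nG)$ and the cells can be listed in $O(\nG)$ time; then, using the standing assumption (justified by \lemref{fewpoints-pergrid}) that every cell holds $O(n/\nG^2)$ points, the total is $O(\nG)\cdot O(n/\nG^2) = O(n/\nG)$. Your row-by-row enumeration with closed-form run endpoints is a valid way to list the ring cells, but once you observe $|\Ring_{i,j}| = O(\nG)$ it is unnecessary detail. Likewise, your size bound via the annulus area and \lemref{moments} is correct, yet more indirect than just multiplying the cell count by the per-cell point bound already assumed in this section. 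In short: same approach at heart, but you reprove from scratch two facts the paper gets for free from the grid geometry and the per-cell occupancy assumption.
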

\begin{proof}
    The set $\Ring$ is a ``ring'' of the grid of with $4$, and thus
    $|\Ring_{i,j}| =O(\nG)$. In particular, the set $\Ring_{i,j}$ can
    be computed in $O( \nG)$ time. The set $\PA_{i,j}$ is formed by
    collecting all the point sets $\P_{i,j}$ for cells
    $\cell_{i,j} \in \Ring_{i,j}$. By assumption,
    $|\P_{i,j}| =O( n /\nG^2)$, which readily implies that
    $|\PA_{i,j}| = O( \nG \cdot n /\nG^2 ) = O(n /\nG)$
\end{proof}

\begin{lemma}
    \lemlab{count:p:q}%
    Let $\PA$ and $\PB$ be two disjoint point sets in the plane, with
    $|\PA| < |\PB|$. Then one can compute the number of pairs of
    points in $\PA \aotimes \PB$ that are in distance at most $r$ from
    each other in $O( |\PA|^2 + |\PB| \log |\PA|)$ time.
\end{lemma}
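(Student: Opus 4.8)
The plan is to keep all the work that involves the larger set $\PB$ down to one cheap query per point, by preprocessing only the smaller set $\PA$. Write $a=\cardin{\PA}$ and $b=\cardin{\PB}$. The quantity we want is
\begin{equation*}
  \sum_{q\in\PB}\cardin{\Set{\p\in\PA}{\dY{\p}{q}\le r}}
  =
  \sum_{q\in\PB}\cardin{\Set{\p\in\PA}{q\in\ballY{\p}{r}}},
\end{equation*}
so for each $q\in\PB$ we only need its \emph{depth} in the arrangement of the $a$ congruent disks $\ballY{\p}{r}$, $\p\in\PA$ -- that is, how many of these disks contain $q$.

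First I would build $\Arr$, the arrangement of the $a$ circles of radius $r$ centered at the points of $\PA$. Congruent circles form a family of pseudocircles, so every pair crosses at most twice and $\Arr$ has $O(a^2)$ vertices, edges, and faces; it can be built as a DCEL in $O(a^2)$ time by inserting the circles one at a time (each insertion touches only the zone of the new circle). Next I would compute, for every face of $\Arr$, its depth, by a single BFS/DFS over the faces starting from the unbounded face (depth $0$): crossing an edge changes the depth by exactly $\pm 1$ according to which side of the corresponding circle one moves to, so this also costs $O(a^2)$. Finally I would attach to $\Arr$ a point-location structure answering queries in $O(\log a)$ time.

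Having done all this, the algorithm iterates over $\PB$: for each $q\in\PB$ it locates $q$ in $\Arr$ in $O(\log a)$ time and adds the stored depth of the containing face to a running total; after the loop, half of the double count (here just the single count) is the desired quantity. Summing, this is $O(a^2)$ for the labeled arrangement together with its search structure, plus $O(b\log a)$ for the $b$ queries, i.e.\ $O(\cardin{\PA}^2+\cardin{\PB}\log\cardin{\PA})$. The hypothesis $\cardin{\PA}<\cardin{\PB}$ is essential here: building the arrangement on $\PB$ instead would already cost $\Theta(\cardin{\PB}^2)$.

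The point to watch -- and the main obstacle -- is obtaining the depth-labeled arrangement \emph{together with} its point-location structure within the $O(a^2)$ budget, rather than the $O(a^2\log a)$ that a naive vertical sweep plus a generic point-location preprocessing would give; this is where one must lean on an incremental, zone-based construction of the arrangement and a point-location structure whose preprocessing is linear in the subdivision size (or else simply absorb the extra logarithmic factor into the $O(n^{4/3}\polylog n)$ target of the section). One should also fix a deterministic tie-breaking rule, or apply an infinitesimal symbolic perturbation of the centers, to handle degeneracies -- three circles through one point, mutually tangent circles, or a query point landing exactly on a circle -- since the lemma is stated for arbitrary, not random, point sets; these configurations change nothing essential but must be pinned down so that ``the face containing $q$'' and its depth are well defined.
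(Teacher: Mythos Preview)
Your proposal is correct and follows essentially the same approach as the paper: build the arrangement of the $|\PA|$ radius-$r$ disks centered at $\PA$, label each face with its depth, preprocess for point location, and then query once per point of $\PB$. The paper simply cites \cite{bcko-cgaa-08} for the $O(|\PA|^2)$ arrangement-plus-point-location bound where you spell out the zone-based incremental construction and the BFS depth labeling, but the method is identical.
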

\begin{proof}
    Let $\D$ be the set of disks of radius $r$ centered at the points
    of $\PA$. Compute the arrangement $\Arr = \ArrX{\PA}$, and compute
    for every face of $\Arr$ how many disks of $\D$ contain
    it. Furthermore, preprocess this arrangement for point-location
    queries in logarithmic time. This is all standard, and can be done
    in $O( |\PA|^2)$ time \cite{bcko-cgaa-08}. Now compute for each
    point of $\PB$ how many points of $\PA$ are in distance at most
    $r$ from it, by performing a point-location query in $\Arr$, and
    returning the depth of the query point.
\end{proof}

\myparagraph{Algorithm restated}

The algorithm computes $\alpha_{i,j}, \P_{i,j}, \PA_{i,j}$ for all
$i,j$ using the above procedures. It then computes for all $i,j$, the
quantity $\tau_{i,j}$ by using \lemref{count:p:q}. The algorithm now
computes directly $\sum_{i,j} (\alpha_{i,j} + \tau_{i,j})/2$ and
return it as the desired quantity.

\myparagraph{Analysis}
\begin{lemma}
    \lemlab{fewpoints-pergrid}
    Assuming $\nG = O( \sqrt{n} / \log n)$, with probability
    $\geq 1-1/n^{O(1)}$, each grid cell contains $O(n / \nG^2)$ points
    of the random point set $\P$.
\end{lemma}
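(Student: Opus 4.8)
The plan is to combine a Chernoff-type tail bound for a single cell with a union bound over all $\nG^2$ cells, and the crucial point is that the hypothesis $\nG = O(\sqrt{n}/\log n)$ forces the expected occupancy of each cell to be $\Omega(\log^2 n)$, which is exactly what makes the per-cell deviation probability super-polynomially small and thus survivable by the union bound.

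First I would fix a grid cell $\cell = \cell_{i,j}$ and observe that it is a measurable subset of $[0,1]^2$ of volume $\alpha = \volX{\cell} = 1/\nG^2$. The assumption $\nG = O(\sqrt{n}/\log n)$ gives, for all sufficiently large $n$, both $\nG^2 \le n$ (hence $\alpha \ge 1/n$, so \lemref{moments} is applicable) and $\alpha n = n/\nG^2 = \Omega(\log^2 n)$. Taking $t$ to be a sufficiently large absolute constant (any $t > 2e$ suffices), \lemref{moments} then yields $\Prob{\cardin{\P \cap \cell} > t\alpha n} \le 1/2^{t \alpha n} \le 1/2^{\Omega(\log^2 n)}$, which is smaller than $1/n^{c}$ for every fixed $c$, once $n$ is large. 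Note $t\alpha n = t\, n/\nG^2 = O(n/\nG^2)$, so the ``good'' event for a cell is exactly that it holds $O(n/\nG^2)$ points.

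Next I would take a union bound over the at most $\nG^2 \le n$ cells of $\grid$: the probability that some cell holds more than $t\, n/\nG^2$ points is at most $\nG^2 \cdot 1/2^{\Omega(\log^2 n)} \le n \cdot 1/2^{\Omega(\log^2 n)} \le 1/n^{O(1)}$. Hence with probability $\ge 1 - 1/n^{O(1)}$ every cell satisfies $n_{i,j} = \cardin{\P_{i,j}} = O(n/\nG^2)$, which is the claim. I do not expect a genuine obstacle here; the only thing to be careful about is that the proof really does rely on the stated bound on $\nG$: were $\nG$ allowed to be as large as $\Theta(\sqrt{n})$, the expected per-cell count would be $\Theta(1)$ and a Chernoff bound would give only a constant failure probability per cell, which the union bound could not absorb. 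The $\log^2 n$ slack in $\nG = O(\sqrt{n}/\log n)$ is precisely what closes this gap, and in the algorithm the lemma is invoked with $\nG = \ceil{(n/\log n)^{1/3}}$, comfortably inside the hypothesis.
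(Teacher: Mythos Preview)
Your proof is correct and follows essentially the same approach as the paper's own proof: note that the hypothesis $\nG = O(\sqrt{n}/\log n)$ forces the expected per-cell count to be $\Omega(\log^2 n)$, apply a Chernoff-type bound (you invoke \lemref{moments}; the paper cites Chernoff's inequality directly) to get a super-polynomially small failure probability per cell, and finish with a union bound over the $\nG^2 \le n$ cells.
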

\begin{proof}
    Each grid cell in the grid, in expectation, has
    $n / \nG^2 = \Omega( \log^2 n)$ points of $\P$ in it. Now using
    Chernoff's inequality it follows that this quantity is
    concentrated (say up to $1\pm 1/2$ around its expectation) with
    probability $\geq 1-1/n^{O(1)}$. Using the union bound on the
    $\nG^2$ grid cells, imply the claim.
\end{proof}

\myparagraph{Running time analysis}

Computing the sets $\PA_{i,j}$, for all $i,j \in \IRX{\nG}$, takes
$O( n \nG)$ time, using \lemref{p:a:i:j}. Computing $\tau_{i,j}$,
using \lemref{count:p:q}, takes
\begin{equation*}
    O( |\P_{i,j}|^2 + |\PA_{i,j}| \log |\P_{i,j}|) = O\pth{
       (n/\nG^2)^2 + (n/\nG)\log n}
\end{equation*}
time. doing this for all $i,j \in \IRX{\nG}$ takes
$ O\pth{ n^2/\nG^2 + n \nG\log n }$ time. Clearly, this dominates the
running time. Solving for $n^2/\nG^2 = n \nG\log n $, we get
$\nG = (n / \log n)^{1/3}$. Clearly, the last step dominates the
overall running time, which is
$o(n \nG\log n ) = O( n^{4/3} \log^{2/3} n)$.

\begin{theorem}
    Let $\P$ be a set of $n$ points picked uniformly and independently
    from $[0,1]^2$, and let $r$ be a parameter. One can compute, using
    the algorithm described above, the number of pairs of points in
    $\P$ in distance $\leq r$ from each other, in
    $O( n^{4/3} \log^{2/3} n)$ time. The result returned by the
    algorithm is always correct, and the bound on the running time
    holds with probability $\geq 1 -1 /n^{O(1)}$.
\end{theorem}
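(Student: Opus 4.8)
The plan is to split the argument to match the two assertions in the theorem: a deterministic correctness claim, and a running-time bound that holds with probability $\geq 1 - 1/n^{O(1)}$.

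For correctness, I would first verify the cell-level classification underlying the decomposition $\pr_{i,j} = \alpha_{i,j} + \tau_{i,j}$. Fix a cell $\cell_{i,j}$ and a point $\p \in \P_{i,j}$. If a cell $\cellA$ lies in $\ballZY{\grid}{\cell_{i,j}}$, then every point $\q \in \cellA$ satisfies $\dY{\cenX{\cell_{i,j}}}{\q} \leq r - 2\dc$, and since $\dY{\p}{\cenX{\cell_{i,j}}} \leq \dc/2$, the triangle inequality gives $\dY{\p}{\q} < r$; conversely, if $\dY{\p}{\q} \leq r$ for some $\q$ in a cell $\cellA$, then $\dY{\cenX{\cell_{i,j}}}{\q} \leq r + \dc/2 < r + 2\dc$, so $\cellA \in \BallY{\grid}{\cell_{i,j}}$. (Here the hypothesis $r > 8\dc$ ensures in particular that $\cell_{i,j}$ itself belongs to $\ballZY{\grid}{\cell_{i,j}}$, so the within-cell pairs counted by $\alpha_{i,j}$ are genuinely of length $\leq r$.) Hence every pair $\p\q$ with $\p \in \P_{i,j}$ and $\q$ in a cell of $\ballZY{\grid}{\cell_{i,j}}$ is automatically of length $\leq r$ -- and these are exactly what $\alpha_{i,j}$ counts -- while no pair of length $\leq r$ has its far endpoint outside $\BallY{\grid}{\cell_{i,j}}$. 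The only pairs whose status is not settled by the cell classification are those in $\P_{i,j} \aotimes \PA_{i,j}$, whose $\leq r$ count is $\tau_{i,j}$. Summing $\pr_{i,j} = \alpha_{i,j} + \tau_{i,j}$ over all $i,j$ and halving yields $\cardin{\Pairs_{\leq r}}$, which is precisely the algorithm's output; this part uses no randomness. The complementary regime $r \leq 8\dc$ is disposed of separately by a brute-force sweep over a sufficiently fine grid, where each point need only be compared against points in $O(1)$ neighbouring cells, again deterministically and within budget.

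For the running time I would condition on the high-probability event of \lemref{fewpoints-pergrid}, which applies because the chosen $\nG = \ceil{(n/\log n)^{1/3}} = O(\sqrt{n}/\log n)$; under this event every cell holds $O(n/\nG^2)$ points, and therefore $\cardin{\PA_{i,j}} = O(n/\nG)$ by \lemref{p:a:i:j}. Building the grid, bucketing the points, and precomputing the per-row prefix sums costs $O(n + \nG^2)$; all the $\alpha_{i,j}$ are then computed in $O(\nG)$ time apiece (a total of $O(\nG^3) = O(n/\log n)$) and all the $\PA_{i,j}$ in $O(n/\nG)$ time apiece (a total of $O(n\nG)$). The dominant cost is computing the $\tau_{i,j}$ via \lemref{count:p:q} applied to the small set $\P_{i,j}$ (size $O(n/\nG^2)$) and the larger set $\PA_{i,j}$ (size $O(n/\nG)$) -- note that $\cardin{\P_{i,j}} < \cardin{\PA_{i,j}}$ holds asymptotically, since $O(n/\nG^2) = O((n\log n)^{1/3})$ is smaller than $O(n/\nG) = O((n^2/\log n)^{1/3})$ -- which costs $O\bigl((n/\nG^2)^2 + (n/\nG)\log n\bigr)$ per cell, hence $O(n^2/\nG^2 + n\nG\log n)$ over all $\nG^2$ cells. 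The final summation adds only $O(\nG^2)$. Balancing the two dominant terms, $n^2/\nG^2 = n\nG\log n$, gives $\nG^3 = n/\log n$, i.e. the chosen value of $\nG$, and a total of $O(n^{4/3}\log^{2/3}n)$, with the failure probability inherited solely from \lemref{fewpoints-pergrid}.

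The routine parts here are the triangle-inequality bookkeeping and the recurrence-free cost accounting; the one genuinely load-bearing point -- and the ``new natural justification'' promised in the introduction -- is that the exponent $4/3$ is forced by a single balance: the $\Theta((n/\nG^2)^2)$ cost of the brute-force arrangement of radius-$r$ disks inside one tiny cell, summed to $\Theta(n^2/\nG^2)$, against the $\Theta(n\nG)$ cost contributed by the $O(\nG)$-cell rings, which equalize precisely when $\nG \approx n^{1/3}$. I do not expect a real obstacle beyond keeping the correctness argument cleanly separate from the timing analysis, so that the ``always correct'' claim does not inadvertently absorb the randomness used only to bound the per-cell point counts.
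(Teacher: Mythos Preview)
Your proposal is correct and follows essentially the same approach as the paper: the paper's ``proof'' is precisely the running-time accounting you reproduce (conditioning on \lemref{fewpoints-pergrid}, summing the per-cell costs from \lemref{p:a:i:j} and \lemref{count:p:q}, and balancing $n^2/\nG^2$ against $n\nG\log n$), with correctness left implicit in the algorithm description. Your triangle-inequality verification of $\pr_{i,j} = \alpha_{i,j} + \tau_{i,j}$ and your explicit check that $|\P_{i,j}| < |\PA_{i,j}|$ (needed for the hypothesis of \lemref{count:p:q}) actually fill in details the paper glosses over.
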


    \remove{%
       For example, plugging $\tau = 100\sqrt{c_d }n\log n = \tldO(n)$
       into \lemref{kutin-mcdiarmid} yields (for sufficiently large
       $c_d$):
    \[
        \Prob{\cardin{f_r(\P)-\Ex{f_r(\P)}} > 100\sqrt{c_dd}n\log n}
        \leq%
        2\bigl(\exp{\bigl(-\frac{10000c_d dn^2 \log(n)^2
           }{2dn(2\sqrt{c_d n \log n}+1)^2}\bigr)}+ \frac{dn^2}{dn^3}
        \bigr) \leq \frac{4}{n}.
    \]}

\section{Conclusions}
To get \Boruvka's algorithm to run in $O(n)$ time for \MST, we had to
restrict its growth phase in each recursive call. This feels unnatural
in many ways since it is intentionally slowing down the algorithm's
progress, but is necessary for a complete analysis. It remains open
whether there is a method of showing \Boruvka algorithm takes linear
time in three or higher dimensions on random points. One possible
direction would be to show that the average degree of the connected
components in $\GCY{\P}{\vlr}$, see \defref{g:c:y}, increases (for
$d\geq 3$) extremely slowly compared to the halving of connected
components. This is an observation the authors noted in numerical
simulations, yet were unable to prove. See
\figref{degree_distribution_boruvka}. If the average degree increase
in every round of \Boruvka's algorithm can be bounded to a
multiplicative constant $\xi < 2$ in each round then that would imply
that \Boruvka's algorithm runs in linear time.

\SoCGVer{%
   \bibliographystyle{plain} %
   \bibliography{rand_mst} %
}

\NotSoCGVer{%
   \BibTexMode{%
      \bibliographystyle{alpha} \bibliography{rand_mst} }%
   \BibLatexMode{\printbibliography} }

\appendix

\AppendixOfProofs

\end{document}